\newtheorem{Definition}{Definition}
\newtheorem{Proposition}{Proposition}
\newtheorem{Theorem}{Theorem}
\newtheorem{Lemma}{Lemma}
\newtheorem{Remark}{Remark}
\title{New data structure for univariate polynomial approximation and applications to
root isolation, numerical multipoint evaluation, and other problems}
\author{Guillaume Moroz}
\date{%
  Université de Lorraine, CNRS, Inria, LORIA, F-54000 Nancy, France\\
  guillaume.moroz@inria.fr\\[2ex]
  November 17, 2021 }
\begin{document}
\maketitle

\begin{abstract}
  We present a new data structure to approximate accurately and
  efficiently a polynomial $f$ of degree $d$ given as a list of
  coefficients $f_i$.  Its properties allow us to improve the state-of-the-art
  bounds on the bit complexity for the problems of root isolation and
  approximate multipoint evaluation. This data structure also leads to a
  new geometric criterion to detect ill-conditioned polynomials,
  implying notably that the standard condition number of the zeros of a
  polynomial is at least exponential in the number of roots of modulus
  less than $1/2$ or greater than $2$.

  Given a polynomial $f$ of degree $d$ with $\|f\|_1 = \sum |f_i| \leq 2^\tau$ for
  $\tau \geq 1$, isolating all its complex roots or evaluating it at $d$
  points can be done with a quasi-linear number of arithmetic
  operations. However, considering the bit complexity, the
  state-of-the-art algorithms require at least $d^{3/2}$ bit operations
  even for well-conditioned polynomials and when the accuracy required
  is low.  Given a positive integer $m$, we can compute our new data
  structure and evaluate $f$ at $d$ points in the unit disk with an
  absolute error less than $2^{-m}$ in $\widetilde O(d(\tau+m))$ bit
  operations, where $\widetilde O(\cdot)$ means that we omit logarithmic
  factors.  We also show that if $\kappa$ is the absolute condition
  number of the zeros of $f$, then we can isolate all the roots of $f$
  in $\widetilde O(d(\tau + \log \kappa))$ bit operations. Moreover, our
  algorithms are simple to implement. For approximating the complex
  roots of a polynomial, we implemented a small prototype in
  \verb|Python/NumPy| that is an order of magnitude faster than the
  state-of-the-art solver \verb/MPSolve/ for high degree polynomials
  with random coefficients.
\end{abstract}

Keywords: Polynomial evaluation, Complex root finding, Condition number

\section{Introduction}
% TODO: say that precision required is always greater than degree for
% root-finding or multipoint evaluation.
% 
% TODO: observe that in the disk of radius one half, things are easier, so
% reduce to this case often. (Graeffe iteration, Ritzmann, van der Hoeven,
% ...).
% 
% TODO: example of Graeffe iteration : either coefficient growth, or
% alternative with quadratic number of operations.
% 
% TODO: as we will see Graeffe iteration without normalization produces ill-conditioned
% polynomials, so not necessarily the best approach from a numerical point
% of view.
% 
% TODO: start with the bottleneck problem of multipoint evaluation with error of constant size
% in enhance form.
% 
% 
% TODO: say that there are many workaround in the literature
% 
% TODO: say that we present a rootfinding algorithm but that our data
% structure can be used and improve many existing approaches.
% 
% TODO: say that it has application to multivariate case, homotopy
% continuation where evaluation is a bottleneck
% 
% TODO: say it opens the door to a new point of view on polynomial system
% solving. maybe

One of the fundamental problem in computer algebra is the evaluation of
polynomials. Since 1972, it is known that evaluating a univariate polynomial of
degree $d$ on $d$ points can be done in a quasi-linear number of
arithmetic operations \cite{Fstoc72}. Unfortunately, this bound
doesn't hold if we consider the bit complexity, where the arithmetic
operations performed with a precision of $m$ bits costs $\widetilde O(m)$ bit
operations. If we want to evaluate approximatively a polynomial on $d$ points up to a constant absolute error, a direct application of
Fiduccia algorithm leads to a bit-complexity bound in $\widetilde O(d^2)$ bit
operations, and a more sophisticated algorithm provides a bound in
$\widetilde O(d^{3/2})$ \cite{Hrr08}.  For almost 50 years, the
following problem has remained open.

% the precision
% required for the arithmetic
% operations performed by the algorithm of Fiduccia may be linear in the
% degree. Thus, this algorithm is quadratic in the number of
% bit operations. This problem becomes a bottleneck in other problems of
% computer algebra. For example, to isolate the
% roots of a polynomial, state-of-the-art algorithms check that the disk are isolating by evaluating a
% polynomial of degree linear in $d$ on the approximated roots computed
% (\cite[\S 2.2.3]{}, Melhorn, \cite[], Yap)
% TODO

% handling polynomials numerically with finite precision difficult.

\begin{quote}
  Given a polynomial $f$ of degree $d$ with coefficients of constant
  size, and $d$ complex points $x_k$ in the unit disk, is it possible to
  compute all the $f(x_k)$ up to a constant absolute error with a number of
  bit operations quasi-linear in $d$ ?
\end{quote}

%to find an algorithm quasi-linear in $d$ that returns the
%numerical evaluation of a polynomial $f$ of degree $d$ on $d$ points
%$x_k$ in the unit disk. 
Nevertheless, the evaluation of polynomials on
multiple points is used in many areas of computer science, such as 
polynomial system solving with the Newton method
\cite{Sbams81,HSSim01,Dbook06,Bbook13, BASmc16}, homotopy continuation
\cite{CSacm99,CKMWjc08,BPfocm11,Lfocm17,Bbook13} or subdivision
algorithms \cite{Nbook90, Kbook96, MKCbook09}, visualisation of
algebraic surfaces through raytracing
or mesh computation \cite{Wbook13}, among others.
Speeding up the numerical evaluation of polynomials may lead to
an effortless practical improvement for many existing algorithms.

We introduce in Sections~\ref{sec:introdatastructure}
and~\ref{sec:datastructure} a new data structure, that allows us to finally
solve this problem. It approximates the input polynomial by a
piecewise polynomial on a carefully chosen domain
that depends solely on the degree of the input polynomial, and the
required precision. The fact that the domain is fixed and not adaptive
makes it easy to implement.

Moreover, we show that our new data structure improves not only the
bound for the numeric multipoint evaluation problem
(Sections~\ref{sec:introevaluation} and~\ref{sec:evaluation}), but also the bound
for the root isolation problem (Sections~\ref{sec:introrootfinding}
and~\ref{sec:rootfinding}), and the lower bound on the condition
number of polynomials (Sections~\ref{sec:introcondition}
and~\ref{sec:condition}). We expect that our approach will lead further
improvements for other related problems, notably multivariate polynomial
evaluation and polynomial system solving. As a proof of concept, we also
implemented the root isolation algorithm presented in this article. Even
though our implementation is less than $100$ lines of code written in
\verb/Python/, it is an order of magnitude faster than the
state-of-the-art optimized implementation \verb/MPSolve/ for high degree
polynomials with random coefficients (Section~\ref{sec:introexperiment}
and source code in Appendix~\ref{apx:source}).

% -----
% 
% Evaluating and finding the roots of polynomials are two very old
% problems that have been extensively studied. Yet
% 
% TODO: mention complexity quasi-linear in $d$ for the number of
% arithmetic operations, but quadratic in $d$ for bit operations, either
% because of precision in omega d or because it requires $d^2$
% evaluations.
% 
% TODO: say that for root finding, this is the first improvement on the
% bit complexity bound since 2002.
% 
% TODO: say that it give a better comprehension of the link between the
% roots and the condition number of a polynomial
% 
% We will show how to
% improve both problems using a new data structure to approximate a
% complex polynomial on the unit disk. As we will see, extending the data
% structure to handle the evaluation in the full complex plane can be done
% easily using the transform $g(x) = x^d f(1/x)$ where $d$ is the degree of
% $f$.
% 
% Blabla model of computation (finite field, exact integers and rationals,
% floating points), and define bit-operations.
% 
% Define absolute error.
% 
% TODO: mention great experiment timings

\subsection{The data structure}
\label{sec:introdatastructure}

%TODO: mention uniform annuli leading to power 3/2

Given a polynomial $f$ of degree $d$ and an integer $m>1$, we will
introduce in Definition~\ref{def:approximation} the so-called an
\emph{$m$-hyperbolic approximation} of $f$, which can be seen as a
piecewise approximation of $f$ by polynomials of degree $m-1$. The key
that will allow us to improve the state-of-the-art complexity bounds of
several classical problems related to univariate complex polynomials is
the hyperbolic layout used to compute this piecewise approximation. We
first define this layout so-called \emph{hyperbolic covering},
illustrated in Figure~\ref{fig:hyperbolic}.
%Roughly, it is a union of
%disks covering 
%Notably, each ring between the circles centered at $0$ and of
%radius
Roughly, a hyperbolic covering is a set of disks of radius exponentially
smaller near the unit circle, and such that their union contains the
unit disk.
% for an integer $N$, we can partition the unit disk in $N$ concentric rings $R_n$ of width
% exponentially smaller near the unit circle. A $N$-hyperbolic covering is
% a set of disks such that each ring $R_n$ is contained in a union of
% disks of width proportional to the width of $R_n$. More precisely, for
% $n < N$ letting
% $r_n = 1 - \frac 1 {2^n}$ and $r_{n+1} = 1 - \frac 1 {2^{n+1}}$,  we can see that the sequence of radii $r_n$ 

% Notably, letting $r_n = 1-\frac 1 {2^n}$ and $r_{n+1} = 1 - \frac 1
% {2^n+1}$, each ring $D(0,r_{n+1})\setminus D(0,r_n)$ is contained in a
% union of disks radius proportional to the width of the ring
% $r_{n+1}-r_n$.

\begin{Definition}
  \label{def:covering}
  Given a positive integer $N$, an \emph{$N$-hyperbolic covering} of the unit
  disk is the set of disks of centers $\gamma_ne^{i2\pi \frac k {K_n}}$
  and radii $\rho_n$ for $0 \leq n < N$ and $0 \leq k < K_n$, where
  $\gamma_n, \rho_n$ and $K_n$ are defined by:
  \begin{align*}
    r_n & = \begin{cases} 1 - \frac 1 {2^n} & \text{if $0 \leq n < N$}\\
                      1                         & \text{if $n=N$}
            \end{cases}\\
    \gamma_n &= \frac 1 2 (r_n+r_{n+1})\\
    \rho_n &= \frac 3 4 (r_{n+1}-r_n)\\
    K_n &= \begin{cases} 4 & \text{if $n = 0$}\\
                 \lceil \frac{3\pi}{\sqrt 5} \frac{r_{n+1}}{\rho_n} \rceil & \text{otherwise}
            \end{cases}
  \end{align*}
\end{Definition}
\begin{Remark}
  \label{rem:explicit}
  We can also write explicitly the corresponding sequences $(\gamma_n)_{n=0}^{N-1}$
  and $(r_n)_{n=0}^{N-1}$:
  \begin{align*}
    \gamma_n &= \begin{cases} 1 - \frac 3 4 \frac 1 {2^n} & \text{if $0 \leq n \leq N-2$}\\
                         1 - \frac 1 2 \frac 1 {2^n} & \text{if $n=N-1$}
           \end{cases}\\
    \rho_n &= \begin{cases} \frac 3 8 \frac 1 {2^n} & \text{if $0 \leq n \leq N-2$}\\
                         \frac 3 4 \frac 1 {2^n} & \text{if $n=N-1$}\\
           \end{cases}
  \end{align*}
  
  We also have explicitly $2^{n+1} \leq K_n \leq 2^{n+4}$, using
  $\frac 1 2 \leq r_{n+1} \leq 1$ and $\frac 3 8 \frac 1 {2^n} \leq \rho_n
  \leq \frac 3 4 \frac 1 {2^n}$.
\end{Remark}

We will see in Lemma~\ref{lem:disks} that for all integers $N \geq 1$,
the union of the disks of a $N$-hyperbolic covering contains the
unit disk.% As an illustration, Figure~\ref{fig:hyperbolic} shows a
%$5$-hyperbolic covering.
%The name hyperbolic comes from the fact that
%the disks are roughly uniformly spread according to a hyperbolic metric.
We can now define the $m$-hyperbolic approximation of a
polynomial $f$, that can be seen as piecewise approximation of $f$ by
polynomials of degree lower than $m$.

%TODO: redo figure with new $K_n$.

\begin{figure}
  \centering
  \includegraphics[width=\textwidth]{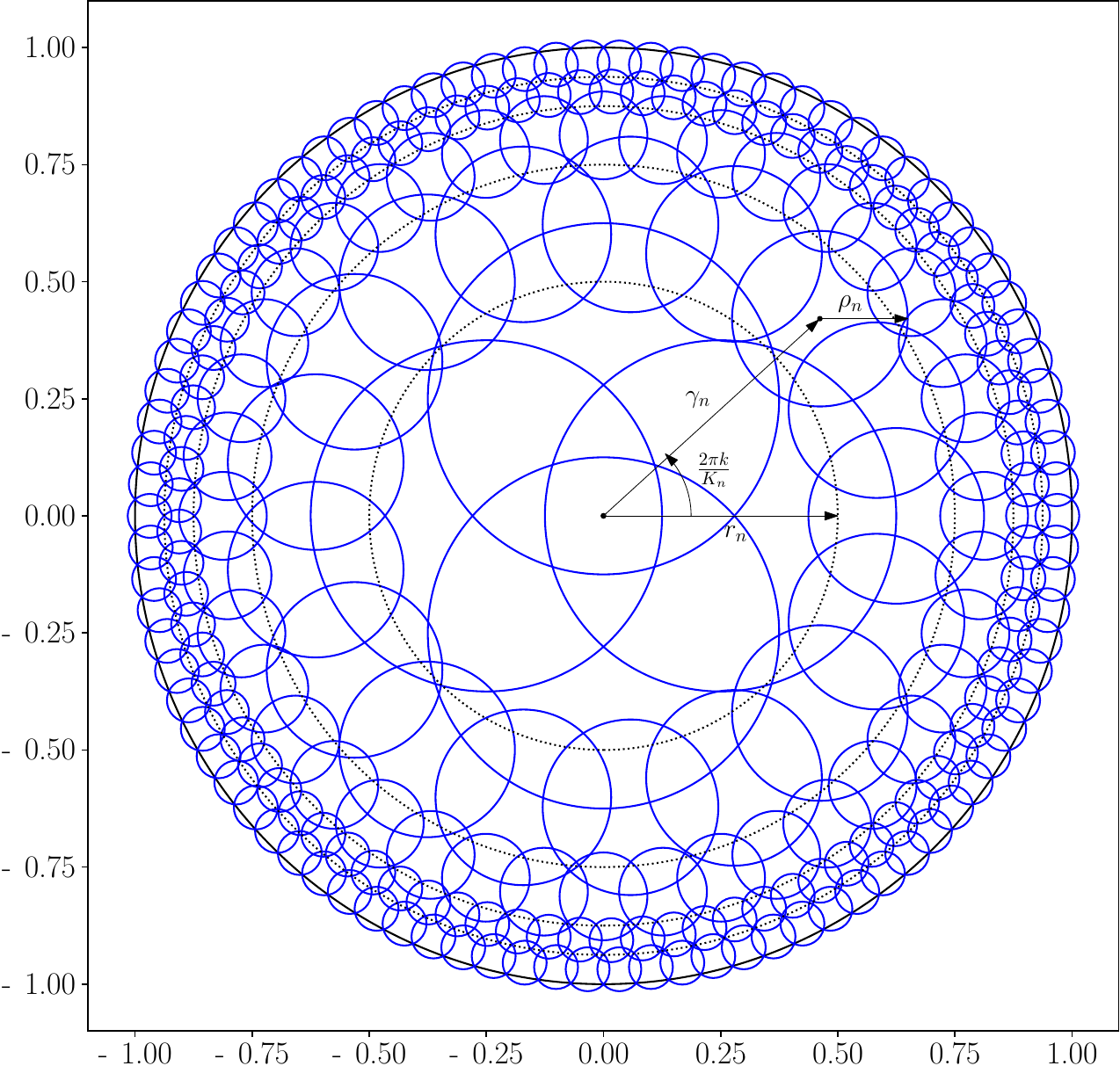}
  \caption{$5$-hyperbolic covering}
  \label{fig:hyperbolic}
\end{figure}

\begin{Definition}
  \label{def:approximation}
  Given a polynomial $f$ of degree $d$ with $\|f\|_1 \leq 2^\tau$, and an integer $m>1$, an
  \emph{$m$-hyperbolic approximation} of $f$ is a finite set of pairs $(g, a)$,
  where $g$ is a polynomial of degree $\widetilde m = \min(m-1,d)$, with coefficients of
  bit size $O(\tau+m)$, and $a$ is an affine transform, such that:
  %bit size $O(m)$, and $a(X) = (\gamma + \rho X)e^{i2\pi\frac k/M}$ is an affine transform, such that:
  %for all
  %$x$ in the unit disk $D(0,1)$:
  
  %TODO: introduce $\widetilde m$ for correct definition.
  \begin{itemize}
    \item the set of disks $a(D(0,1))$ is the $N$-hyperbolic covering
      with $N = \lceil\log_2\left(\frac {3ed} {\widetilde m}\right)\rceil$ %$D(0,1) \subset \cup_k a_k(D(0,1))$
    %\item there exists $k$ such that $x \in a_k\left(D(0,1)\right)$
    %\item $|f(x) - g_k(a_k^{-1}(x))| < 4\|f\|_1 2^{-m}$
    \item $\|f \circ a - g\|_1 \leq 3\|f\|_1 2^{-m}$
  \end{itemize} 
  %and for all $k$, the polynomials $g_k$ and $f(a_k(X)) = \sum_{\ell=0}^d
  %c_{k,\ell} X^\ell$ satisfy:
  %\begin{itemize}
  %  \item $\|f(a_k(X)) - g_k(X))\|_1 \leq 3\|f\|_1 2^{-m}$
  %  %\item $|c_{k,\ell}|\leq \|f\|_1/2^\ell$ for all $\ell \geq m$
  %\end{itemize}
  %TODO: move last item to a separate lemma.
\end{Definition}

Since we constrain the polynomials $g(X)$ approximating $f(a(X))$ to
have a degree $m-1$, that can
be significantly smaller than $d$, it is
not obvious that an approximation satisfying the conditions defined
above always exists. The following theorem proves that it always exists,
and furthermore that it can be computed in quasi-linear time with
respect to the degree of $f$.

\begin{Theorem}
  \label{thm:datastructure}
  Let $f$ be a polynomial of degree $d$ with $\|f\|_1\leq 2^\tau$, and $m>1$ be an integer. 
  Algorithm~\ref{alg:datastructure} computes an $m$-hyperbolic
  approximation of $f$, denoted by $H_{d,m}(f)$, in $\widetilde O(d(m+\tau))$ bit operations.
  %TODO: check precise bound in term of $m$
  %We can compute the data structure in quasi linear time in $d$.
\end{Theorem}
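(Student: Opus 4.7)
The plan is to treat each level of the hyperbolic covering independently. For each $n \in \{0,\dots,N-1\}$, set $\omega := e^{2\pi i/K_n}$, $\tilde f_n(X) := f(\gamma_n X)$, and $\sigma := \rho_n/\gamma_n$. The local polynomial $g_k$ attached to the disk of center $\gamma_n\omega^k$ and radius $\rho_n$ agrees, up to a Taylor shift by $\omega^k$ and a degreewise rescaling by $\sigma^j$, with the residue $\tilde f_n(X) \bmod (X-\omega^k)^m$. By the Chinese remainder theorem, the $K_n$ residues for $k=0,\ldots,K_n-1$ are jointly encoded in the single polynomial $P_n(X) := \tilde f_n(X) \bmod (X^{K_n}-1)^m$ of degree less than $mK_n$.

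The algorithm therefore proceeds in three steps per level. First, compute $P_n$ by fast polynomial division against $(X^{K_n}-1)^m$ in $\widetilde O(d)$ arithmetic operations. Second, extract the $K_n$ residues $P_n \bmod (X-\omega^k)^m$ by fast multipoint Hermite evaluation along the subproduct tree of $(X^{K_n}-1)^m = \prod_k(X-\omega^k)^m$ in $\widetilde O(mK_n)$ operations. Third, Taylor-shift each residue to the origin and rescale coefficient $j$ by $\sigma^j$ to form $g_k$, at cost $\widetilde O(m)$ per disk. Summing over $n$ and using $\sum_n K_n = O(2^N) = O(d/\widetilde m)$ together with $N = O(\log(d/\widetilde m))$, the total arithmetic cost is $\widetilde O(d)$. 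Performing every operation at a working precision of $O(\tau+m+\log d)$ then yields the bit complexity $\widetilde O(d(\tau+m))$.

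For the error bound I estimate $\sum_{i \geq m}|[X^i]f(a_k(X))|$ via Cauchy's inequality on the circle $|X|=R$. On the interior levels $n \leq N-2$, the hyperbolic geometry gives $\gamma_n + 2\rho_n = 1$, so taking $R=2$ forces $|f(a_k(X))|\leq \|f\|_1$ on that circle and the tail is bounded by $2\|f\|_1 2^{-m}$. On the boundary level $n=N-1$, the image $a_k(\{|X|=R\})$ is a circle of radius $1 + (3R/2-1)2^{-N}$, producing a factor of the form $(1+O(R\cdot 2^{-N}))^d \leq \exp(O(R)\cdot \widetilde m/(3e))$ by the defining inequality $2^N \geq 3ed/\widetilde m$; choosing $R=2e$ balances this against $R^{-m}=(2e)^{-m}$ and keeps the tail below $3\|f\|_1 2^{-m}$. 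The main obstacle is this calibration in the boundary case, together with tracking the rounding errors through the fast division and multipoint Hermite evaluation steps so that they remain below $\|f\|_1 2^{-m}$ at the prescribed working precision $O(\tau+m+\log d)$.
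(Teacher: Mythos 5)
Your proposal takes a genuinely different algorithmic route, and while the reduction is conceptually correct, the part you flag at the end as ``the main obstacle'' is not a routine calculation---it is exactly the difficulty the paper's algorithm is designed to avoid, so the proof is not complete.

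What you do differently: the paper's Algorithm~1 never forms a subproduct tree. It reorganizes the coefficients of $f$ by residue class mod $K_n$ (part~B.2), applies Ritzmann's fast power-series composition with $(\gamma_n+\rho_nX)^{K_n}\bmod X^{\widetilde m}$ (parts B.3--B.4, Proposition~4), and finishes with a DFT of size $K_n$ (part~B.5, Proposition~3). You instead compute $\tilde f_n\bmod(X^{K_n}-1)^m$ by fast division and then run a subproduct-tree Hermite multipoint evaluation against $\prod_k(X-\omega^k)^m$. Your CRT identification of $f(a_{n,k}(X))\bmod X^m$ with a shifted and rescaled residue $\tilde f_n\bmod(X-\omega^k)^m$ is right (modulo a cosmetic slip: the rescaling factor is $(\omega^k\sigma)^j$, not $\sigma^j$), your Cauchy-estimate bound on the truncation tail is correct and gives $O(\|f\|_1 2^{-m})$, and the arithmetic operation count $\widetilde O(d)$ per level is right.

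The gap is the numerical analysis of the division and subproduct-tree steps at precision $O(\tau+m+\log d)$. Two concrete problems. First, you work with the full $\tilde f_n=f(\gamma_nX)$ of degree $d$ at every level, whereas the paper truncates $f$ to degree $d_n=O(mK_n)$ before any processing; without this truncation, for small $n$ (where $mK_n\ll d$) the remainder $P_n=\tilde f_n\bmod(X^{K_n}-1)^m$ can have monomial-basis coefficients whose bit size grows like $d/K_n$, forcing precision $\Omega(d)$ and breaking the bound. Second, a subproduct-tree Hermite evaluation at the nodes $(X-\omega^k)^m$ is only numerically tame if $K_n$ is a power of two and the leaves are taken in bit-reversal order, so that every internal node has the form $(X^{2^j}-\omega^\ell)^m$ with coefficients bounded by $2^m$; the paper's $K_n=\lceil\frac{3\pi}{\sqrt5}\frac{r_{n+1}}{\rho_n}\rceil$ is not a power of two, so you would need to modify Definition~1. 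Even after both fixes you still must track the error accumulation down the tree, which the paper sidesteps entirely by appealing to the existing bit-complexity bounds for composition (Proposition~4) and DFT (Proposition~3). Until (a) truncation, (b) the power-of-two covering, and (c) the tree-level error bound are supplied, the claimed $\widetilde O(d(\tau+m))$ bit complexity does not follow from your argument.
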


Remark also that the maximal precision required for the arithmetic
operations in Algorithm~\ref{alg:datastructure} is in $O(\tau + m+\log
d)$, which makes it suitable for an implementation with machine precision arithmetic.

\paragraph{Main idea of the proof of Theorem~\ref{thm:datastructure}.}
Denoting the center and the radius of a disk in a
$m$-hyperbolic approximation by $\gamma_{n,k}=\gamma_{n}e^{i2\pi k
/K_n}$ and
$\rho_{n}$ , we need to prove that it is possible to
compute a polynomial of degree $m-1$ that satisfies the bounds of
Theorem~\ref{thm:datastructure}. This comes from the fact that using the
formula of Definition~\ref{def:covering} and Remark~\ref{rem:explicit},
we can check that the coefficients of degree $\ell$ of the polynomial
$f(\gamma_{n,k} + \rho_n X)$ have a modulus less than $\|f\|_1/2^\ell$
for all $\ell \geq m$. Then it remains to prove that we can approximate the first $m$
coefficients of all the $f(\gamma_{n,k}+\rho_n X)$ in $\widetilde O(d(m+\tau))$ bit
operations. For that, remark that $0 \leq n < N = O(\log d)$. Thus, it
is sufficient to prove that for a fixed $n$, we can compute 
$f(\gamma_{n,k} + \rho_n X) \mod X^m$ for all $k$ in $\widetilde
O(d(m+\tau))$ bit operations. This can be done by using a combination of fast
numerical composition of series (Proposition~\ref{pro:composition}), and
numerical fast Fourier transform (Proposition~\ref{pro:fft}). More
details can be found in Section~\ref{sec:datastructure}.

% We will denote by $H_{d,m}(f)$ the $m$-hyperbolic approximation returned
% by Algorithm~\ref{alg:datastructure}.

%TODO: insert figure with N=5, also here or in the rootfinding section, the data structure
%extended on the full complex plane.

% TODO: sketch the main idea that on the disk from a hyperbolic covering,
% the polynomial $f(a_k(X))$ has exponentially decreasing coefficients.
% Also that the covering being symmetric by rotation, we can use fast
% Fourier transform to compute the truncated Taylor shifts.
% 
% TODO: mention that the algo for best asymptotic bound can be slightly
% changed for best practical efficiency. In particular fast multipoint
% evaluation is not necessary when the desired accuracy is logarithmic in
% the degree.
% 
% Refer to algo in detailed section.
% 
% Corresponding theorem.
% 
% Examples for the real case.

% \paragraph{Weyl norm}
% 
% TODO: Say that other norms are possible, beyond the scope of this
% article.
% 
% Small note to latest section

% TODO: sketch of our techniques and why it works : hyperbolic annuli
% instead of uniform ones as in van der hoeven.

\subsection{Applications}

%TODO: main idea to reduce problem to several problems of degree the
%precision required.

Based on our new data structure, we describe three independent
results that improve state-of-the-art solutions to long-standing
problems. First we improve the complexity for evaluating numerically polynomials on
multiple points. Then we improve the complexity of finding the roots of
well-conditioned polynomials. Finally, we present a new lower bound on
the condition number of the zeros of a polynomial, based on simple
geometric properties of the distribution of its roots.

\subsubsection{Numerical multipoint evaluation}
\label{sec:introevaluation}

%TODO: add algo

In the literature \cite{Fstoc72, AHbook74}, \cite[Chapter~10]{GGbook13},
a fast multipoint evaluation algorithm has been
designed to evaluate $d$ points of a degree $d$ polynomial with a number
of arithmetic operations quasi-linear in $d$. However, in the
case of numerical evaluation with precision $m$ after the binary point,
this algorithm uses a number of bit operations quadratic in $d$, even
for a constant $m$. This is due notably to the fact that this algorithm
introduces intermediate polynomials with coefficients that may have a
bit-size linear in $d$. A careful analysis of the bit-complexity of this
algorithm for numerical evaluation (\cite[Lemma~11]{Hrr08}) shows that
this algorithm uses $\widetilde O(d(m+d))$ bit operations.

Specific sets of
points were found where the problem of numeric multipoint evaluation can be solved in a quasi-linear
time in $d$.
The most famous one is the set of roots of unity. Computing a numerical
approximation of the $f(x_k)$ for $x_k = e^{i2\pi\frac k {d+1}}$ can be
done in quasi-linear time \cite{Sca82} in $d$.  Another family of points
was used by Ritzmann for the problem of fast numeric composition of
series \cite[Proposition 3.4]{Rtcs86}. He showed that if the modulus of
the $x_k$ is lower than $\frac 1 {3d}$, then all the $f(x_k)$ can be
approximated numerically in a quasi-linear number of bit operations.
%That was a key part of a fast numeric algorithm for the composition of
%series.

%In a more recent work (\cite[\S 3.2]{}, Joris), van der
%Hoeven presented several
%other sequence of points that can be evaluated in quasi-linear time.

In a more recent work \cite[\S 3.2]{Hrr08}, van der
Hoeven gave the first sub-quadratic bound to evaluate $d$ points
with modulus less than $1$. More
precisely, he showed that if $\|f\|_1<1$ it is possible to evaluate the $f(x_k)$ with an
error less than $2^{-m}$ with $\widetilde O(d^{3/2}m^{3/2})$ bit
operations. He achieved this bound by subdividing the unit disk in
annuli of constant width. A drawback of this bound is that it increases the
exponent on the required precision $m$.  By contrast, in our data structure,
we subdivide the unit disk with disks of width exponentially smaller
near the unit circle. This approach allows us to finally derive an
algorithm that is both quasi-linear in $d$ and in $m$ for the numerical
multipoint evaluation problem.

\begin{Theorem}
  \label{thm:evaluation}
  Given a polynomial $f$ of degree $d$ with $\|f\|_1 \leq 2^{\tau}$,
  Algorithm~\ref{alg:evaluation} returns the evaluation of $f$ on $d$ points in
  the unit disk with an absolute error less than $\|f\|_12^{-m}$
  in $\widetilde O(d(\tau+m))$ bit operations. 
\end{Theorem}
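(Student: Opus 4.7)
The strategy is to use Theorem~\ref{thm:datastructure} as a preprocessing reduction from a degree-$d$ evaluation to many low-degree ones. I would start by computing $H_{d,m'}(f)$ with $m'=m+\lceil\log_2 6\rceil$ in $\widetilde O(d(\tau+m))$ bit operations. This yields $O(d/m)$ pairs $(g_j,a_j)$ with $\deg g_j<m$, coefficients of $g_j$ of bit size $O(\tau+m)$, $\|f\circ a_j-g_j\|_1 \leq 3\|f\|_1 2^{-m'}$, and, by Lemma~\ref{lem:disks}, the disks $a_j(\overline{D(0,1)})$ covering $\overline{D(0,1)}$.

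Next I would bucket the $d$ query points among these disks. Thanks to Remark~\ref{rem:explicit} the annulus index $n$ is read directly from $|x_k|$ and the angular sector from $\arg x_k$, so the bucketing costs $\widetilde O(d)$ bit operations. The approximate value returned at $x_k$ is $g_j(y)$ with $y=a_j^{-1}(x_k)\in\overline{D(0,1)}$. For the error analysis, the approximation bound gives $|f(a_j(y))-g_j(y)|\leq 3\|f\|_1 2^{-m'}\leq \|f\|_1 2^{-m}/2$ for $|y|\leq 1$, and a backward-stable Horner-style evaluation of $g_j$ at precision $O(\tau+m)$ contributes at most $\|f\|_1 2^{-m}/2$ more --- using $\|g_j\|_1 \leq 2^{O(\tau+m)}$ from the coefficient bound --- so the total error stays within the required $\|f\|_1 2^{-m}$.

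The main obstacle is the bit complexity of the evaluation itself: the $O(d/m)$ polynomials $g_j$ must be evaluated at their $k_j$ query points, with $\sum_j k_j=d$, in total $\widetilde O(d(\tau+m))$ bit operations. A direct per-point Horner pass would sum to $\widetilde O(dm(\tau+m))$, losing a factor of $m$. I would close this gap by invoking the same algorithm recursively on each sub-problem $(g_j,\{y_i\})$: applying Theorem~\ref{thm:datastructure} again to $g_j$ cuts the polynomial degree (at the cost of $O(1)$-many new pieces per call), with the approximation parameter augmented by $O(1)$ at each level so that the cumulative approximation error stays within $\|f\|_1 2^{-m}/2$ after $O(\log m)$ levels. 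Each level contributes $\widetilde O(d(\tau+m))$ to the total because the polynomials at that level have combined degree $O(d)$, and the recursion reaches a Horner-friendly base case in $O(\log m)$ levels. Getting the precision accounting and the recursion to close cleanly --- in particular ensuring that the number of levels is only polylogarithmic and that the precision-parameter inflation does not exceed $O(\log d)$ --- is the technically delicate part of the proof.
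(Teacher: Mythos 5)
You set up the reduction correctly (compute $H_{d,m'}(f)$, bucket the query points into the covering disks, bound the approximation error against $3\|f\|_1 2^{-m'}$ plus the numerical error of the inner evaluation), and your bucketing idea — reading off the annulus index from $|x_k|$ and the sector from $\arg x_k$ — is a fine and simpler alternative to the paper's range-searching data structure (Proposition~\ref{pro:disks}). But the final, decisive step is where your argument breaks. You propose to evaluate each low-degree piece $g_j$ by \emph{recursively} applying Theorem~\ref{thm:datastructure} to $g_j$. That recursion cannot make progress: the pieces produced by $H_{d',m'}$ have degree $\min(m'-1,d')$, so the only way to cut the degree of a piece of degree $d'<m$ is to take $m'<d'$, which then degrades the piecewise $\ell_1$ error to $3\|g_j\|_1 2^{-m'}\gg \|f\|_1 2^{-m}$; conversely, keeping $m'$ at $m+O(1)$ to preserve precision leaves the degree untouched, since $\min(m'-1,d')=d'$. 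The single parameter $m'$ controls both the piece degree and the precision, so the two goals you need simultaneously (``augment the approximation parameter'' and ``cut the polynomial degree'') are directly contradictory, and the $O(\log m)$-level recursion you sketch never reaches a base case.

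The paper closes the gap without any recursion, by noticing that the ``slow'' multipoint evaluation of Proposition~\ref{pro:evaluation}, with bit complexity $\widetilde O(d'(m+\tau+d'))$ for a degree-$d'$ polynomial, is already quasi-linear precisely when $d'=O(m)$ — which is exactly the situation after the hyperbolic decomposition. Each $g_j$ has degree $\widetilde m=\min(m+1,d)$ and $\|g_j\|_1\leq 2^{O(\tau+m)}$, so evaluating it at $n_j$ points in batches of size $\widetilde m$ costs $\widetilde O(\lceil n_j/\widetilde m\rceil\cdot \widetilde m\cdot(m+\tau))=\widetilde O((n_j+\widetilde m)(m+\tau))$. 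Summing over the $O(d/\widetilde m)$ pieces with $\sum_j n_j=d$ gives the claimed $\widetilde O(d(\tau+m))$. In short, after the first and only application of Theorem~\ref{thm:datastructure} the degree has already been brought into the regime where the classical tool is optimal, so no further decomposition is needed — and indeed, as your own accounting of the failure mode shows, none is available.
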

\begin{Remark}
  Using Algorithm~\ref{alg:evaluation} on the reverse polynomial
  $X^df(1/X)$, we see that the same complexity bound holds for a
  set of $d$ points in the complex plane, replacing $f$ by the function
  $$\widetilde f(x) = \begin{cases} f(x) & \text{if $x \in D(0,1)$}\\f(x)/x^d &
  \text{otherwise}\end{cases}.$$
\end{Remark}

Our algorithm is particularly well-suited for evaluating polynomials of
high degree with fixed constant precision such as machine precision.
This arises in many applications, notably in polynomial root
approximation. One of the most famous method to approximate a root of a
polynomial is the Newton method. Starting from an initial point $x_0$,
it consists in computing iteratively the Newton map, yielding the
sequence $x_{n+1} = x_n - f(x_n)/f'(x_n)$.  It was shown
that starting from an explicit set of $3.33 d \log^2 d$ points, this
approach is guaranteed to approximate all the roots of a given
polynomial \cite{HSSim01,BASmc16}. In their work, the authors only show experiences with
polynomial given by recursive formula, such that they can be evaluated
with a number of
operations logarithmic in their degree. Our data structure can
improve their approach for dense polynomials given by their list of
coefficients. In Section~\ref{sec:introrootfinding}, we focus on the computation of
disks isolating the roots of $f$, that is the computation of a set of
disks that are pairwise disjoint and that contain a unique root of $f$
each.

\paragraph{Main idea of the proof of Theorem~\ref{thm:evaluation}.}
Once we have computed a $m$-hyperbolic approximation of the input
polynomial $f$, the algorithm is rather straight-forward. For each disk
of the corresponding hyperbolic covering, we can efficiently find all
the input points that it contains, using a geometric data-structure such
as range searching, recalled in Section~\ref{sec:range}. Then, using
the state-of-the-art algorithm for multipoint evaluation
(Proposition~\ref{pro:evaluation}), we can evaluate the corresponding
approximate polynomial $g$ from the hyperbolic approximation on the selected points.
Since $g$ has a degree
lower than $m$, evaluating $g$ on the $x_i$ up to precision $m$ can be
done with an amortized time $\widetilde O(m)$ per point if the number of points is
greater than $m$ and in a total time $\widetilde O(m^2)$ if there are
less than $m$ points. Since the number of
disks in a hyperbolic approximation is in $O(d/m)$, this leads to a
bit complexity in $\widetilde O(dm)$. More details can be found in
Section~\ref{sec:evaluation}.

\subsubsection{Root isolation}
\label{sec:introrootfinding}
% Melhorn bound on root isolation https://doi.org/10.1016/j.jsc.2014.02.001
% Elias bound on the condition number \cite{https://link.springer.com/chapter/10.1007%2F978-3-319-24021-3_16}
% Bound on intersection searching Agarwal in quasi-linear time, section 3.6,  http://dx.doi.org/10.1090/conm/223

% TODO: say Melhorn requires multipoint evaluation on d points to
% guarantee roots in Lemma 8 or step 4
% 
% TODO: mention application of multipoint evaluation
% 
% TODO: add comments in algo
% TODO: describe algo, in particular: polynomials at most degree $d$,
% pairwise distinct boxes can be computed in quasilinear time with
% geometric data structures, Kantorovich theorems for root isolation 
% 
% TODO: say that there is no lower bound for the problem, yet all
% state-of-the art algo use at least $d$ evaluations of the input
% polynomials, or require to manipulate polynomials of degree $d$ with
% coefficients of bitsize $d$.

% TODO: say difficulties come from roots near the unit circle: use of
% Graeffe iteration to put the roots far away from the unit circle. Bit
% complexity quadratic Malajovich Zubelli.

Our data structure also allows us to improve the state-of-the-art
bound on the bit complexity for the problem of isolating all the complex roots of a given
polynomial,  with a bound that is adaptive in the condition number of the
input. The condition number of the zeros of a polynomial measures the
displacement of its roots with respect to a perturbation of its
coefficients (Definition~\ref{def:condition}). For a square-free
polynomial $f$ and a root $\zeta$, we let $\kappa_\zeta =
\frac{\max(1,|\zeta|^d)}{|f'(\zeta)|}$. Then the absolute condition
number of $f$ is $\kappa = \max_{f(\zeta)=0}(\kappa_\zeta)$.
Our goal is to provide for each root an isolating disk, that is a disk
that contains a unique root and doesn't intersect any other isolating
disk.
In our case we consider so-called \emph{projective disks}, that is
either a disk or the inverse of a disk $D$ defined as the
set of points $1/x$ such that $x \in D$ and $x \neq 0$.

% TODO: speak about generalized disks inverse of disks

\begin{Theorem}
  \label{thm:rootfinding}
  Given a square-free polynomial $f$ of degree $d$ with $\|f\|_1 \leq
  2^{\tau}$, with absolute condition number $\kappa$,
  Algorithm~\ref{alg:rootfinding} computes isolating projective disks
  of all the roots of $f$ in $\widetilde O (d (\tau + \log \kappa))$ bit
  operations.
\end{Theorem}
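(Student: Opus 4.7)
The plan is to reduce the global root isolation problem to many local low-degree problems using the hyperbolic approximation of Theorem~\ref{thm:datastructure}, applied both to $f$ and to its reverse $X^d f(1/X)$. Because the union of a hyperbolic cover contains the closed unit disk (Lemma~\ref{lem:disks}) and because reversing sends roots outside the unit disk to roots inside, together the two hyperbolic approximations capture every root of $f$ on the Riemann sphere, and the ``projective disks'' in the output naturally correspond to disks returned from either side of $|x|=1$.

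First I would choose $m = \Theta(\tau + \log \kappa + \log d)$ and compute $H_{d,m}(f)$ and $H_{d,m}(X^d f(1/X))$ using Theorem~\ref{thm:datastructure}, at a total cost of $\widetilde O(d(\tau+m)) = \widetilde O(d(\tau+\log\kappa))$ bit operations. For each pair $(g,a)$ produced, the polynomial $g$ has degree at most $m-1$ and coefficients of bit size $O(\tau+m)$. I would then run a quasi-optimal univariate complex root isolator on each such $g$, restricted to outputting roots that lie in $D(0,1)$ in local coordinates, and pull them back through $a$ to obtain candidate isolating disks in the original coordinates. Since the hyperbolic covering contains $O(d/m)$ disks (Remark~\ref{rem:explicit}), and isolating the roots of a polynomial of degree less than $m$ with coefficient size $O(\tau+m)$ can be done in $\widetilde O(m(\tau+m))$ bit operations with modern algorithms, the total cost of this step is $O(d/m)\cdot \widetilde O(m(\tau+m)) = \widetilde O(d(\tau+m))$, matching the claimed bound.

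Correctness splits into two parts. First, using the definition of $\kappa_\zeta$ one shows that the $2^{-m}$-error in the hyperbolic approximation displaces each root of $f$ (pulled to local coordinates) by at most $O(\kappa \cdot 2^{-m})$, so taking $m$ larger than some constant times $\tau+\log\kappa$ guarantees that the roots of $g$ and the roots of $f\circ a$ are in bijection inside each relevant disk. Second, because the hyperbolic disks overlap and because $g$ may carry spurious roots coming from the truncated tail of the expansion, one applies a Pellet-style soft certification on $f$ itself on each candidate disk, using the cheap multipoint evaluation of Theorem~\ref{thm:evaluation} on a small number of boundary samples; duplicates arising from the overlap are eliminated by keeping the candidate whose centre is closest to the centre of the hyperbolic disk that produced it.

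The main obstacle is this final certification and deduplication step. One must simultaneously prove that every root of $f$ is represented exactly once in the output, that the certified disks are pairwise disjoint, and that the precision needed for certification remains $O(\tau+\log\kappa)$ so that the total cost stays within the $\widetilde O(d(\tau+\log\kappa))$ budget. The most delicate subcase is a root very close to the unit circle, which may be reported from both $H_{d,m}(f)$ and $H_{d,m}(X^d f(1/X))$; this is precisely why the theorem outputs \emph{projective} disks rather than ordinary ones, since the matching condition across $|x|=1$ is naturally expressed in that language.
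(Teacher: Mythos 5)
Your skeleton matches the paper's high-level strategy — hyperbolic approximations of both $f$ and $X^d f(1/X)$, solving each low-degree local piece, certifying candidates, and deduplicating overlaps — but there are three points where the proposal has a genuine gap or takes a route that the paper has to work much harder to make rigorous.

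First, you choose $m = \Theta(\tau + \log\kappa + \log d)$ up front. The algorithm is not given $\kappa$; it only sees $f$. The paper's Algorithm~\ref{alg:rootfinding} starts with $m=1$ and \emph{doubles} $m$ until all $d$ roots have been certified, and the complexity statement comes out of the observation that the geometric doubling is dominated by the last iteration, which happens at $m = O(\tau + \log d + \log\kappa)$ by Lemma~\ref{lem:termination}. Without this adaptive mechanism (or some other scheme for discovering the right precision), the non-adaptive choice of $m$ is not an algorithm as stated, or it would need the Mahler bound and degrade to $\widetilde O(d^2\tau)$.

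Second, "run a quasi-optimal univariate complex root isolator on each $g$" does not quite parse: the local polynomial $g$ is a truncation of $f\circ a$ and is not guaranteed square-free, and indeed its roots may cluster arbitrarily, so root \emph{isolation} on $g$ has no a priori complexity bound in terms of $m$ and $\tau$. The paper instead computes an \emph{approximate factorization} of $g$ (Pan's Proposition~\ref{pro:pan}), which is degree-only in complexity, but Pan requires all roots inside a bounded disk; the coefficient bounds of Lemma~\ref{lem:coefficients} do not control the modulus of the last root of $g$. This is why the paper perturbs $g$ to $h = g + \frac{\|f\|_1}{2^m}X^{2\widetilde m}$ and proves via Lemma~\ref{lem:compact} (Fujiwara) that all roots of $h$ lie in $D(0, e2^{m/\widetilde m})$, which Remark~\ref{rem:pan} then reduces to the unit-disk case. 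Your proposal omits this compactification, and without it the subcall to Pan's algorithm is not justified.

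Third, the certification and deduplication step you describe — a Pellet test run on $f$ itself using the fast multipoint evaluation of Theorem~\ref{thm:evaluation}, then picking the candidate nearest the covering-disk center — is a genuinely different route from the paper, which instead tests a Kantorovich criterion (Lemma~\ref{lem:distinct}) evaluated on the low-degree proxy $g$ and $g'$, using the explicit error bounds $|f - g|$ and $|f' - g'|$ of Lemma~\ref{lem:derivative}, and removes duplicates via rectangle-rectangle range searching (Proposition~\ref{pro:rectangles}) exploiting the $O(1)$ overlap number of the hyperbolic covering. Your Pellet-on-$f$ idea might be made to work, but you would need to argue that (i) a bounded number of $f$-evaluations per candidate disk suffices for a sound count, (ii) the required precision for these evaluations stays $O(\tau + \log\kappa)$, and (iii) the ``closest to the covering-disk center'' heuristic actually returns each root exactly once — none of which are obvious, and the last point in particular is precisely what the paper's Lemma~\ref{lem:termination} and the Newton-basin uniqueness (Proposition~\ref{pro:unicity}) are built to guarantee.
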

%\begin{Remark}
%  We can isolate the roots of $f$ outside the unit disk within the same
%  bit-complexity using Algorithm~\ref{alg:rootfinding} on the reverse
%  polynomial $X^df(1/X)$, and taking the inverse of the isolating disks.
%  Removing the duplicate solutions using range searching algorithms (see
%  Section~\ref{sec:range}, this allows us to isolate all the roots of
%  $f$ in $\widetilde O(d(\tau+\log \kappa))$ bit operations.
%\end{Remark}
% \begin{Remark}
%   The required precision for the arithmetic operations is in $O(\log
%   \kappa+\log d)$.
% \end{Remark}
\begin{Remark}
  \label{rem:coefficients}
  If all the coefficients of $f$ are real numbers, then
  Algorithm~\ref{alg:rootfinding} can be slightly modified to return all
  the real roots of $f$ with the same bit-complexity, by
  selecting the isolating projective disks that intersects the real
  axis.
\end{Remark}

%Rump https://www.sciencedirect.com/science/article/pii/S0377042703003819?via%3Dihub

In recent works on complex root isolation \cite{MSWjsc15,BSSYjsc18}, the
best adaptive complexity bound, rewritten with our
notation, is in $\widetilde O\left(d \sum_{f(\zeta)=0} [\tau +
\max(1,\log(\kappa_\zeta)) + \max(1,\log(\sigma_\zeta^{-1}))]\right)$,
where $\sigma_\zeta = \min_{f(\eta)=0, \eta\neq\zeta}(|\zeta-\eta|)$. Our bound removes
the dependency in $\sigma_\zeta$, and replaces the sum by a max, which
improves the state-of-the-art bounds by a factor $d$ if the condition
numbers of the roots are logarithmic in $d$ or evenly distributed. In adaptive algorithms
that compute isolating disks, % the bottleneck step is the evaluation of
a criterion is used for early termination. It usually checks if a disk or a
rectangle contains a unique root of $f$. Some examples of
criteria are detailed in Section~\ref{sec:kantorovich}.
% Among the criteria used in the literature for root isolation, one may
% mention Pellet's test \cite{}Melhorn, Pellet's test combined with
% Graeffe iteration \cite{BSSYjsc18,IPYicms18}, Cauchy's integral theorem
% \cite{IPissac20, IPmacis20}, Interval Newton \cite[c]{}, Neumaier, and others \cite{}, Rump, \ldots
All those
criteria end up evaluating a polynomial of degree roughly $d$ on
each of the $d$ roots (\cite[\S 2.2.3]{MSWjsc15},\cite[Lemma 5]{BSSYjsc18},
\cite[Algorithms~2 and~3]{IPissac20}, \ldots), which leads to a bit-complexity
at least quadratic in $d$ if we use a naive evaluation algorithm, or in
$d^{3/2}$ using state-of-the-art multipoint
evaluation methods \cite{Hrr08}. In our case, thanks to our new data
structure, the criterion to check that a disk contains a unique root is
replaced by the evaluation of a polynomial of degree $O(\tau +
\log(\kappa))$, which explains partly how we avoid a cost quadratic in $d$. 

% Mahler https://projecteuclid.org/journals/michigan-mathematical-journal/volume-11/issue-3/An-inequality-for-the-discriminant-of-a-polynomial/10.1307/mmj/1028999140.full

Another approach in the literature consists in computing the roots up to a precision high
enough such that we can guarantee that all the roots are
approximated correctly. An explicit bound exists in the case where the
input polynomial has integer coefficients. Schönhage showed \cite[\S
20]{Srr82}) that if we can compute $\widetilde f$ an approximate factorization
of $f$ close enough, then the roots of $f$ are
isolated by disks centered on the
roots of $\widetilde f$ with a radius depending on the condition
number $\kappa$.
% such that $\|f-\widetilde f\|_1 \leq
% 2^{-m}\|f\|_1$ for $m\geq \log_2(8d^2\|f\|_1^2\kappa^2)$ and the
% distance between a root $z$ of $\widetilde f$ and the closest root
% $\zeta$ of $f$ is less than $r=\frac 1 {4d^2\|f\|_1\kappa}$, then the
% disk $D(z, r)$ isolate the root $\zeta$ of $f$.
Combined with a bound on
$\kappa$ from Mahler \cite[last inequality]{Mmmj64} for polynomials with integer
coefficients, and using the algorithm of Pan \cite{Pjsc02} to compute
the approximate factorization $\widetilde f$, this leads to a root-isolating algorithm in
$\widetilde O(d^2\tau)$ bit operations \cite[\S 10.3.1]{EPTbook14}. Note that this
algorithm requires $d$ arithmetic operations performed with a precision in
$\Omega(d\tau)$, and requires at least a quadratic number of bit
operations. This method was also improved in practice for small degree
polynomials \cite{Grr93}.

In our case, the bound from Mahler on $\kappa$ implies that the bit complexity of
Algorithm~\ref{alg:rootfinding} is also in $\widetilde O(d^2\tau)$,
matching the bit-complexity of state-of-the art algorithms in the worst
case for square-free polynomials with integer coefficients.

\paragraph{Main idea of the proof of Theorem~\ref{thm:rootfinding}.}
Our approach roughly follows the original approach of
Schönhage \cite[\S 20]{Srr82} in the sense that we will compute isolating disks centered on
the roots of an approximation of $f$. It is also adaptive like more
recent works (\cite{MSWjsc15,BSSYjsc18,IPissac20}, among
others), in the sense that we use a criterion for early termination.

The main novelty of our approach is that we start by computing an
$m$-hyperbolic approximation of $f$, for a small initial constant
integer $m$. This returns a set of $O(d/m)$ polynomials $g$ of degree $O(m)$ defined on $O(d/m)$ disks
covering the unit disk $D$. Then we compute the approximate roots of
the $g$ and we use a criterion to check if the corresponding
approximate roots are the centers of disks isolating the roots of $f$.
If we did not find all the roots of $f$, we double the parameter $m$ and
we start again. The criterion that we use to check if an approximate
root is the center of an isolating disk is based on Kantorovich theory
(recalled in Section~\ref{sec:kantorovich})
and it can be tested using the approximate polynomials of degree lower
than $m$ coming from the hyperbolic approximation. Moreover,
this criterion will be satisfied for all roots of $f$ for
$m \geq c\log(\|f\|_1d\kappa)$ for a universal constant $c$
(Lemma~\ref{lem:termination}). Further details can be found in Section~\ref{sec:rootfinding}.

\subsubsection{Lower bound on condition number}
\label{sec:introcondition}
% Demmel 1987 : https://doi.org/10.1007/BF01400115
% Elias bound on the condition number \cite{https://link.springer.com/chapter/10.1007%2F978-3-319-24021-3_16}

%TODO: cite Wilkinson and say that we finally give a geometric
%explanation on why Wilkinson polynomial is ill-conditioned.

Another insightful application of our data structure is a new geometrical
interpretation of ill-conditioned polynomials, based on the distribution
of their roots. Since the introduction of Wilkinson's polynomials $p(X) =
(X-1)\cdots(X-d)$, it is known that the problem of polynomial
root finding can be ill-conditioned even in the cases where the roots are
well separated \cite{Wnm59, Wbook64}. More recently, it has been
proved that the condition
number of characteristic polynomials of $d \times d$ Gaussian matrices
is in $2^{\Omega(d)}$ in average \cite{Bjc17}, where
a Gaussian matrix is a matrix where the entries are independent,
centered Gaussian random variables.
Yet, no approaches provide a geometric explanation of this phenomenon. Our
next theorem provides a geometric criterion that allows one to detect
easily if a polynomial is ill-conditioned, based on the repartition of
its roots. In these works, the authors consider the relative condition
number defined as $\kappa^r(f) =
\max_{f(\zeta)=0}\left(\frac{\|f\|_1}{|\zeta|}\kappa_\zeta\right)$. 
%Trefethen and Bau remark that "Polynomial rootfinding is typically ill-conditioned even in cases that do not
%involve multiple roots.". 

\begin{Theorem}
  \label{thm:condition}
  Given a polynomial of degree $d$, let $N = \lceil \log_2(3ed) \rceil$
  and let $m$ be the maximal number of
  roots of $f$ (resp. $X^df(1/X)$) in a disk of the $N$-hyperbolic covering $\mathcal H_N$. The
  relative condition number $\kappa^r(f)$ of $f$ is greater than $\frac 1
  {4ed\sqrt{m}}2^{5m/11}$.
\end{Theorem}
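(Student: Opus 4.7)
The plan is to localize the problem to a disk $D=D(\gamma_{n,k},\rho_n)$ of the covering $\mathcal H_N$ containing $m$ roots of $f$. After possibly replacing $f$ by $X^d f(1/X)$ (an operation that preserves $\|\cdot\|_1$ and $\kappa_1^r$, and swaps the two coverings), I may assume that $f$ itself has $m$ roots $\zeta_1,\ldots,\zeta_m\in D$. Setting $a(X)=\gamma_{n,k}+\rho_n X$, the polynomial $f_a(X):=f(a(X))$ has $m$ roots $\zeta'_i:=a^{-1}(\zeta_i)$ in the closed unit disk. The goal is to exhibit a single root $\zeta_1$ at which $|f'(\zeta_1)|$ is exponentially small in $m$; the claim on $\kappa_1^r$ will then follow from its definition.

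The first ingredient is the coefficient decay of $f_a$. The covering is engineered precisely so that $a$ maps the circle $|X|=2$ into the closed unit disk for the inner rings, and into $D(0,1+1/d)$ for the outer ring $n=N-1$ (where $(1+1/d)^d\leq e$ only costs a constant factor). Cauchy's integral formula on $|X|=2$ then yields $|c_\ell|\leq e\|f\|_1/2^\ell$ for every coefficient $c_\ell$ of $f_a$. Decomposing $f_a=P+E$ with $P(X)=\sum_{\ell<m}c_\ell X^\ell$ the degree-$(m-1)$ truncation, I obtain $\|E\|_\infty \leq O(\|f\|_1/2^m)$ on the unit disk; since $f_a(\zeta_i')=0$, evaluating at each root gives $|P(\zeta_i')|\leq O(\|f\|_1/2^m)$, so $P$ is exponentially small at the $m$ distinct nodes $\zeta_i'$.

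The core of the argument is to promote this pointwise smallness of the low-degree polynomial $P$ into an exponential upper bound on $|f_a'(\zeta_1')|$. I would combine two complementary estimates. The Lagrange identity $P(X)=\sum_i P(\zeta_i')L_i(X)$, differentiated at $\zeta_1'$, gives $|P'(\zeta_1')|\leq O(\|f\|_1/2^m)\cdot \sum_i|L_i'(\zeta_1')|$, where the Lagrange-weight sum is controlled by inverses of the pairwise distances $|\zeta_i'-\zeta_j'|$. Independently, the factored representation $|f_a'(\zeta_1')|=|c_d|\prod_{j\neq 1}|\zeta_1'-\eta_j|$ together with the Mahler-measure inequality $|c_d|\prod_j\max(1,|\eta_j|)\leq \|f_a\|_2\leq O(\sqrt m)\|f\|_1$ contributes an estimate proportional to $(2\rho_n)^{m-1}$ up to a $\sqrt m$ factor. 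The first bound is tight when the roots are well separated, the second when they are clustered. Minimizing over the configuration of the nodes and optimizing the trade-off yields $|f_a'(\zeta_1')|\leq O(\sqrt m)\|f\|_1\cdot 2^{-5m/11}$.

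Finally, I pull the bound back through the chain rule: $|f'(\zeta_1)|=|f_a'(\zeta_1')|/\rho_n$ with $\rho_n\geq \Omega(1/d)$ (because $2^N\leq 6ed$), and $|\zeta_1|^d\leq e$ on the outer ring while $|\zeta_1|\leq 1$ on the inner rings. This delivers $|\zeta_1 f'(\zeta_1)|/\max(1,|\zeta_1|^d)\leq O(d\sqrt m)\|f\|_1\cdot 2^{-5m/11}$, and by the definition $\kappa_1^r(f)\geq \|f\|_1 \max(1,|\zeta_1|^d)/|\zeta_1 f'(\zeta_1)|$ this establishes the theorem. The main obstacle is the precise balancing in the core estimate: a naive bound relying only on clustering gives exponent $\log_2(4/3)\approx 0.415$, strictly weaker than $5/11\approx 0.455$, so reaching the stated exponent requires genuinely interleaving the coefficient-decay bound on $P$ with the Mahler-type control on the factored form, and carefully solving the resulting optimization.
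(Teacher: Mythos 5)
Your high-level plan — localize to one disk of $\mathcal H_N$, pass to $f_a=f\circ a$ so that $m$ roots land in $D(0,1)$, use the geometric decay of the coefficients of $f_a$, and split $f_a=P+E$ with $P$ the degree-$(m-1)$ truncation — matches the paper's setup, but from there your strategy diverges completely from the paper's and, as written, has a genuine gap.

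The paper does \emph{not} try to bound $|f_a'(\zeta_1')|$ directly. Instead it argues by contradiction on a \emph{count}: if $\kappa_1(f)$ were small, then by Lemma~\ref{lem:approximation} (a Kantorovich-type perturbation lemma) each of the $m$ roots of $f_a$ in the unit disk would yield a nearby root of $P=g$, and by Remark~\ref{rem:distinct} (a Kantorovich separation bound, roots of $f_a$ are $\gtrsim 1/(s\kappa)$ apart with $s\le\|f\|_1 m^2 2^{m/11}$ from Lemma~\ref{lem:second}) those $m$ roots of $g$ would be pairwise distinct — impossible for a polynomial of degree $m-1$. Unwinding the resulting inequality $2^m\lesssim cs\kappa^2 m$ gives the $2^{5m/11}$ exponent (balancing $2^m$ against the $2^{m/11}$ growth of $\|\varphi''\|$). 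The discreteness of root counting is what makes the argument close cleanly.

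Your replacement for this step — ``promote pointwise smallness of $P$ into an upper bound on $|f_a'(\zeta_1')|$'' by trading off Lagrange interpolation against a factored-form/Mahler estimate — is not carried through, and there are concrete obstructions. (1) The Lagrange bound $|P'(\zeta_1')|\le\epsilon\sum_i|L_i'(\zeta_1')|$ has a right-hand side that blows up like $2^m/\delta$ when the nodes cluster at spacing $\delta$, so it gives no information precisely in the regime where you need it. (2) The factored form $|f_a'(\zeta_1')|=|c_{d'}|\prod_{j\ne1}|\zeta_1'-\eta_j|$ runs over \emph{all} $d'=\deg f_a$ roots, and separating the far roots from $\zeta_1'$ via $|\zeta_1'-\eta_j|\le 2\max(1,|\eta_j|)$ produces a factor $2^{d'}$, exponential in $d$ rather than in $m$; the Mahler bound does not cancel this. (3) The inequality $\|f_a\|_2\le O(\sqrt m)\|f\|_1$ is false for the outer rings: from Lemma~\ref{lem:second} the correct bound is $\|f_a\|_1\le\|f\|_1 2^{\widetilde m/11}$, which is itself exponential in $m$ and is exactly the source of the $11$ in the final exponent. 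Finally, you yourself flag that a naive version of the trade-off only reaches exponent $\log_2(4/3)\approx0.415<5/11$ and that the ``careful'' optimization is left undone — that missing step is where the actual content of the theorem lives, and it is not clear the two estimates you propose can be interleaved to recover $5/11$ without essentially rediscovering the Kantorovich separation bound. Without that, the proposal does not establish the stated inequality.
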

\begin{Remark}
  In particular, the disk $D(0,1/2)$ is covered by $4$ disks in any
  $N$-hyperbolic covering. Thus, if $m$ is the number of roots with
  absolute value less than $1/2$ or greater than $2$, then
$\kappa^r(f) \geq \frac 1 {8ed\sqrt{2m}} 2^{5m/88}$.
\end{Remark}

As a direct consequence of this theorem, we recover the fact that the
Wilkinson's polynomials have a condition number in
$2^{\Omega(d)}$, since almost all their roots have a modulus larger than
$2$. Moreover, the set of eigenvalues of $d \times d$ Gaussian matrices
are the Ginibre determinantal point process \cite{Gjmp65, HKPVps06},
with eigenvalues roughly spread uniformly in the disk of radius
$\sqrt{d}$ centered at $0$, such that again, almost all the roots of the characteristic
polynomial have a modulus greater than $2$, which allows us to conclude
that the expectation of the logarithm of its condition number is in
$\Omega(d)$.

On the other hand, for a polynomial of degree $d$ with random coefficients following a centered,
Gaussian law of variance one, the expectation of the logarithm of the condition
number of its real roots in $O(\log d)$ \cite{DNVlms15}. This is
consistent with Theorem~\ref{thm:condition} since the roots of
such polynomial, called Kac polynomials or hyperbolic polynomials, are
roughly distributed evenly among the disks of an hyperbolic covering
\cite[\ldots]{EKbams95, STijm04, PVam05,KZap14}. This means that our root solver algorithm is
well-suited for polynomials with random coefficients of the same order
of magnitude.

\paragraph{Main idea for the proof of Theorem~\ref{thm:condition}.}
Given a polynomial $f$, we use Kantorovich theory to prove that for an integer
$m$ logarithmic in the condition number, the $m$-hyperbolic
approximation returns polynomials of degree $m-1$ that have at least as
many roots as $f$ in the corresponding disk of the hyperbolic covering.
Thus it implies that $m-1$ is greater than the number of roots of $f$,
which provides a lower bound on a quantity logarithmic in the condition number.

%Essentially, it proves that if the roots are not distributed
%roughly with respect to a hyperbolic metric, then it is
%ill-conditioned. TODO: precise the claim

% TODO: citation zeroes distribution Tsirelson, Sodin, https://doi.org/10.1007/BF02984409
% and Edelman-Kostlan
% and Peres-Virag
% https://projecteuclid.org/journals/acta-mathematica/volume-194/issue-1/Zeros-of-the-iid-Gaussian-power-series--a-conformally/10.1007/BF02392515.full
% 
% TODO: explain how it is a consequence of the fact that for a degree
% greater than a quantity depending on $\kappa$, the polynomials $g_k$
% have more roots than $f$ in the corresponding disk.
% 
% It allows us to recover
% previous lower bounds on the condition number characteristic
% polynomials, on Wilkinson polynomials, charcteristic polynomial, and so on.
% 
% TODO: insert figure of the roots of the Wilkinson polynomial, along with
% the disks, for N=10.

\subsubsection{Experimental proof of concept}
\label{sec:introexperiment}
% TODO: put this section in introduction.
% 
% TODO: mention application of root solving with random coefficients for
% generating repulsive point process
% 
% TODO: add a curve for m=40 and say validated. And mention that the
% Kantorovich criterion was validated in this case (check). No time !

Finally, we conclude with an experimental section, and we present a
simple implementation of a root solver in the programming language
\verb/Python/, using the standard numerical library \verb/NumPy/
\cite{numpy}, and
working with machine precision. Since our implementation is less than
$100$ lines of code, we include it in Appendix. 

Our implementation is a simplified version of
Algorithm~\ref{alg:rootfinding}. For the approximate factorization and
the fast evaluation of the roots of unity, we use the standard polynomial root solver
and the Fast Fourier Transform procedures of
\verb/NumPy/. For the data structure to detect duplicates, we simply round
the solutions to a lower precision and sort the rounded solutions to
detect the values with the same binary representation after rounding.

% TODO: mention that we don't do checking in this implementation and
% results might be wrong for ill-conditioned polynomials. Nevertheless in
% our experiments, we always found all the roots, matching those of
% mpsolve with an error less than 1e-8 (check).

The current state-of-the-art implementation of a root solver for complex polynomials is the
software \verb/MPSolve/ \cite{BAna00, BRjcam14}, implemented in the \verb/C/ programming language,
and based notably on the Aberth-Ehrlich method \cite{Eacm67}
. Its development started more than $20$ years
ago and it has received several improvements over time, making it the
fastest current implementation to find all the complex roots of a
polynomial. This software also uses multi-precision arithmetic
when necessary. By contrast, our solver \verb/HCRoots/ is an early prototype written in
\verb/Python/, working in machine precision
only, and depending solely on the \verb/NumPy/ library. Nevertheless, as
we can see in Figure~\ref{fig:benchmark}, for random polynomial that are
known to be well-conditioned (see
Section~\ref{sec:introcondition}), our solver \verb/HCRoots/
called with a precision parameter $m=30$ is an order of
magnitude faster than \verb/MPSolve/, which is very promising.

In our experiments, we focused on polynomials where the coefficients
are centered, Gaussian random variables with variance $1$. In this case,
the solutions returned by our solver matched all the solutions returned by
\verb/MPSolve/ with an error less than $2^{-25}$, for polynomials up to
degree $25000$. Moreover, the linear
complexity of our algorithm, combined with the fact that we don't need
to use multi-precision arithmetic, allowed us to solve polynomials of degree
25000 an order of magnitude faster than \verb/MPSolve/. In
Figure~\ref{fig:benchmark}, we show the timings to solve random
polynomials of degree $d$ with our solver, and \verb/MPSolve/. Moreover,
it is easy to parallelize our algorithm, and improve furthermore its
practical efficiency.

% TODO: describe a bit the implementation versus the theoretical
% algorithm.

%TODO: add figure benchmark against MPSolve with simple numerical implementation in python for root approximation.

\begin{figure}
  \centering
  \includegraphics[width=0.7\linewidth]{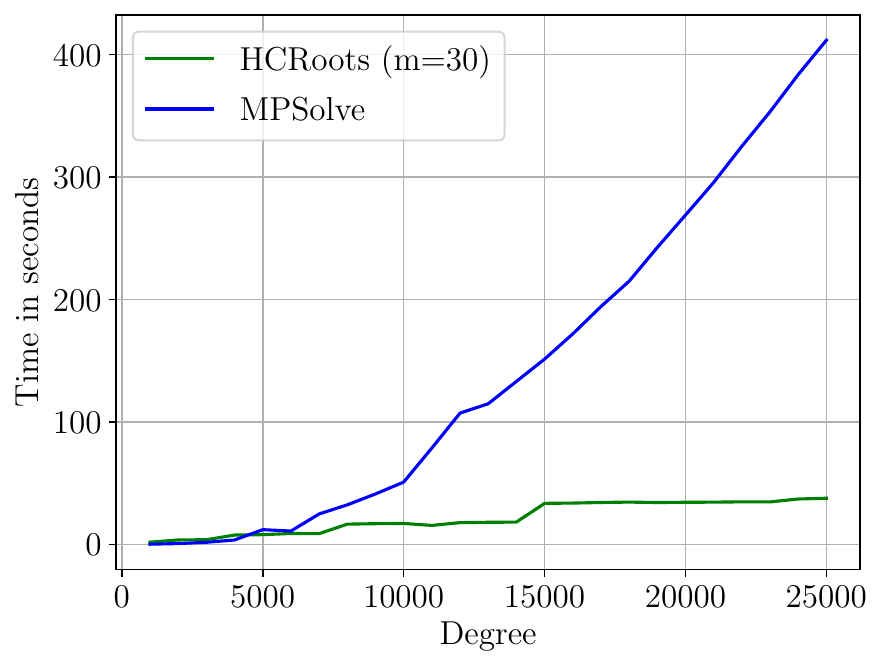}
  \caption{Time to approximate the roots of polynomials where the
  coefficients are random variable, centered Gaussian of variance $1$}
  \label{fig:benchmark}
\end{figure}

% \subsubsection{Other problems}
% TODO: so far we only covered some applications, we expect more
% application of our data structure, notably for multivariate case. Also
% for generation of random points with repulsive law.

% TODO: say that it is beyond the scope of this article

\section{Preliminaries}
\subsection{Notations}

Given a polynomial or an analytic series $f$, we will denote by $f'$ and
$f''$ the derivative and the second derivative of $f$. If $f$ is a
polynomial, we will denote by $\|f\|_1$, $\|f\|_2$ and $\|f\|_\infty$
the classical norm $1$, $2$ and infinity on the vector of its
coefficients. We will denote by $D(\gamma, \rho)$ the complex
disk of radius $\rho$ centered at $\gamma$. Finally, for a polynomial
$f$ and a root $\zeta$ of $f$, we let $\kappa_\zeta =
\frac{\max(1,|\zeta|^d)}{|f'(\zeta)|}$ and $\kappa_1(f) =
\max_{f(\zeta)=0}(\kappa_\zeta)$ is the absolute condition number of
$f$. It is also denoted $\kappa(f)$ and referred as the condition number of $f$.

% TODO: Define norm 1 of polynomials
% TODO: Recall $\kappa$ notations. Derivative notations. Disk notation.
% TODO: define $\varTheta$, $\widetilde O$.
% 
% TODO: notation for hyperbolic approximation.
% 
% TODO: change all theorems into proposition in preliminary section.
% 
% TODO: change $z_k$ to $\zeta_k$ for homogeneity.

\subsection{Fast elementary operations}
\label{sec:elementary}

We start with classical results on elementary operations, notably on the
multiplication and the composition of polynomials.

\begin{Proposition}[{\cite[Theorem 2.2]{Sca82}, \cite[Proposition~3.2]{Rtcs86}}]
  \label{pro:multiplication}
  Let $f$ and $g$ be two polynomials of degree $d$ with $\|f\|_1$ and
  $\|g\|_1$ less than $2^{\tau}$, and an integer $m \geq \log (d+1)$. It is possible to
  compute the polynomial $h$ such that $\|h-fg\|_1\leq 2^{-m}$ in
  $\widetilde O(d(m+\tau))$ bit operations.
\end{Proposition}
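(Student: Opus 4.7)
The plan is to reduce numerical polynomial multiplication to a numerical Fast Fourier Transform performed with controlled finite precision. Since $fg$ has degree at most $2d$, it suffices to evaluate $f$ and $g$ at the $(2d+2)$-th roots of unity (or the next power of two), multiply the evaluations pointwise, and apply an inverse FFT. The exact product satisfies $\|fg\|_1 \leq \|f\|_1 \|g\|_1 \leq 2^{2\tau}$, and the evaluations of $f$ and $g$ at any point of the unit circle are bounded in modulus by $2^\tau$. So throughout the computation all intermediate complex values live in a disk of radius at most $2^{2\tau}$.

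First I would fix a working precision $p = \tau + m + c\log d$ for a suitable constant $c$, and represent every complex number by a floating-point pair whose real and imaginary parts carry $p$ bits of mantissa. The twiddle factors $e^{2i\pi k / 2^N}$ have to be precomputed to the same precision; this can be done in $\widetilde O(2^N p)$ bit operations. The forward FFT on $2^N = O(d)$ points consists of $O(d \log d)$ complex additions and multiplications at precision $p$, each costing $\widetilde O(p) = \widetilde O(\tau + m)$ bit operations, for a total of $\widetilde O(d(\tau + m))$. The pointwise products and the inverse FFT have the same cost.

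The main obstacle is the error analysis. Each butterfly stage of the FFT can amplify the relative error by a bounded multiplicative factor, and there are $\log_2(2d)$ stages, so the total relative error on the evaluations is at most $O(\log d) \cdot 2^{-p}$ in magnitude, which translates to an absolute error of $O(2^\tau \log d \cdot 2^{-p})$ on each evaluation. After the pointwise product and inverse FFT, these errors combine into a bound on each coefficient of $h - fg$ of the form $O(2^{2\tau} \polylog(d) \cdot 2^{-p})$, and summing over $2d+1$ coefficients gives $\|h - fg\|_1 \leq O(d\, 2^{2\tau} \polylog(d) \cdot 2^{-p})$. Choosing the constant $c$ in $p = \tau + m + c\log d$ large enough forces this to be $\leq 2^{-m}$, which yields the claimed bit complexity $\widetilde O(d(\tau+m))$. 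This is essentially the numerical FFT error analysis originally carried out by Schönhage and Ritzmann in the cited references, and the hypothesis $m \geq \log(d+1)$ is what ensures the precision $p$ remains dominated by $\tau + m$ up to logarithmic factors.
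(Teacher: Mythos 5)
This proposition is cited in the paper to Schönhage and Ritzmann without an internal proof, and your FFT-based reduction with a finite-precision error analysis is exactly the route taken in those references, so the approach is the right one. One arithmetic slip worth flagging: from your own error estimate $\|h-fg\|_1 \leq O\bigl(d\,2^{2\tau}\polylog(d)\cdot 2^{-p}\bigr)$, forcing the bound below $2^{-m}$ requires $p \geq 2\tau + m + \log d + O(\log\log d)$, so the working precision should be $p = 2\tau + m + c\log d$ rather than $\tau + m + c\log d$; enlarging the constant $c$ multiplying $\log d$ cannot absorb a missing $\tau$. This has no effect on the final bound, since $p$ remains $O(\tau+m+\log d)=O(\tau+m)$ under the hypothesis $m\geq\log(d+1)$, and the total bit complexity stays $\widetilde O(d(\tau+m))$.
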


Using Fast Fourier transform algorithm, we can also evaluate in a
quasi-linear time a polynomial on the roots of unity. Note that in this
case, even if the required precision $m$ is smaller than $d$, the
algorithm is still quasi-linear in $dm$.

\begin{Proposition}[{\cite[\S 3]{Sca82}, \cite[Proposition~3.3]{Rtcs86}}]
  \label{pro:fft}
  Let $f$  be a polynomial of degree $d$, and $\|f\|_1 \leq 2^{\tau}$ with $\tau \geq 1$,
  and an integer $m \geq \log (d+1)$. It is possible to compute the
  complex numbers $y_0, \ldots, y_{d-1}$ such that $\sum_{k=0}^{d-1}|y_k-f\left(e^{i2\pi
  k/d}\right)|\leq 2^{-m}$ in $\widetilde O(d(m+\tau))$ bit operations.
\end{Proposition}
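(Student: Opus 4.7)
The plan is to reduce the statement to the classical Cooley--Tukey FFT and then carefully bound the numerical error introduced by working at finite precision. First I would assume without loss of generality that $d$ is a power of two (otherwise, pad $f$ with zero coefficients or invoke Bluestein's chirp-$z$ trick, each of which changes the degree by at most a factor of four and so only affects the hidden constants). The Cooley--Tukey algorithm then computes $(f(\omega^k))_{k=0}^{d-1}$, where $\omega=e^{i2\pi/d}$, by $\log_2 d$ levels of butterfly operations, for a total of $O(d\log d)$ complex additions and multiplications by pre-computed twiddle factors $\omega^j$.

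The second step is to choose a working precision $p$. I would work with fixed-point (or floating-point) complex arithmetic in which every stored quantity has bit-size $p = m + \tau + c\log d$ for some absolute constant $c$, and in which every twiddle factor $\omega^j$ is replaced by an approximation $\widetilde\omega_j$ with $|\widetilde\omega_j - \omega^j| \leq 2^{-p}$. Such approximations of the roots of unity can be produced in total time $\widetilde O(d\cdot p) = \widetilde O(d(m+\tau))$ using standard high-precision evaluation of the exponential (or by bootstrapping from half the roots of unity).

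The third step is the error analysis. The key observation is that at every level of the FFT the transformation is an isometry on $\ell_2$ and expands the $\ell_1$ norm by at most a factor $2$; consequently, the intermediate vectors computed at level $\ell$ have $\ell_1$ norm bounded by $2^\ell \|f\|_1 \leq d\|f\|_1 \leq 2^{\tau+\log d}$. The error at each output, measured in the $\ell_1$ sense, grows additively over the $\log_2 d$ levels: each butterfly introduces an absolute error of at most $2^{-p}$ times the size of its inputs (from the rounding of the arithmetic) plus $2^{-p}$ times the size of the twiddle target (from the rounding of $\omega^j$). Summing these contributions over all $O(d\log d)$ operations yields a total $\ell_1$ error bounded by $O(d\log d)\cdot\|f\|_1\cdot 2^{-p} \leq 2^{-m}$ once $c$ is chosen large enough. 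Finally, since each of the $O(d\log d)$ arithmetic operations is performed at precision $p = O(m+\tau+\log d)$ and costs $\widetilde O(p)$ bit operations, the total bit complexity is $\widetilde O(d(m+\tau))$.

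The main obstacle in this plan is the careful bookkeeping of the rounding errors together with the roundings of the twiddle factors, so as to obtain a clean additive rather than multiplicative accumulation across the $\log d$ levels; this is the part that is worked out in detail in Ritzmann's analysis \cite[Proposition~3.3]{Rtcs86} and in Schönhage's original paper \cite[\S 3]{Sca82}, and it is what lets us afford working precision only logarithmically larger than $m+\tau$.
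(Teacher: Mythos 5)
The paper does not prove this proposition; it cites it directly from Schönhage~\cite{Sca82} and Ritzmann~\cite{Rtcs86}. Your sketch is the standard route those references take: a radix-two FFT at working precision $m+\tau+O(\log d)$, with the crucial observation that all intermediate entries stay bounded by $\|f\|_1$ so the extra precision needed is only logarithmic in $d$, and that logarithmic overhead is absorbed because $m \geq \log(d+1)$. The complexity accounting ($O(d\log d)$ operations, each $\widetilde O(m+\tau)$ bits) is correct.

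Two small cautions. First, \emph{padding $f$ with zero coefficients does not by itself handle a non-power-of-two $d$}: the evaluation points $e^{i2\pi k/d}$ are fixed by $d$, and appending zero coefficients changes neither $d$ nor those points. The right reduction is the one you also mention (Bluestein's chirp transform, which rewrites the size-$d$ DFT as a length-$O(d)$ cyclic convolution and then calls a power-of-two FFT), or alternatively a mixed-radix Cooley--Tukey decomposition; you should drop the padding option or clarify that you mean reducing $f$ modulo $X^d-1$ first, which is a separate (and needed) normalization step. Second, the phrase ``grows additively'' is a simplification: an error introduced at butterfly level $\ell$ is amplified by the remaining $\log_2 d - \ell - 1$ levels by a factor up to $2^{\log_2 d - \ell - 1}$ in the $\ell_1$ norm. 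This makes the forward $\ell_1$ error scale like a small power of $d$ times $2^{-p}\|f\|_1$, not like $d\log d$. Since that power of $d$ is still $2^{O(\log d)}$, it is absorbed into the $c\log d$ term of $p$ and your final complexity bound is unaffected; but the cleaner $O(\log d)$ relative-error growth you have in mind is the \emph{backward}-error statement of the Gentleman--Sande analysis, and the distinction is exactly the ``careful bookkeeping'' you defer to the cited references.
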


Finally, another classical result that we will use is the fast
composition of polynomials.

\begin{Proposition}[{\cite[Theorem 2.2]{Rtcs86}}]
  \label{pro:composition}
  Let $f$ and $g$ be two polynomials of degree $d$ with $\|f\|_1 \leq
  2^\tau$ and $\|g\|_1 \leq 2^\nu$ where $\tau\geq1$ and $\nu \geq 1$.
  Let $m$ be a positive integer. It is possible to compute the
  polynomial $h$ of degree $d-1$ such that $\|h(X)- f(g(X)) \mod X^d\|_1\leq 2^{-m}$
  in $\widetilde O(d(m + \tau +  d\nu)))$ bit operations.
\end{Proposition}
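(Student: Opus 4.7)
The plan is to follow the classical Brent--Kung baby-step/giant-step scheme for composition of truncated power series, instantiated numerically by using Proposition~\ref{pro:multiplication} for every intermediate truncated multiplication.

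First I would reduce to the case $g(0)=0$ via a Taylor shift $\tilde f(Y) = f(g(0)+Y)$: then $f(g(X)) = \tilde f\bigl(g(X)-g(0)\bigr)$ and $g(X)-g(0)$ has no constant term. The numerical Taylor shift of a polynomial of degree $d$ with coefficients of norm at most $2^{\tau}$ evaluated at a point of modulus at most $2^{\nu}$ can be cast as a single convolution and performed in $\widetilde O(d(m+\tau+d\nu))$ bit operations, so this preprocessing already fits the target budget.

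Next, set $s = \lceil\sqrt d\rceil$ and compute iteratively the truncated powers $g, g^2, \ldots, g^s \bmod X^d$ by repeated calls to Proposition~\ref{pro:multiplication}, noting that $\|g^k\|_1 \leq 2^{k\nu} \leq 2^{d\nu}$. Split $f(Y) = \sum_{j=0}^{s-1} f_j(Y)\, Y^{js}$ with $\deg f_j < s$, and for each $j$ form the block image $f_j(g) \bmod X^d = \sum_{k=0}^{s-1} (f_j)_k\, g^k \bmod X^d$ by a linear combination. Finally apply a Horner recursion with the giant step $G = g^s \bmod X^d$, so that $f(g) \bmod X^d = \bigl(\cdots\bigl(f_{s-1}(g)\,G + f_{s-2}(g)\bigr)G + \cdots\bigr)G + f_0(g) \bmod X^d$, invoking Proposition~\ref{pro:multiplication} for each of the $O(s)$ giant-step multiplications. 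I would carry every arithmetic operation at a uniform working precision $P = m + \tau + O(d\nu + \log d)$, then propagate the resulting additive errors through the baby-step chain, the linear combinations, and the Horner recursion; the standard triangle-inequality bookkeeping, together with the bound $\|f(g) \bmod X^d\|_1 \leq 2^{\tau+d\nu}$, shows that the final absolute error stays below $2^{-m}$.

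The main obstacle is matching the stated $\widetilde O(d(m+\tau+d\nu))$ bit complexity rather than the $\widetilde O(d^{3/2}(m+\tau+d\nu))$ that a direct accounting of $\Theta(\sqrt d)$ truncated degree-$d$ multiplications at precision $P$ would yield. Closing this $\sqrt d$ gap is the delicate part: it presumably exploits that the baby-step powers $g^k \bmod X^d$ need only be known to a precision that shrinks with $k$ (since they are subsequently combined with coefficients of $f$ that only contribute a bounded share of the final error budget), or equivalently uses a relaxed/online evaluation in which no power of $g$ is ever materialized at full length and full precision simultaneously. Apart from this amortization, the rest of the argument is routine: linearity of the error along the linear combinations, geometric-series control along Horner, and a uniform invocation of Proposition~\ref{pro:multiplication}.
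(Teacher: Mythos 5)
This Proposition is a citation, not a result proved in the paper: the authors invoke Ritzmann's Theorem~2.2 directly and only supply a short remark explaining how to rescale from their normalisation ($\|f\|_1\leq 2^\tau$, $\|g\|_1\leq 2^\nu$) to Ritzmann's. So there is no in-paper proof to compare against, and what you are really doing is trying to re-derive Ritzmann's bound from scratch with the Brent--Kung baby-step/giant-step scheme. That is a genuinely different route from Ritzmann's, who obtains the quadratic-in-$d$ bit bound by a divide-and-conquer recursion on $f$ with careful control of the bit sizes of the recursively composed pieces (valuations and magnitudes of the $g^{d/2^k}$), not by a $\sqrt d$-block decomposition.

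More importantly, your sketch has a gap that you identify but do not close, and I do not believe the proposed fix works. First, the accounting is off even before the amortization: forming the $\lceil\sqrt d\rceil$ blocks $f_j(g)\bmod X^d$ as linear combinations of the baby steps costs $\widetilde O(\sqrt d\cdot d\cdot P)$ per block at working precision $P$, hence $\widetilde O(d^2P)$ overall unless you invoke fast matrix multiplication; so the ``direct accounting'' is $\widetilde O(d^2(m+\tau+d\nu))$, not $\widetilde O(d^{3/2}(m+\tau+d\nu))$, and you would need to shave a full factor of $d$ off, not $\sqrt d$. Second, the proposed amortization — computing the higher baby-step powers $g^k\bmod X^d$ at a precision that \emph{shrinks} with $k$ — goes the wrong way. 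An error $\delta_k$ in $g^k$ contributes an error of order $|f_{(\text{something})}|\cdot\|\delta_k\|_1$ to the output, and the coefficients of $f$ are bounded uniformly by $2^\tau$, so you still need $\|\delta_k\|_1\leq 2^{-m-\tau}$ for every $k$. Meanwhile $\|g^k\|_1$ can be as large as $2^{k\nu}$, so the relative precision required of $g^k$ \emph{grows} like $m+\tau+k\nu$ with $k$; there is no free degradation with $k$ to exploit. To actually recover $\widetilde O(d(m+\tau+d\nu))$ you would have to reproduce Ritzmann's structural argument (or an equivalent one), in which no power of $g$ at full degree and full precision is ever formed, and the bit volume of the intermediate data at each recursion level is kept at $\widetilde O(d(m+\tau+d\nu))$ by exploiting that $g(0)=0$ raises the valuation of the recursively combined pieces. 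As written, the proposal does not establish the stated bound.
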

\begin{Remark}
  Ritzmann \cite[Theorem 2.2]{Rtcs86} used the same bound for $\|f\|_1$ and $\|g\|_1$. Our
  proposition is a direct consequence of Ritzmann's theorem if we
  multiply $f$ in the input by $2^{\nu-\tau}$ , and the result by
  $2^{\tau-\nu}$. This reduction can be done in $\widetilde
  O(d(\tau+\nu+m))$ bit operations.
\end{Remark}

\subsection{Fast approximate multipoint evaluation}

% TODO: add complexity for fast fourrier transform and multiplication of
% polynomials from Schönhage
% TODO: add complexity for fast composition (Ritzmann https://www.sciencedirect.com/science/article/pii/0304397586901076 )
% 
% 
% TODO: cite Proposition 13 in $d^{3/2}$ van der hoeven https://www.texmacs.org/joris/fastcomp/fastcomp-abs.html
% TODO: say that they come closer using a partition of the disks with
% annuli of same size. What makes our approach work is that our disks
% cover annuli of size exponentially smaller when closer to the unit
% circle.

When we want to evaluate a polynomial on multiple points, it is possible
to amortize the number of bit operations when the precision required is
larger than the degree. In a recent work \cite[\S 3.2]{Hrr08}, van der
Hoeven showed that it can be done in $\widetilde O(d^{3/2}m^{3/2})$.
However, this bound is not optimal when $m$ is greater than $d$.
For the case $m>d$, we recall here another state-of-the-art bound on the bit complexity for fast multipoint
evaluation.

\begin{Proposition}[{\cite[Lemma 11]{Hrr08}, \cite[Theorem~9]{KSarxiv13}}]
  \label{pro:evaluation}
  Let $f$ be a polynomial of degree $d$, with $\|f\|_1 \leq 2^\tau$, with
  $\tau \geq 1$, and let $x_1, \ldots, x_d \in \mathbb C$ be complex
  points with absolute values bounded by $1$. Then, computing $y_k$ such
  that $|y_k - f(x_k)| \leq 2^{-m}$ for all $k$ is possible in
  $\widetilde O(d(m + \tau + d))$ bit operations.
\end{Proposition}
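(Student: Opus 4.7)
The plan is to follow the classical subproduct-tree approach of Fiduccia / Aho–Hopcroft–Ullman, but to carry out the bit-precision accounting in the spirit of van der Hoeven's Lemma~11. The core data structure is a balanced binary tree with the $d$ points at the leaves: leaf $k$ stores $(X - x_k)$, and each internal node $v$ stores $M_v$, the product of the polynomials at its two children. This tree has depth $O(\log d)$, and at level $\ell$ (counted from the leaves) each node holds a polynomial of degree $2^\ell$. Since $|x_k| \leq 1$, the identity $M_v = \prod (X-x_k)$ together with the triangle inequality on coefficients yields $\|M_v\|_1 \leq 2^{2^\ell} \leq 2^d$ at every node.

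Next I would build the subproduct tree bottom-up. At level $\ell$ there are $O(d/2^\ell)$ nodes, each requiring one multiplication of polynomials of degree $2^\ell$ and $1$-norm at most $2^d$. By Proposition~\ref{pro:multiplication} applied with a working precision of $m + O(d)$ absolute bits, each such multiplication costs $\widetilde O(2^\ell (m+d))$ bit operations, summing to $\widetilde O(d(m+d))$ per level and $\widetilde O(d(m+d))$ in total across the $O(\log d)$ levels. Then I would descend the tree: starting with $f$ at the root, at each internal node $v$ with children $v_L,v_R$ I replace the polynomial $P_v = f \bmod M_v$ by $P_{v_L} = P_v \bmod M_{v_L}$ and $P_{v_R} = P_v \bmod M_{v_R}$. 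Fast polynomial division reduces each mod-operation to a constant number of multiplications of polynomials of degree $O(2^\ell)$ via Newton iteration on the reversed divisor, so by Proposition~\ref{pro:multiplication} one node at level $\ell$ costs $\widetilde O(2^\ell(m+\tau+d))$ bits; summed over the $O(d/2^\ell)$ nodes of each level and then over the $O(\log d)$ levels, this is $\widetilde O(d(m+\tau+d))$. At the leaves, $P_v = f \bmod (X-x_k) = f(x_k)$, producing the desired values.

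The main obstacle is the error analysis, which dictates the working precision. Each rounded multiplication or division introduces an absolute error on the current polynomial, and this error is then amplified when that polynomial is reduced modulo the subsequent $M_v$'s along the path to a leaf. Because $\|M_v\|_1 \leq 2^d$ at every node and there are $O(\log d)$ levels, a straightforward bound on error propagation through division (using that $|M_v(x_k)|$ factors out and the intermediate polynomial $P_v$ has $1$-norm controlled by $\|f\|_1 \cdot 2^{O(d)}$) shows that the total amplification along any root-to-leaf path is at most $2^{O(d)}$. Hence carrying $m + O(\tau + d)$ bits of precision at every node suffices to guarantee $|y_k - f(x_k)| \leq 2^{-m}$ at every leaf, and the total bit cost is $\widetilde O(d(m+\tau+d))$, as claimed. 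The trickiest piece is verifying that the Newton inversion used inside fast division remains stable at this precision against divisors of $1$-norm up to $2^d$; this is exactly what the cited analyses of \cite{Hrr08,KSarxiv13} provide.
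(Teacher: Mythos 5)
The paper does not prove Proposition~\ref{pro:evaluation}: it is stated as a recalled result with citations to van der Hoeven's Lemma~11 and to Kobel--Sagraloff, so there is no in-paper proof to compare against. Your sketch is the same subproduct-tree argument that those references carry out in detail, and the quantitative parameters you invoke ($\|M_v\|_1 \leq 2^{2^\ell} \leq 2^d$ by the triangle inequality since $|x_k|\leq 1$, a working precision of $m + O(\tau + d)$ bits, and total error amplification $2^{O(d)}$ along each root-to-leaf path) are consistent with the bound claimed. As a sketch it is reasonable; the one genuine gap, which you yourself flag, is that the stability of Newton inversion for the fast division against divisors of $1$-norm up to $2^d$, and the precise bound showing amplification stays $2^{O(d)}$ after $O(\log d)$ cascaded modular reductions, are asserted rather than proved and simply deferred to the cited works. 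Given that the paper itself treats this as a black-box citation, that is the appropriate level of detail here.
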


% Note that even if all the points have an absolute value less than $1$,
% their approach require to take $\nu \geq 1$, such that this method
% is quadratic in the degree.

Even though the bit complexity is quadratic in $d$, this approach is
near optimal when $m$ is greater than $d$, since its complexity matches
the size of the output in this case. We reuse notably this result to
bound the complexity of Algorithm~\ref{alg:evaluation}, since our
approach reduces the problem of evaluating a polynomial of
degree $d$ to the problem of evaluating several polynomials of degree
$m$ with a precision greater than $m$.

%Using our data structure, we improve this result in the case where
%the points we want to evaluate are in the unit disk.

%Theorem of fast approximate multipoint evaluation
\subsection{Condition number}
\label{sec:preliminarycondition}
Our root isolation algorithm has a bit complexity that depends on the
condition number of the input polynomial. The absolute condition number
is a measure of the displacement of its roots with respect to the
displacement of its coefficients. More precisely, for a polynomial $f$
with a vector of coefficients $c \in \mathbb C^{d+1}$, and a root
$\zeta$ of $f$, there exists a neighborhood $U \subset \mathbb C^{d+1}$
of $c$, a neighborhood $V \subset \mathbb C$ of $\zeta$ and a
differentiable function $\psi: U \rightarrow V$ that maps $c \in U$ to
the unique zero in $V$ of the corresponding polynomial. Letting
$D\psi(\zeta)$ be the gradient of $\psi$ at $\zeta$, the condition
number of $\psi$ at $\zeta$ is the induced norm $|\|D\psi(\zeta)\||_2 =
\max_{\|\delta\|_2 = 1} |D\psi(\zeta) \cdot \delta| = \left(\sum_{k=0}^d
|\zeta|^{2k}\right)^{1/2}$. If we consider the induced norm $1$ instead,
we have $|\|D\psi(\zeta)\||_1 =
\max_{\|\delta\|_1 = 1} |D\psi(\zeta) \cdot \delta| = \max_{k=0}^d(
|\zeta|^{k}) = \max(1, |\zeta|^d)$.

%vector $\delta \in \mathbb C^{d+1}$ of perturbation values with
%$\|\delta\|_1 = 1$, we look at the function 

\begin{Definition}\cite[\S 14.1.1]{Bbook13}
  \label{def:condition}
  The \emph{standard local absolute condition number} of polynomial $f$ of
  degree $d$ at a root $\zeta$ is $\kappa_2(f, \zeta) = \frac 1 {|f'(\zeta)|} \left(\sum_{k=0}^d
  |\zeta|^{2k}\right)^{1/2}$. Considering all the roots, we define the
  \emph{standard
  absolute condition number} of $f$ as
  $\kappa_2(f) = \max_{f(\zeta)=0} \kappa_2(f, \zeta)$.
\end{Definition}
\begin{Remark}
  \label{rem:norm}
  The standard absolute condition number is obtained by considering the norm $2$.
  If we consider the induced norm $1$ instead, we get the condition
  number $\kappa_1(f,\zeta) = \frac{\max(1, |\zeta|^d)}{|f'(\zeta)|}$, such
  that $\kappa_1(f) \leq \kappa_2(f) \leq \sqrt d \kappa_1(f)$.
\end{Remark}

As shown in Remark~\ref{rem:norm}, the bound depending on the logarithm
of the condition number for the norm $1$ and the norm $2$ will be the
same up to a factor logarithmic in $d$.  In the following, we will focus
on the condition number $\kappa_1$ induced by the norm $1$.

For square-free polynomial with integer
coefficients, the condition number is finite. The following proposition
bounds the condition number for square-free polynomials with integer
coefficients.

%TODO: replace this proposition by the Mahler bound

\begin{Proposition}[{\cite[last inequality]{Mmmj64}}]
  Given a square-free polynomial $f$ of degree $d$ with integer coefficients, let $\tau$ be a real
  such that $\|f\|_1 \leq 2^\tau$. Then $\log(\kappa_1(f))$ is in
  $O(d\tau + d\log d)$.
\end{Proposition}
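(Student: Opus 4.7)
The plan is to bound $\kappa_1(f,\zeta) = \max(1,|\zeta|^d)/|f'(\zeta)|$ for every root $\zeta$ separately, by bounding the numerator via Cauchy's bound and $|f'(\zeta)|$ via the discriminant of $f$. For the numerator, since $f$ has integer coefficients its leading coefficient $a_d$ is a nonzero integer, so $|a_d|\geq 1$. Cauchy's bound then gives $|\zeta|\leq 1+\max_{k<d}|a_k|/|a_d|\leq 1+2^\tau$, hence $\log\max(1,|\zeta|^d)\leq d\log(1+2^\tau)=O(d\tau)$.

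The core of the argument is to show $|f'(\zeta)|\geq \max(1,|\zeta|)^{d-1}\cdot 2^{-O(d\tau+d\log d)}$. I would use the classical identity $\prod_i |f'(\zeta_i)| = |a_d|^{2-d}|\mathrm{disc}(f)|$, which follows from $f'(\zeta_i)=a_d\prod_{j\neq i}(\zeta_i-\zeta_j)$ and the Vandermonde formula for the discriminant. Since $f$ is square-free with integer coefficients, $|\mathrm{disc}(f)|$ is a positive integer and hence at least $1$, and combined with $|a_d|\leq 2^\tau$ this yields $\prod_i|f'(\zeta_i)|\geq 2^{-(d-2)\tau}$. To isolate a single factor I would upper-bound all the others via $|f'(\zeta_j)|\leq d\|f\|_1\max(1,|\zeta_j|)^{d-1}$ together with Landau's inequality $M(f)=|a_d|\prod_j\max(1,|\zeta_j|)\leq \|f\|_2\leq\sqrt{d+1}\,2^\tau$. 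Taking the product over $j\neq i$ and pulling $\zeta_i$ out of the Mahler measure then yields $\prod_{j\neq i}|f'(\zeta_j)|\leq 2^{O(d\tau+d\log d)}/\max(1,|\zeta_i|)^{d-1}$, and dividing the two estimates gives the desired lower bound on $|f'(\zeta_i)|$.

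Substituting both estimates into the condition number, the factor $\max(1,|\zeta|^d)$ from Cauchy's bound almost cancels the factor $\max(1,|\zeta|)^{d-1}$ from the discriminant bound, leaving only a single $\max(1,|\zeta|)\leq 1+2^\tau$, so that $\kappa_1(f,\zeta)\leq\max(1,|\zeta|)\cdot 2^{O(d\tau+d\log d)}=2^{O(d\tau+d\log d)}$. The main delicate point is precisely this cancellation: it is what prevents the bound from degrading to $O(d^2\tau)$, and it requires retaining the full power $\max(1,|\zeta_i|)^{d-1}$ in the upper bound on $\prod_{j\neq i}|f'(\zeta_j)|$ rather than crudely replacing every other root by its Cauchy bound. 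The rest is bookkeeping with standard polynomial-height inequalities.
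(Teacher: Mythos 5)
Your proof is correct; the paper gives no proof of its own here (it merely cites Mahler's 1964 paper for the bound), so there is nothing to compare against. Your discriminant-and-Mahler-measure argument, with the careful cancellation of $\max(1,|\zeta_i|)^{d-1}$ that keeps the exponent linear rather than quadratic in $d$, is the standard route and matches the technique of the cited reference.
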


In particular, combined with Theorem~\ref{thm:rootfinding}, this
proposition implies that for square-free polynomials, the bit-complexity of Algorithm~\ref{alg:rootfinding} has the same
worst-case bound as the state-of-the-art root-finding methods.
% On the other hand, for random polynomials, TODO: mention condition
% number logarithmic.
% We can also bound the condition number with respect to the separation
% bound since $\kappa_1(f,\zeta) =\max(1,\zeta^d)/\prod_{f(\eta)=0}
% (\zeta-\eta)$. This allows us in particular to recover and improve the
% bounds on root isolation depending on separation bounds (\cite{},
% Melhorn).
We define also the relative condition number to represent the
relative displacement of the roots, with respect to a relative
displacement of the coefficients.

\begin{Definition}[{\cite[\S 14.1.1]{Bbook13}}]
  \sloppy The \emph{standard local relative condition number} of a polynomial
  $f$ at a root $\zeta \neq 0$ is
  $\kappa^r_2(f,\zeta) = \frac{\|f\|_2}{|\zeta|}\kappa_2(f, \zeta)$, and the
  \emph{standard relative condition number} of $f$ is
  $\kappa^r_2(f) = \max_{f(\zeta)=0, \zeta \neq 0} \kappa^r_2(f, \zeta)$. Similarly, we
  define $\kappa^r_1(f,\zeta) = \frac{\|f\|_1}{|\zeta|}\kappa_1(f, \zeta)$, and the
  associated \emph{relative condition number} of $f$ is
  $\kappa^r_1(f) = \max_{f(\zeta)=0, \zeta \neq 0} \kappa^r_1(f, \zeta)$.
\end{Definition}
\begin{Remark}
  \label{rem:transpose}
  An interesting property of the relative condition number at a nonzero
  root $\zeta$ of $f$ is that it is equal to the relative condition
  number at the root $1/\zeta$ of the polynomial
  %identical for the polynomial $f(X)$ of degree $d$ and the polynomial
  $g(X) = X^d f(1/X)$. Moreover for the absolute condition number we
  have $\kappa_1(g, 1/\zeta) \leq \kappa_1(f,\zeta)$ for $\zeta$
  outside the unit disk. 
\end{Remark}
\begin{proof}[Proof of the remark]
By construction, the inverse function is a bijection between the non-zero roots $\mu$ of $g$
and the non-zero roots $\zeta$ of $f$. Computing
the derivative of $g$ at a root $\mu$ we have
$\mu g'(\mu) = d\mu^df(1/\mu) - \mu^{d-1} f'(1/\mu) = - \zeta
f'(\zeta)/\zeta^d$. Thus $\kappa_1(g,\mu)/|\mu| =
\frac{|\zeta|^d \max(1,|\mu|)^d%\left(\sum_{k=0}^d |\mu|^{2k}\right)^{1/2}
  }{|\zeta|\cdot|f'(\zeta)|}= \frac{\max(1,|\zeta|)^d%\left(\sum_{k=0}^d |\zeta|^{2k}\right)^{1/2}
}{|\zeta|\cdot|f'(\zeta)|} =
\kappa_1(f,\zeta)/|\zeta|$. Finally, since $\|f\|_1 = \|g\|_1$ by
construction, we have $\kappa_1^r(g,\mu) = \kappa_1^r(f,\zeta)$.
\end{proof}

This number is a standard way to measure to stability of the roots with
respect to independent perturbations of the coefficients. In
Theorem~\ref{thm:condition}, we provide a geometric criterion to bound
from below this condition number.

% Note that for all roots with absolute value greater than
% $1$, the relative condition number $\kappa_2^r(f, \zeta)$ is smaller
% than $\|f\|_2\kappa_2(f,\zeta)$. In particular for polynomials with
% all roots of absolute value greater than $1$,
% Theorem~\ref{thm:condition} provides also a lower bound for the absolute
% condition number.

\subsection{Fast approximate factorization}

Another important result on univariate polynomials is a bound on the bit
complexity to approximate all its roots. In the complex, approximating
the roots is equivalent to compute an approximate factorization. We
recall the state-of-the-art bound on the bit complexity for this problem.

\begin{Proposition}[{\cite[Theorem 2.1.1]{Pjsc02}}]
  \label{pro:pan}
  Let $f$ be a polynomial of degree $d$ with leading coefficient $c_d$ and
  all its roots $\zeta_k$ in the unit
  disk, and $m \geq d \log d$ a fixed real number. It is possible to
  compute complex numbers $z_1, \ldots, z_d$ such that $\|f(X) -
  c_d\prod_{k=1}^d(X-z_k)\|_1 \leq 2^{-m}\|f\|_1$ in $\widetilde O(d m)$
  bit operations.
\end{Proposition}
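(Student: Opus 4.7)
The plan is to follow the classical splitting-circle strategy pioneered by Schönhage and refined by Pan. The recursive idea is: find a circle $C$ in the complex plane that avoids the roots of $f$ by a controlled margin and such that roughly half the roots lie inside $C$ and half outside; compute an approximate factorization $f \approx c_d\, f_{\mathrm{in}}\cdot f_{\mathrm{out}}$ across $C$; then recurse on each factor. The $O(\log d)$ levels of recursion each perform work $\widetilde O(d m / \log d)$ amortized, yielding the total $\widetilde O(dm)$ bound.

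First I would establish the splitting step. Given that all roots lie in the unit disk, a Turán/Graeffe-style computation on $f$ and its reversal locates an annulus $r \leq |z| \leq R$ with $R/r$ bounded away from $1$ by a polylogarithmic factor, containing no roots of $f$ and splitting the root set into two pieces of size $\leq \lceil 2d/3 \rceil$ each. Computing the needed power sums of root moduli via Graeffe iterates is done with fast multiplication (Proposition~\ref{pro:multiplication}) in $\widetilde O(dm)$ bit operations. By rescaling $z \mapsto \alpha z$ (with a rotation if necessary), one may assume the splitting circle is the unit circle and that the isolation annulus has width $\Theta(1/\log^{O(1)} d)$.

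Second, I would extract $f_{\mathrm{in}}$, the monic factor with roots inside $C$. This is done by a Newton-like lifting: start from a crude guess (e.g.\ obtained by evaluating $f$ on the roots of unity via Proposition~\ref{pro:fft} and applying the discrete analogue of the Cauchy residue formula to recover the coefficients of $f_{\mathrm{in}}$ modulo low precision), then refine by a Newton iteration on the factorization map $(u,v) \mapsto uv$, using that the resultant $\mathrm{Res}(f_{\mathrm{in}}, f_{\mathrm{out}})$ is large thanks to the root-free annulus. Each Newton step doubles the precision and costs $\widetilde O(d m')$ bit operations via fast multiplication, so the total precision $m + O(\log d)$ is reached after $O(\log m)$ iterations, giving $\widetilde O(dm)$ bit operations for the splitting step.

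Third, I would set up the recursion and book-keeping on precision. The sub-problems have degrees $d_1+d_2=d$ with $d_i \leq \lceil 2d/3 \rceil$, so by the master theorem the sum of work over all levels telescopes to $\widetilde O(dm)$, provided the working precision only grows by an additive $O(\log d)$ per level and by an additional term to compensate for the condition of each splitting. The main obstacle, and the heart of Pan's paper, is exactly this precision control: one must show that the perturbation induced by the approximate factorization at each level does not blow up the relative separation of roots in subsequent splittings. This is handled by a perturbation analysis tying the stability of the splitting to the width of the root-free annulus, together with an occasional re-centering/rescaling so that each recursive call sees a polynomial with all roots again in the unit disk. Assembling the partial factorizations up the tree and normalizing by $c_d$ gives the final bound $\|f - c_d \prod(X - z_k)\|_1 \leq 2^{-m}\|f\|_1$, all within $\widetilde O(dm)$ bit operations.
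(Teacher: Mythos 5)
The paper does not prove this proposition. It is stated as a black-box citation to Pan's Theorem~2.1.1 \cite{Pjsc02}; only the accompanying Remark~\ref{rem:pan} (rescaling to a slightly larger disk) carries a short proof. So there is no ``paper's own proof'' to compare against, and I can only assess your sketch against what the cited result actually requires.

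Your outline captures the broad shape of the Schönhage/Pan splitting-circle strategy, but as a purported proof it has a concrete gap in the very first step. You assert that one can always find a root-free annulus $r \leq |z| \leq R$ with $R/r$ bounded away from $1$ by a polylogarithmic factor while splitting the root set roughly in two. That is false in general: if $f$ has a tight cluster of roots (and the proposition does not assume square-freeness, let alone well-separated roots), the widest root-free annulus compatible with a balanced split can be exponentially thin in $d$. Handling arbitrarily thin isolation annuli --- via repeated Graeffe squaring to widen them, and then carefully lifting the factorization back down --- together with the attendant precision growth (the cost of the factor-extraction Newton iteration is governed by $\log(1/|\mathrm{Res}(f_{\mathrm{in}},f_{\mathrm{out}})|)$, which depends on the annulus width and the cluster structure) is precisely the technical heart of Schönhage's and Pan's analysis. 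Once you drop the polylog-width assumption, your estimates of ``additive $O(\log d)$ precision loss per level'' and ``$O(\log m)$ Newton iterations'' no longer come for free, and the whole book-keeping you describe as the ``main obstacle'' is exactly the content you would need to supply. The hypothesis $m \geq d\log d$ in the statement is there to make this book-keeping close; it is not a cosmetic normalization. In short, the sketch is a fair impression of the algorithm's structure, but it assumes away the part of Pan's paper that makes the $\widetilde O(dm)$ bound hold.
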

\begin{Remark}
  \label{rem:pan}
  The theorem also holds with the same complexity for a polynomial $h$
  that has all its roots in the disk centered at the origin and of
  radius $c2^{m/d}$ for a constant $c \geq 1$, such as in the polynomials $h$ in
  Algorithm~\ref{alg:rootfinding} (Lemma~\ref{lem:compact}).
\end{Remark}
\begin{proof}[Proof of the remark]
  Let $f(Y) = h(c2^{m/d}Y)$. Then $f$ has all its
  roots in the unit disk and $\|f\|_1 \leq
  c^d2^m\|h\|_1$. Computing the approximate factorization $\widetilde
  f$ of $f$ such that $\|f-\widetilde f\|_1 \leq 2^{-2m-d\log_2
  c}\|f\|_1$ can be done in $\widetilde O(d m)$ since $m\geq
  d\log d$. Then with the change of variable $Y = X 2^{-m/d}/c$, we have
  $\|h(X) - \widetilde f(X 2^{-m/d}/c)\|_1 \leq 2^{-m} \|h\|_1$.
\end{proof}

This theorem does not directly give a bound on the distances between the
roots. Schönhage shows \cite[\S 19]{Srr82} that in the worst case
$|\zeta_k - z_k| < 4 \cdot 2^{-m/d}$. This bound can be improved for
well-conditioned roots. In our case, we will also need a bound on the
distance between some pairs of roots of two polynomials that have
different degrees. We will use Kantorovich theory for the bounds in these
cases (Section~\ref{sec:kantorovich} and~\ref{sec:approximation}).

% For
% completeness, we prove in Lemma~\ref{lem:rootapproximation} that if the
% condition number of $f$ at a root $z_k$ is $\kappa$ and $m >
% \log_2(d^2\kappa)$, then
% under the condition of Proposition~\ref{pro:pan}, the approximate root
% $z_k$ satisfies $|\zeta_k - z_k| < \frac 1 {d^2\kappa}$.
% TODO: update precise bounds according to Lemma

Note that Remark~\ref{rem:pan} requires a bound on the modulus of
the roots of the polynomial. For that, we will use the classical
Fujiwara bound.

% TODO: replace F by f when necessary
% Fujiwara https://www.jstage.jst.go.jp/article/tmj1911/10/0/10_0_167/_article

\begin{Proposition}[{\cite{Ftmj16}}]
  \label{pro:fujiwara}
  Let $f=\sum_{k=0}^d f_k X^k$ be a polynomial of degree $d$. Then the
  moduli of the roots of $f$ are lower or equal to $2\max_{1\leq k\leq
d} \sqrt[k]{\left|\frac{f_{d-k}}{f_d}\right|}$
\end{Proposition}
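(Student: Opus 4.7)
The plan is to prove Fujiwara's bound by the standard direct estimate: show that if $|z|$ exceeds $2M$, where $M = \max_{1\leq k\leq d}\sqrt[k]{|f_{d-k}/f_d|}$, then $f(z)\neq 0$. This contrapositive is the natural route because evaluating $f$ at such a $z$ reduces, after dividing by $f_d z^d$, to comparing $1$ with a geometric-type tail, which is designed to be bounded by $1$ precisely because of the factor $2$.

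First I would observe that the definition of $M$ immediately gives the coefficient-wise bound $|f_{d-k}| \leq |f_d| M^k$ for every $1 \leq k \leq d$. This is just taking $k$-th powers in the definition of $M$ and is the only structural information about the coefficients we need.

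Next I would fix any $z$ with $|z| > 2M$ and write
\[
\frac{f(z)}{f_d z^d} = 1 + \sum_{k=1}^d \frac{f_{d-k}}{f_d}\, z^{-k}.
\]
Using the coefficient bound from the first step, the sum is bounded in modulus by $\sum_{k=1}^d (M/|z|)^k$. Since $|z| > 2M$, we have $M/|z| < 1/2$, so the sum is strictly less than $\sum_{k=1}^\infty (1/2)^k = 1$. By the reverse triangle inequality $|f(z)/(f_d z^d)| > 0$, hence $f(z) \neq 0$. Thus no root $z$ can satisfy $|z| > 2M$, which gives the claimed inequality.

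There is no real obstacle here: this is a classical one-line argument once the coefficient bound is extracted from the definition of $M$. The only mild care needed is the strict inequality in the geometric sum (the factor $2$ leaves slack, so the partial sum truly stays $< 1$), and handling the edge case where some $f_{d-k}$ vanishes, which only makes the bound easier. The hypothesis $\deg f = d$ ensures $f_d \neq 0$, so the normalization by $f_d z^d$ is legitimate.
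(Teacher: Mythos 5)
Your proof is correct. The paper does not give a proof of this proposition --- it simply cites Fujiwara's 1916 paper --- so there is no internal argument to compare against. Your argument is the standard one: normalize by $f_d z^d$, extract the coefficient bound $|f_{d-k}|\leq|f_d|M^k$ from the definition of $M$, and observe that for $|z|>2M$ the tail $\sum_{k=1}^d(M/|z|)^k$ is dominated by a geometric series with ratio $<1/2$, hence strictly less than $1$, which forces $f(z)\neq 0$. The edge cases you flag (some $f_{d-k}=0$, or $M=0$) are handled cleanly by the same estimate. One small remark: the classical Fujiwara bound as usually stated has an extra factor $1/2$ inside the last radical, $\sqrt[d]{|f_0/(2f_d)|}$, which gives a marginally tighter constant; the paper's version drops that refinement, and your proof establishes exactly the (slightly weaker) version stated here.
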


Applying this bound on the polynomial returned by an hyperbolic
approximation, slightly perturbed, will allow us to verify that the assumptions of
Remark~\ref{rem:pan} are satisfied (Lemma~\ref{lem:compact}).

\subsection{Certification of the roots}
\label{sec:kantorovich}

Several approaches in the literature guarantee that a neighborhood of a
point contains a unique root of a given polynomial. We may cite notably
Kantorovich criterion \cite[\S 3.2]{Dbook06}, Smale's alpha theorem
\cite[\S 3.3]{Dbook06}, Newton interval method \cite[Theorem
5.1.7]{Nbook90}, Pellet's test \cite{MSWjsc15}, Pellet's test combined
with Graeffe iteration \cite{BSSYjsc18,IPYicms18}, Cauchy's integral
theorem \cite{IPissac20, IPmacis20}, and others \cite{Rjcam03}, \ldots 
%among others.
A first crude bound from the literature to guarantee that a disk centered at a point $x$ contains
a root of a polynomial $f$ is the following.

\begin{Proposition}[{\cite[Theorem 6.4e]{Hbook74}, \cite[Theorem
  9]{BAna00}}]
  \label{pro:crude}
  Let $f$ be a polynomial of degree $d$, and $x$ a complex point. Let
  $r_k = \sqrt[k]{k! \binom{d}{k}\left|\frac{f(x)}{f^{(k)}(x)}\right|}$. Then, for
  all $1\leq k \leq d$, the disk $D(x, r_k)$ contains a root of $f$. In
  particular, the disk $D(x, d |f(x)/f'(x)|)$ contains a root of $f$.
\end{Proposition}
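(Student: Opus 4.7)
The plan is to write $f$ in factored form and reduce the claim to an inequality about elementary symmetric polynomials in the quantities $y_j = x - \zeta_j$.

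First I would factor $f(X) = c \prod_{j=1}^d (X-\zeta_j)$ and let $y_j = x - \zeta_j$. Expanding $f(x+t) = c\prod_j(y_j + t) = c\sum_{k=0}^d e_{d-k}(y_1,\dots,y_d)\, t^k$, so that comparing with the Taylor expansion gives
\[
\frac{f^{(k)}(x)}{k!} = c\, e_{d-k}(y_1,\dots,y_d), \qquad f(x) = c\prod_j y_j.
\]
Assuming $f(x)\neq 0$ (otherwise the disks trivially contain a root), we divide and use the standard identity $e_{d-k}(y)/\prod_j y_j = e_k(1/y_1,\dots,1/y_d)$, which follows by pairing each subset of size $d-k$ with its complement. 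This gives the clean formula
\[
\frac{f^{(k)}(x)}{k!\,f(x)} = e_k\!\left(\tfrac{1}{y_1},\dots,\tfrac{1}{y_d}\right).
\]

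Next I would bound the right-hand side by the crude estimate $|e_k(z_1,\dots,z_d)| \leq \binom{d}{k}(\max_j |z_j|)^k$, applied with $z_j = 1/y_j$. This yields
\[
\left|\frac{f^{(k)}(x)}{k!\,f(x)}\right| \leq \frac{\binom{d}{k}}{\min_j |y_j|^k}.
\]
Rearranging gives $\min_j |y_j|^k \leq k!\binom{d}{k}\,|f(x)/f^{(k)}(x)| = r_k^k$, so some root $\zeta_j$ lies in $D(x, r_k)$. The special case $k=1$ reads $r_1 = d\,|f(x)/f'(x)|$, recovering the second statement.

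The argument is essentially routine once the symmetric-function identity is in place; the only mildly delicate step is the passage from $e_{d-k}(y)$ to $e_k(1/y)$, but this is a direct consequence of the involution sending a subset to its complement. No special assumption on the leading coefficient $c$ is needed since it cancels in the ratio $f^{(k)}/f$, and the case $f(x)=0$ is trivial. I would present the proof in this order: Taylor expansion, symmetric-function identity, bound on $e_k$, conclusion.
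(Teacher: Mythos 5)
Your proof is correct, and it is the standard argument for this classical fact; the paper does not prove Proposition~\ref{pro:crude} itself but cites Henrici and Bini--Fiorentino, and the argument you give (Taylor expansion at $x$, the identity $f^{(k)}(x)/(k!\,f(x)) = e_k(1/y_1,\dots,1/y_d)$ with $y_j = x-\zeta_j$, then the term-count bound $|e_k(z)|\le\binom{d}{k}(\max_j|z_j|)^k$) is essentially Henrici's derivation. Two small points worth making explicit when writing it up: the case $f^{(k)}(x)=0$ is, like $f(x)=0$, trivial (the radius is infinite), and the conclusion $\min_j|x-\zeta_j|\le r_k$ places a root in the closed disk of radius $r_k$, which is the intended reading of $D(x,r_k)$ here.
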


%In Algorithm~\ref{alg:rootfinding}, 

With some additional conditions, Kantorovich criterion provides a
smaller radius to guarantee that a disk contains a root $f$.

% to find
% isolation disks. We recall here the results that we use, specifically
% for the case of a complex univariate function.

\begin{Proposition}[{\cite[Theorem 88]{Dbook06}}]
  \label{pro:existence}
  Given a function $f$ of class $C^2$, and a point $x$ such that
  $f'(x) \neq 0$, let $\beta = |f(x)/f'(x)|$, and $K =
  \sup_{|y-x|\leq 2\beta}|f''(y)/f'(x)|$. If $2\beta K \leq 1$, then
  $f$ has a root in the disk $D(x, 2|f(x)/f'(x)|)$.
\end{Proposition}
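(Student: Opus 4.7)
My plan is to prove this by following the classical Kantorovich strategy: show that the Newton iteration starting from $x$ converges to a root inside the claimed disk, using a real quadratic majorant to control the complex iterates.

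First I would set up the Newton sequence $x_0 = x$ and $x_{k+1} = x_k - f(x_k)/f'(x_k)$, and in parallel introduce the real quadratic $h(t) = \beta - t + \tfrac{K}{2} t^2$. Note $h(0) = \beta$, $h'(0) = -1$ and $h''\equiv K$, so $h$ is the natural majorant of $f/f'(x)$ at $x$ consistent with the hypotheses. Under the assumption $2\beta K \leq 1$, the discriminant $1 - 2\beta K$ is nonnegative, so $h$ has real roots, the smaller being $t_* = (1 - \sqrt{1-2\beta K})/K$. Using the elementary inequality $1 - \sqrt{1-s} \leq s$ on $[0,1]$, one obtains $t_* \leq 2\beta$. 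I would then let $(t_k)$ be the Newton iterates of $h$ from $t_0 = 0$; since $h$ is convex with $h(0) > 0 > h'(0)$, this sequence is increasing and converges to $t_*$, with the clean identity $h(t_{k+1}) = \tfrac{K}{2}(t_{k+1}-t_k)^2$ coming from $h$ being exactly quadratic.

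The technical heart is a comparison lemma: by induction on $k$, both $|x_{k+1} - x_k| \leq t_{k+1} - t_k$ and $|x_k - x| \leq t_k$. The induction step uses the Taylor expansion with integral remainder
\[
 f(x_{k+1}) = f(x_k) + f'(x_k)(x_{k+1}-x_k) + \int_0^1 (1-s) f''(x_k + s(x_{k+1}-x_k)) (x_{k+1}-x_k)^2 \, ds,
\]
whose first two terms cancel by definition of the Newton step, leaving $|f(x_{k+1})| \leq \tfrac{1}{2} \sup|f''| \cdot |x_{k+1}-x_k|^2$. Combined with the bound $|f'(x_{k+1})| \geq |f'(x)|(1 - K|x_{k+1}-x|)$ obtained from integrating the second derivative along the segment from $x$ to $x_{k+1}$, and the fact that $|x_{k+1}-x| \leq t_{k+1} \leq t_* \leq 2\beta$, one gets exactly the same recursive estimate as the one satisfied by $t_{k+2}-t_{k+1}$, closing the induction. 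Here I am using that the segment stays inside $\overline{D(x,2\beta)}$, where the hypothesis on $K$ applies.

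Once the comparison is established, summing the telescoping bound $\sum_k |x_{k+1}-x_k| \leq \sum_k (t_{k+1}-t_k) = t_*$ shows that $(x_k)$ is Cauchy and converges to some $\zeta \in \overline{D(x, t_*)} \subseteq \overline{D(x, 2\beta)}$. Since $f'$ does not vanish on the open disk $D(x,2\beta)$ (by the lower bound argument above, when $2\beta K < 1$) and $|x_{k+1}-x_k| = |f(x_k)/f'(x_k)| \to 0$, continuity gives $f(\zeta) = 0$. The main obstacle, and the step most prone to off-by-one estimates, is the inductive comparison: one must carefully synchronize the bound on $|f(x_{k+1})|$ coming from Taylor's theorem with the lower bound on $|f'(x_{k+1})|$ and the exact quadratic identity for $h$, while staying inside the disk where the uniform bound on $f''/f'(x)$ is valid. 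The boundary case $2\beta K = 1$ requires a minor separate argument, since then $t_* = 2\beta$ and the root may lie on the boundary circle rather than strictly inside.
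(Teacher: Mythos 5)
The paper does not prove this proposition: it is cited verbatim from Dedieu's book (Theorem~88) and used as a black box, so there is no in-text proof to compare against. Your reconstruction is the standard Kantorovich semi-local convergence argument via a quadratic real majorant, and the overall plan is correct: the base case $|x_1-x_0|=\beta=t_1-t_0$ holds, the telescoping comparison keeps the iterates in $\overline{D(x,2\beta)}$ so the hypothesis on $K$ applies, Taylor's theorem with integral remainder gives $|f(x_{k+1})|\leq \tfrac{K}{2}|f'(x)|(t_{k+1}-t_k)^2 = |f'(x)|\,h(t_{k+1})$, and the lower bound $|f'(x_{k+1})|\geq |f'(x)|(1-Kt_{k+1})=-|f'(x)|\,h'(t_{k+1})$ closes the induction because $h$ is exactly the quadratic for which the Newton recursion $t_{k+2}-t_{k+1}=h(t_{k+1})/(-h'(t_{k+1}))$ holds.

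One small inefficiency worth flagging: your argument that the limit $\zeta$ is a root (and the ``minor separate argument'' you invoke for the boundary case $2\beta K=1$) are both unnecessary detours. You already have $|f(x_{k+1})| \leq |f'(x)|\,h(t_{k+1})$ from the Taylor step, and $h(t_k)\to h(t_*)=0$ monotonically; since $f$ is continuous on the compact disk $\overline{D(x,2\beta)}$, $f(\zeta)=\lim_k f(x_k)=0$ directly, with no appeal to $f'$ being nonvanishing and no case split at $2\beta K=1$. (Boundedness of $f'$ on the compact disk, not nonvanishing, is what your phrasing really needs anyway.) With that simplification the proof is uniform in $2\beta K\leq 1$ and delivers the root in $\overline{D(x,t_*)}\subseteq\overline{D(x,2\beta)}$.
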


The previous propositions are useful to find a disk that contains a root
of $f$, but they don't guarantee that the disk contains a unique root of
$f$.  In order to prove that Algorithm~\ref{alg:rootfinding} terminates and to
bound its complexity, we use the lower bound from Kantorovich theory on
the size of the basin of attraction of the roots of $f$. More precisely,
for each root $\zeta$, we bound the size of a disk containing $\zeta$
where the Newton method always converges toward $\zeta$.

\begin{Proposition}[{\cite[Theorem 85]{Dbook06}}]
  \label{pro:unicity}
  Given a function $f$ of class $C^2$, and a root $\zeta$ such that
  $f'(\zeta) \neq 0$. If there exists $r>0$ such that $2rK \leq 1$, with $K =
  \sup_{|x-\zeta|\leq r}|f''(x)/f'(\zeta)|$, then $\zeta$ is the unique
  root of $f$ in the disk $D(\zeta, r)$. Moreover, for any $x_0 \in
  D(\zeta, r)$, the Newton sequence defined by $x_{n+1} = x_n -
  f(x_n)/f'(x_n)$ converges toward $\zeta$.
\end{Proposition}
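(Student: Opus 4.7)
The plan is to give a standard Kantorovich-style argument: first prove uniqueness via the Taylor remainder, then establish that the Newton map is a geometric contraction toward $\zeta$ on $D(\zeta,r)$.

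For uniqueness, I would argue by contradiction. Assume there exists another root $\eta \in D(\zeta,r)$ with $\eta \neq \zeta$. Apply Taylor's formula with integral remainder between $\zeta$ and $\eta$:
\begin{equation*}
 0 = f(\eta) = f(\zeta) + f'(\zeta)(\eta-\zeta) + \int_0^1 (1-t)\, f''(\zeta + t(\eta-\zeta))\,(\eta-\zeta)^2\, dt.
\end{equation*}
Since $f(\zeta)=0$, dividing by $f'(\zeta)$ and applying the bound $|f''(y)/f'(\zeta)|\leq K$ on $D(\zeta,r)$ yields $|\eta-\zeta|\leq \tfrac{K}{2}|\eta-\zeta|^2$, whence $|\eta-\zeta|\geq 2/K$. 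But $2rK\leq 1$ gives $2/K\geq 4r > r$, contradicting $\eta \in D(\zeta,r)$.

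For the Newton convergence, let $N(x)=x - f(x)/f'(x)$ and fix $x_0 \in D(\zeta,r)$. I would first show that $f'$ does not vanish on $D(\zeta,r)$ and in fact $|f'(x)|\geq |f'(\zeta)|/2$. This follows from the Taylor expansion
\begin{equation*}
 f'(x) - f'(\zeta) = \int_0^1 f''(\zeta + t(x-\zeta))(x-\zeta)\,dt,
\end{equation*}
which gives $|f'(x)-f'(\zeta)|\leq K|f'(\zeta)|\,|x-\zeta|\leq Kr\,|f'(\zeta)|\leq |f'(\zeta)|/2$. Next, using the Taylor identity $-f(x_n)=f'(x_n)(\zeta-x_n)+\int_0^1(1-t)f''(x_n+t(\zeta-x_n))(\zeta-x_n)^2\,dt$, I would derive
\begin{equation*}
 x_{n+1}-\zeta = \frac{1}{f'(x_n)}\int_0^1 (1-t)\, f''(x_n+t(\zeta-x_n))\,(x_n-\zeta)^2\, dt,
\end{equation*}
so $|x_{n+1}-\zeta|\leq \frac{K|f'(\zeta)|}{2|f'(x_n)|}|x_n-\zeta|^2 \leq K\,|x_n-\zeta|^2 \leq Kr\,|x_n-\zeta|\leq \tfrac{1}{2}|x_n-\zeta|$.

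Iterating this inequality, the sequence stays inside $D(\zeta,r)$ and satisfies $|x_n-\zeta|\leq 2^{-n}r$, so it converges to $\zeta$. I expect no real obstacle beyond bookkeeping; the only delicate point is the lower bound $|f'(x_n)|\geq |f'(\zeta)|/2$, which must be in place before one can invoke the quadratic Newton estimate, and this is precisely where the hypothesis $2rK\leq 1$ is used twice (once to keep $f'$ away from zero, once to ensure the contraction factor).
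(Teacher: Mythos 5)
The paper does not prove this proposition at all; it is imported verbatim as Theorem~85 of Dedieu's textbook \cite{Dbook06}, so there is no in-paper argument to compare against. Your proof is a correct and entirely standard Kantorovich-style argument: Taylor with integral remainder gives both the uniqueness bound $|\eta-\zeta| \le \tfrac{K}{2}|\eta-\zeta|^2$ and the quadratic Newton estimate $|x_{n+1}-\zeta| \le K|x_n-\zeta|^2$, and the hypothesis $2rK\le 1$ both pins $|f'|$ away from zero on $D(\zeta,r)$ and makes the Newton map a $\tfrac12$-contraction toward $\zeta$ there. Two small bookkeeping points, both of which you anticipate: the step ``whence $|\eta-\zeta|\ge 2/K$'' implicitly assumes $K>0$ (if $K=0$ the inequality $1\le \tfrac K2|\eta-\zeta|$ reads $1\le 0$, so the contradiction is immediate without dividing); and the inductive invariant $x_n\in D(\zeta,r)$ should be stated explicitly before applying the estimates at step $n$, since the bound on $|f'(x_n)|$ and the inequality $|x_n-\zeta|\le r$ are both used at each iteration. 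Neither affects correctness.
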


\begin{Remark}
  \label{rem:distinct}
  If $f$ is a polynomial of degree $d$, and  $s \geq \sup_{x\in
D(0,1)} |f''(x)|$, then
%is bounded by $d^2\|f\|_1$ for $x$ in the unit disk. In particular,
a consequence of Proposition~\ref{pro:unicity} is that the set of disks $D(\zeta,
1/(2s\kappa_1(f)))$, for all roots $\zeta$ of $f$ in the unit
disk, are pairwise distinct.
\end{Remark}

For any complex point $x$, we also prove the following lemma to have a
criterion guaranteeing that a ball around $x$ is included in a basin
of attraction of a root of $f$.

\begin{Lemma}
  \label{lem:distinct}
  Given a function of class $C^2$ and a point $x \in \mathbb C$ such
  that $f(x) \neq 0$. Let $r > 2 |f(x)/f'(x)|$ and $K > |f''(y)/f'(x)|$ for
  all $y \in D(x, 4r)$. If $5rK \leq 1$ then, $f$ has a unique root $\zeta$
  in $D(x, r)$, and for all $x_0 \in D(x,r)$, the Newton sequence
  starting from $x_0$ converges to $\zeta$.
\end{Lemma}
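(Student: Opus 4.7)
The plan is to reduce the lemma to the two Kantorovich-type statements recalled in Section~\ref{sec:kantorovich}: Proposition~\ref{pro:existence} for the existence of a root near $x$, and Proposition~\ref{pro:unicity}, applied at the root, for uniqueness in $D(x,r)$ together with Newton convergence. The only real work lies in transferring bounds that are expressed with respect to $f'(x)$ over to bounds expressed with respect to $f'(\zeta)$.

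Set $\beta=|f(x)/f'(x)|$, so by assumption $2\beta<r$. Since $D(x,2\beta)\subset D(x,4r)$, the hypothesis on $K$ gives $\sup_{|y-x|\le 2\beta}|f''(y)/f'(x)|\le K$, and $2\beta K< rK\le \tfrac15\le 1$. Proposition~\ref{pro:existence} then produces a root $\zeta$ of $f$ with $|\zeta-x|\le 2\beta<r$, so $\zeta\in D(x,r)$.

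Next I would quantify how $f'$ changes between $x$ and $\zeta$. Integrating $f''$ along the segment $[x,\zeta]\subset D(x,4r)$ yields
\[
|f'(\zeta)-f'(x)|\le |\zeta-x|\sup_{[x,\zeta]}|f''|\le r\cdot K|f'(x)|\le \tfrac15|f'(x)|,
\]
hence $|f'(\zeta)|\ge \tfrac45|f'(x)|$. Because $D(\zeta,2r)\subset D(x,|\zeta-x|+2r)\subset D(x,3r)\subset D(x,4r)$, the same pointwise inequality transfers to
\[
\sup_{|y-\zeta|\le 2r}\left|\frac{f''(y)}{f'(\zeta)}\right|\le \frac{K|f'(x)|}{(4/5)|f'(x)|}=\frac{5K}{4}.
\]
Now apply Proposition~\ref{pro:unicity} at $\zeta$ with radius $r':=2r$: the required inequality is $2r'\cdot\tfrac{5K}{4}=5rK\le 1$, which is exactly our hypothesis. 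Therefore $\zeta$ is the unique root of $f$ in $D(\zeta,2r)$, and the Newton sequence converges to $\zeta$ from every starting point in $D(\zeta,2r)$. Since $|\zeta-x|<r$ gives $D(x,r)\subset D(\zeta,2r)$, uniqueness in $D(x,r)$ and convergence of Newton from every $x_0\in D(x,r)$ follow immediately.

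The only delicate step is the derivative-transfer estimate: one has to make sure that the factor $5/4$ picked up when replacing $f'(x)$ by $f'(\zeta)$ is absorbed cleanly by going from $rK\le 1/5$ to $5rK\le 1$, and that the enlarged disk $D(\zeta,2r)$ remains inside $D(x,4r)$ where the hypothesis on $K$ is available. Both are guaranteed precisely by the numerical constants in the statement ($2\beta<r$, $5rK\le 1$, and the hypothesis on $K$ being valid on $D(x,4r)$), so the argument closes with no slack to spare.
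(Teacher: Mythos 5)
Your proof is correct and follows essentially the same route as the paper's: invoke Proposition~\ref{pro:existence} at $x$ to locate a root $\zeta$ in $D(x,2\beta)\subset D(x,r)$, transfer the bound on $|f''/f'|$ from $f'(x)$ to $f'(\zeta)$ using $|f'(\zeta)|\ge\tfrac45|f'(x)|$, then apply Proposition~\ref{pro:unicity} at $\zeta$ with radius $2r$, checking $4r\cdot\tfrac{5K}{4}=5rK\le1$ and $D(x,r)\subset D(\zeta,2r)\subset D(x,4r)$. The extra care you take spelling out the disk inclusions and the integration step matches the paper's argument, just written out more explicitly.
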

\begin{proof}
  According to Proposition~\ref{pro:existence}, $f$ has a root $\zeta$
  in $D(x,r)$. Then, $|f'(\zeta)| \geq |f'(x)| - |x-\zeta|\sup_{y\in
  D(x,r)}|f''(y)|$. This implies that $f'(\zeta) \geq |f'(x)|(1-rK) \geq 4|f'(x)|/5$. Letting
  $\widetilde K = 5K/4$, we have that $\widetilde K \geq
  |f''(y)/f'(\zeta)|$ for all $y \in D(x, 4r) \supset D(\zeta, 2r)$. Remark
  that $4 r \widetilde K \leq 1$, such that using
  Proposition~\ref{pro:unicity}, this implies that for all $x_0 \in
  D(\zeta, 2r) \supset D(x,r)$, the Newton sequence starting from $x_0$
  converges toward $\zeta$.

\end{proof}

\subsection{Geometric range searching}
\label{sec:range}

Our data structure can be seen as a piecewise polynomial approximation.
As such, when doing multipoint evaluation, we will need to report the points
that fall in a disk. This problem can be solved efficiently using
classical range searching and point intersection searching algorithms.

% reduced to report points
% that fall in a halfspace in $\mathbb R^3$ and can be done efficiently.

\begin{Proposition}[{\cite[\S 5.2, Table 7]{AEcm99}}]
  \label{pro:disks}
  Given $n$ points $x_i$ in $\mathbb C$, it is possible to compute a data
  structure in $\widetilde O(n)$ operations such that for any disk $D$,
  returning the list of points $x_i$ contained in $D$ can be done in $O(k + \log
  n)$ operations, where $k$ is the number of points in $D$.
\end{Proposition}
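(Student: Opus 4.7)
The plan is to reduce planar disk range reporting to halfspace range reporting in $\mathbb R^3$ via the classical paraboloid lift, and then invoke a known data structure for 3D halfspaces. I would associate each complex point $x_j = a_j + i b_j$ with the lifted point $p_j = (a_j, b_j, a_j^2 + b_j^2) \in \mathbb R^3$. A disk $D(\gamma, \rho)$ with centre $\gamma = u + iv$ and radius $\rho$ consists of the points $(a, b)$ satisfying $(a-u)^2 + (b-v)^2 \leq \rho^2$, which expands to
\[
(a^2 + b^2) \;-\; 2ua \;-\; 2vb \;\leq\; \rho^2 - u^2 - v^2.
\]
This is precisely the condition that $p_j$ lies in the closed halfspace $\{(a,b,z) : z - 2ua - 2vb \leq \rho^2 - u^2 - v^2\}$ of $\mathbb R^3$. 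Hence reporting the points of $\{x_j\}$ contained in the query disk is equivalent to reporting the lifted points $p_j$ contained in a single query halfspace in $\mathbb R^3$, and the data structure only needs to be built once for the fixed lifted set.

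With this reduction in hand, the next step is to apply a classical data structure for 3D halfspace range reporting, for instance Matou\v{s}ek's shallow-cutting construction (or the earlier Chazelle--Preparata convex-layers approach). It preprocesses $n$ points in $\widetilde O(n)$ time and $O(n)$ space, and answers each halfspace reporting query in $O(k + \log n)$ time, where $k$ is the output size. The structure maintains a logarithmic hierarchy of shallow cuttings of the lower envelope dual to the input; each cell of a cutting stores a conflict list recording its interactions with a slightly deeper cutting. A query first locates the relevant top-level cell in $O(\log n)$ time through a planar point location structure over the projection of the cutting, and then descends the hierarchy by scanning conflict lists, reporting the $k$ output points in $O(k)$ additional operations.

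The hard part in a fully self-contained proof is the 3D halfspace data structure itself: one must show that the $\leq k$-level of an arrangement of $n$ planes in $\mathbb R^3$ admits a $(1/r)$-shallow cutting of size $O(r)$, that these cuttings can be built in expected linear time by a randomized incremental construction, and that their conflict lists can be stitched so that total space remains $O(n)$ while query time remains $O(k + \log n)$. Since Proposition~\ref{pro:disks} is cited verbatim from the Agarwal--Erickson survey, we treat the 3D halfspace reporter as a black box and the proof reduces, as above, to the paraboloid lift together with the check that queries preserve their reporting size.
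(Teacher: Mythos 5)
The paper gives no proof here---it cites this proposition as a black box from the Agarwal--Erickson survey, and your paraboloid-lift reduction to 3D halfspace range reporting via shallow cuttings is precisely the standard argument underlying that citation. One small imprecision: the circa-1999 structures achieving $O(k+\log n)$ query time use $O(n\log\log n)$ space rather than $O(n)$ (linear space for this query bound came only later, with Afshani--Chan), but the proposition claims only $\widetilde O(n)$ preprocessing time, which your reduction delivers.
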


Moreover, when we isolate the roots of a polynomial $f$, we reduce the
problem to isolate the roots in each disk of an $N$-hyperbolic covering.
Because those disks overlap, we need to remove redundant boxes. For
that, we use fast rectangle-rectangle searching techniques.

\begin{Proposition}[{\cite[\S 3.6]{AEcm99}}]
  \label{pro:rectangles}
  Given $n$ rectangles $r_i$ in the plane, it is possible to compute a data
  structure in $\widetilde O(n)$ such that for any rectangle $r$,
  returning the list of rectangle $r_i$ intersecting $r$ can be done in
  $\widetilde O(k + \log n)$ operations, where $k$ is the number of
  rectangles intersecting $r$.
\end{Proposition}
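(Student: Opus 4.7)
The plan is to reduce rectangle--rectangle intersection queries to a four-dimensional orthogonal range reporting problem, and then to apply classical multi-level range tree techniques.

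First, I would encode each axis-aligned rectangle $r_i = [a_i,b_i]\times[c_i,d_i]$ as the point $p_i = (a_i,b_i,c_i,d_i) \in \mathbb{R}^4$. A query rectangle $r = [a,b]\times[c,d]$ intersects $r_i$ if and only if the four one-dimensional inequalities $a_i \leq b$, $b_i \geq a$, $c_i \leq d$, $d_i \geq c$ all hold; equivalently, if and only if $p_i$ lies in the axis-aligned box $(-\infty,b]\times[a,+\infty)\times(-\infty,d]\times[c,+\infty)$. So the task reduces to reporting which of the $n$ points $p_1,\dots,p_n$ fall inside a four-dimensional (half-infinite) orthogonal box, a standard range reporting problem.

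Next, I would build a four-level range tree on the points $p_i$. Each outer node is sorted by its level's coordinate and stores, as an associated structure, a recursively-built range tree on the remaining three coordinates. The standard analysis gives total size and preprocessing $O(n\log^3 n) = \widetilde O(n)$, after one global sort per coordinate. A query descends each level by splitting the search box into $O(\log n)$ canonical subsets and recursing into their associated structures, so the total navigation cost is $O(\log^4 n) = \widetilde O(\log n)$, and reporting the $k$ matching points adds an amortized $O(k)$ since every canonical subset we enumerate is entirely contained in the query region. Altogether the query runs in $\widetilde O(\log n + k)$ time.

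The only subtlety is the usual one for range trees: verifying that the canonical subsets at each level are identified in logarithmic time and that reporting contributes only amortized constant cost per output point. This follows from the standard invariant that the query box is split into $O(\log n)$ canonical pieces per level, each of which is either fully contained in or fully disjoint from the relevant slab. Fractional cascading can remove one logarithmic factor, but is unnecessary for the stated $\widetilde O$ bound.
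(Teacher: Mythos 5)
Your proposal is correct, and it is essentially the standard argument underlying the result cited in the paper. The paper does not spell out a proof — it simply cites Agarwal and Erickson's survey — and the reduction you give (encoding each rectangle $[a_i,b_i]\times[c_i,d_i]$ as the point $(a_i,b_i,c_i,d_i)\in\mathbb{R}^4$, observing that two axis-aligned rectangles intersect iff their $x$-projections and $y$-projections each overlap, and thereby turning the query into a single four-dimensional half-open orthogonal range query answered by a multi-level range tree) is exactly the textbook technique that the survey presents for this type of intersection-reporting problem. Your size, preprocessing, and query bounds ($O(n\log^3 n)$ and $O(\log^4 n + k)$, both absorbed into the stated $\widetilde O$ bounds) are the standard range-tree figures, and you are right that fractional cascading is optional here since the claim only asks for $\widetilde O$. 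No gap.
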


Note that in an $N$-hyperbolic covering, each disk intersect at most $10$
other disks of the covering. Indeed the disks centered in the disk
$D(0,\frac 1 2)$, it intersects at most $10$ other disks. For $n\geq1$ and a disk with a center between the circle
of radius $r_n$ and a circle of radius $r_{n+1}$, it intersects
$2$ disks of the hyperbolic covering that have their centers in the same
ring. Then it can intersects at most $2$ other disks coming from the
inner adjacent ring, and $4$ other from the outer adjacent ring. Thus,
in Algorithm~\ref{alg:rootfinding}, this will guarantee that each query
will be done in $\widetilde O(\log n)$ (see
Section~\ref{sec:proofrootfinding}).

% result that we both use and improve in
% Algorithm~\ref{alg:rootfinding} is the 
% 
% Theorem of fast approximate factorization
% 
% Theorems of Kantorovich based on bound on the second derivative

\section{Computation of the hyperbolic approximation}
\label{sec:datastructure}

% TODO: check page number overlapping algorithm layout
% 
% TODO: change comments to emph and with \#
% 
% TODO: add python code in appendix

\begin{algorithm}
    \DontPrintSemicolon
    \KwInput{A polynomial $f(X) = \sum_{k=0}^d f_kX^k$ of degree $d$
    with $\|f\|_1 \leq 2^\tau$, $\tau\geq 1$,\\ \hspace{1.35cm}and an
  integer $m \geq 1$}
    \KwOutput{An $m$-hyperbolic approximation of $f$ (see
    Definition~\ref{def:approximation})}
    %A list of pairs $(g_k, a_k)$ where:
    %  \begin{itemize}
    %    \item $g_k$ is a polynomial of degree $\min(d, \log(1/\varepsilon))$
    %    \item $a_k$ is an affine map of the form $a_k(x) =
    %      (\gamma_k + \rho_k x) e^{i\alpha_k}$\\
    %          with $\gamma_k, \rho_k$ positive real numbers and $\alpha_k \in [0,2\pi[$
    %\end{itemize}
    %and such that for all $x \in D(0,1)$ there exists $k$ such that $x
    %\in a_k(D(0,1))$ and
    %  $$|g_k(a_k^{-1}(x)) - f(x)| < \|f\|_1\varepsilon$$
    %TODO: rewrite output
    %}
    %$m \gets \log_2(1/\varepsilon)$\;
    $\widetilde m \gets \min(m-1, d)$\;
    $N \gets \lceil\log_2(3ed/\widetilde m)\rceil$\;
    \For{$n$ from $0$ to $N-1$}{
      \tcc{Compute $(g_{n,k}, a_{n,k})$ for the disks covering $D(0,r_{n+1}) \setminus D(0, r_n)$}
      \tcc{The precision of the arithmetic operations is in
      $\varTheta(m+\tau+\log d)$}%\\ TODO: check if $\log d$ necessary too}
      \tcc{A. Compute $r_n, \gamma_n, \rho_n$ and $K_n$ for the $a_{n,k}(X) = (\gamma_n + \rho_nX)e^{i2\pi \frac k {K_n}}$}
      $r_n \gets 1 - 1/2^n$\;
      $r_{n+1} \gets 1 - 1/2^{n+1}$ if $n \leq N-2$ else $1$\;
      $\gamma_n \gets (r_{n} + r_{n+1})/2$\;
      $\rho_n \gets \frac 3 4(r_{n+1} - r_n)$\;
      %$\alpha_n \gets 2 \arccos\left(\frac{\gamma_{n}^2 + r_{n+1}^2 - \rho_n^2}{2\gamma_n r_{n+1}} \right)\text{ for } 0 \leq n \leq N-1$\;
      $K_n \gets \lceil \frac{3\pi}{\sqrt 5}
      \frac{r_{n+1}}{\rho_n}\rceil$\vspace{1em}\;
      %\tcp*{$K_n$ is in $\varTheta(2^n)$ (see Lemma~\ref{lem:bla})}\;
      %TODO: maybe remove truncation to $d_n$ since exponent on log d is not tracked anymore\;
      %TODO: Use polynomial composition of Ritzmann instead\;
      %Write $p(YZ) \mod Z^{K_n}-1$ 
      \tcc{B. Compute $g_{n,k}(X) \approx f\left((\gamma_n + \rho_n X)e^{i2\pi \frac k {K_n}}\right) \mod X^m$}
      \tcc{B.1. Truncate $f$ at $d_n$ such that $(\gamma_n+\rho_n)^{d_n+1} \leq 1/2^{m+1}$}
      $d_n \gets \min\left(d,\lceil  \frac 8 3 \log(2) (m+1) 2^n \rceil-1\right)$ if $n<N-1$ else $d$\;
      $p \gets f_0 + \cdots + f_{d_n} X^{d_n}$\;
      \tcc{B.2. Gather the coefficients in $Y$ of $p(YZ) \mod
      Z^{K_n}-1$,\\\phantom{B.2. }where $Y$ and $Z$ are symbolic variables}
      %$\sum_{k=0}^{K_n-1} p_k(Y^{K_n})Y^kZ^k \gets p(YZ) \mod Z^{K_n}-1$ \tcp*{$p_\ell$ has degree in $O(m)$}
      %$p_0(Y^{K_n}) + \cdots + p_{K_n-1}(Y^{K_n})Y^{K_n-1}Z^{K_n-1} \gets p(YZ) \mod Z^{K_n}-1$
      %$\sum_{k=0}^{K_n-1} p_k(Y^{K_n})Y^kZ^k \gets p(YZ) \mod Z^{K_n}-1$ \tcp*{$p_\ell$ has degree in $O(m)$}
      \For{$k$ from $0$ to $K_n-1$}{
        $p_k(Y^{K_n})Y^k \gets$ coefficients of $Z^k$ of $p(YZ) \mod Z^{K_n}-1$\;
      }
      \tcc{B.3. Compute $(\gamma_n + \rho_n X)^k \mod X^{\widetilde m}$}
      $q_0(X) \gets 1$\;
      \For{$k$ from $1$ to $K_n$}{
        $q_k(X) \gets q_{k-1}(X) \cdot (\gamma_n + \rho_n X) \mod
        X^{\widetilde m}$\;
      }
      \tcc{B.4. Compute $r_k(X) = p_k\left((\gamma_n + \rho_n
      X)^{K_n}\right) \cdot (\gamma_n + \rho_n X)^k \mod X^{\widetilde
  m}$}
      \For{$k$ from $0$ to $K_n-1$}{
        $r_{k,0} + \cdots + r_{k,{\widetilde m}-1}X^{{\widetilde m}-1}
        \gets p_k(q_{K_n}(X)) \cdot q_k(X) \mod X^{\widetilde m}$ \;
      }
      \tcc{B.5. Compute $g_{n,k}(X) = r_0(X) + \cdots + r_{K_n-1}(X)
      e^{i2\pi\frac k {K_n}(K_n-1)}$}
      \For{$\ell$ from $0$ to ${\widetilde m}-1$}{
        $s_\ell(Z) \gets r_{0,\ell} + \cdots + r_{K_n-1, \ell}Z^{K_n-1}$\;
        $g_{n,0,\ell}, \ldots, g_{n,K_n-1,\ell} \gets s_\ell(e^{i2\pi
        \frac 0 {K_n}}), \ldots, s_\ell(e^{i2\pi \frac {K_n-1} {K_n}})$\;
      }
      \tcc{B.6. Append the pair to the result list}
      \For{$k$ from $0$ to $K_n-1$}{
        $g_{n,k}(X) \gets$ $g_{n,k,0} + \cdots + g_{n,k,{\widetilde
        m}-1}X^{{\widetilde m}-1}$\;
        $a_{n,k}(X) \gets (\gamma_n + \rho_n X) e^{i2\pi \frac k {K_n}}$\;
        Append the pair $(g_{n,k}, a_{n,k})$ to the list $L$\;
      }
      % \;\;
      % \tcc{Computing $v_{k,\ell} = p\left((\gamma_n + \rho_ne^{i2\pi \frac k m})e^{i2\pi \frac \ell {K_n}}\right)$}
      % Rewrite $p$ as $p_0(X^{K_n}) + \cdots +
      % p_{K_n-1}(X^{K_n})X^{K_n-1}$ \tcp*{$p_\ell$ has degree in
      % $O(m)$}\;
      % \For{$k$ from $0$ to $m-1$}{
      %   \For{$\ell$ from $0$ to $K_n$}{
      %     $z_{k,\ell} \gets (\gamma_n + \rho_ne^{i2\pi \frac k m})^\ell$\;
      %   }
      % }
      % \For{$\ell$ from $0$ to $K_n-1$}{
      %   $y_{0, \ell}, \ldots, y_{m-1, \ell} \gets p_\ell(z_{0,K_n}),
      %   \ldots, p_\ell(z_{m-1, K_n})$\;
      % }
      % \For{$k$ from $0$ to $m-1$}{
      %   \tcc{Compute the polynomial $q_k$ such that $q_k(e^{i\frac
      %   {2\pi \ell} {K_n}})$ $ = p\left((\gamma_n + \rho_ne^{i\frac {2\pi k} m})e^{i\frac{2\pi\ell} {K_n}}\right)$}
      %   $q_k(Y) \gets y_{k, 0}z_{k,0} + \cdots +
      %   y_{k, K_n-1}z_{k,K_n-1}Y^{K_n-1}$ \;
      % $v_{k, 0}, \ldots, v_{k, K_n-1} \gets q_k(e^{i2\pi \frac 0 {K_n}}),
      %   \ldots, q_k(e^{i2\pi \frac {K_n-1} {K_n}})$\;
      % }
      % \tcc{Interpolating}
      % \For{$\ell$ from $0$ to $K_n-1$}{
      %   $g_\ell(X) \gets$ Polynomial interpolation on the point-value pairs $\left(e^{i2\pi \frac k m}, v_{k,\ell}\right)$\;
      %   $a_\ell(X) \gets (\gamma_n + \rho_n X) e^{i2\pi \frac \ell {K_n}}$\;
      %   Append the pair $(g_\ell, a_\ell)$ to the list $L$\;
      % }

      % TODO: maybe mention Taylor shift in comments
    }
    \Return $L$\;

  \caption{Hyperbolic approximation data structure}
  \label{alg:datastructure}
\end{algorithm}

\subsection{Properties}
First, given an integer $m$ and a polynomial $f$ of degree $d$, and a
pair $(g,a)$ from the hyperbolic approximation $H_{d,m}(f)$, we give
a bound on the coefficients of the polynomials $f(a(X))$. That gives also a bound on
the polynomial $g$, since $g$ is an approximation of the polynomial
$f(a(X))$ truncated to the degree $m-1$.

\begin{Lemma}
  \label{lem:coefficients}
  Given a polynomial $f$ of degree $d$ and a integer $m>1$, let
  $a$ be an affine transformation appearing in the
  hyperbolic approximation $H_{d,m}(f)$. Letting $f(a(X)) =
  \sum_{k=0}^{d} c_k X^k$ and $\widetilde m = \min(m-1,d)$,
  %for all $0\leq k \leq \widetilde m$
  we have $|c_k| \leq \begin{cases} \|f\|_1\left(\frac {\widetilde m}{2k}\right)^k
  & \text{ if } 0 \leq k \leq \widetilde m\\
  \|f\|_1/2^k & \text{ otherwise}\end{cases}$
\end{Lemma}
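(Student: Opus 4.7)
The plan is to expand $f(a(X))$ by the binomial theorem and bound each Taylor coefficient $c_\ell$, splitting the argument into two cases according to whether the index $n$ of the ring to which $a$ belongs satisfies $n \leq N-2$ or $n = N-1$. Writing $a(X) = (\gamma_n + \rho_n X)\,e^{i 2\pi k/K_n}$ and expanding yields
\[
|c_\ell| \;\leq\; \rho_n^\ell \sum_{j=\ell}^d |f_j| \binom{j}{\ell} \gamma_n^{j-\ell}
\]
in every case. The split is forced by the fact that for $n = N-1$ the disk $a(D(0,2))$ no longer sits inside the unit disk, so a Cauchy-type contour argument at radius $2$ will not apply there.

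For $n \leq N-2$, Remark~\ref{rem:explicit} gives the clean identity $\gamma_n + 2\rho_n = 1$, so on the circle $|X|=2$ one has $|a(X)| \leq 1$ and hence $|f(a(X))| \leq \|f\|_1$. Applying Cauchy's integral formula on this contour immediately produces $|c_\ell| \leq \|f\|_1/2^\ell$ for every $\ell \geq 0$, which is at least as strong as both bounds claimed in the lemma.

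For $n = N-1$, I will instead bound the sum above term-by-term, using $\gamma_{N-1} \leq 1$ and the standard estimate $\binom{j}{\ell} \leq \binom{d}{\ell} \leq (ed/\ell)^\ell$ for $\ell \geq 1$, to obtain $|c_\ell| \leq (e\rho_{N-1} d/\ell)^\ell \|f\|_1$. The definition of $N$ is tailored precisely to this estimate: substituting $\rho_{N-1} = 3/2^{N+1}$ and $2^N \geq 3ed/\widetilde m$ (from $N = \lceil \log_2(3ed/\widetilde m)\rceil$) yields $\rho_{N-1} \leq \widetilde m/(2ed)$, and therefore $|c_\ell| \leq (\widetilde m/(2\ell))^\ell \|f\|_1$ for every $\ell \geq 1$. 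The case $\ell = 0$ follows directly from $|c_0| = |f(a(0))| \leq \|f\|_1$ because $|a(0)| = \gamma_{N-1} \leq 1$.

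Combining the two cases finishes the proof. For $\ell \leq \widetilde m$, Case~1 already delivers $\|f\|_1/2^\ell \leq \|f\|_1(\widetilde m/(2\ell))^\ell$ (since $\widetilde m \geq \ell$), while Case~2 delivers exactly $(\widetilde m/(2\ell))^\ell \|f\|_1$. For $\ell > \widetilde m$, Case~1 gives $\|f\|_1/2^\ell$ at once, and Case~2 gives $(\widetilde m/(2\ell))^\ell \|f\|_1 \leq \|f\|_1/2^\ell$ because then $\widetilde m/\ell < 1$. The only real obstacle is the outer-ring case, and it is resolved precisely because the number of rings $N$ has been chosen large enough to make $\rho_{N-1}$ exponentially smaller than $\widetilde m/d$.
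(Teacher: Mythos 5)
Your proof is correct and follows the same decomposition as the paper's: split on whether the disk lies in an inner ring ($n \leq N-2$, where $\gamma_n + 2\rho_n = 1$) or is the outermost one ($n = N-1$, where $\gamma + \rho > 1$), and in the outer case combine $\binom{d}{\ell} \leq (ed/\ell)^\ell$ with $\rho_{N-1} \leq \widetilde m/(2ed)$, which is exactly how the paper argues. The one variation is in the inner case: the paper substitutes $\rho = (1-\gamma)/2$ into the expansion $|c_k| \leq \rho^k \sum_\ell |f_\ell|\binom{\ell}{k}\gamma^{\ell-k}$ and concludes via $\sum_j \binom{\ell}{j}\gamma^{\ell-j}(1-\gamma)^j = 1$, whereas you invoke Cauchy's integral formula on the circle $|X| = 2$ and use $|a(X)| \leq \gamma + 2\rho = 1$ there; these are the same estimate in different clothing, and both are valid. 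Your separate treatment of $\ell = 0$ and the final comparison between the two displayed bounds are also correct.
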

\begin{proof}
  By construction, $a$ is of the form $a(X) = (\gamma + \rho X)e^{i2\pi
  \alpha}$, with $\gamma$ and $\rho$ two positive real numbers. Letting
  $f(X) = \sum_{k=0}^d f_k X^k$, and expanding the $a(X)^k$, we have
  $|c_k| \leq \sum_{\ell=k}^d |f_\ell|
  \binom{\ell}{k} \gamma^{\ell-k} \rho^k$. We now distinguish two
  cases. First if $\gamma + \rho < 1$, then by construction of the
  sequences in Definition~\ref{def:covering}, we have $1 - \gamma = 2
  \rho$. Thus $|c_k| \leq \frac 1 {2^k}\sum_{\ell=k}^d |f_\ell|
  \binom{\ell}{k} \gamma^{\ell-k} (1-\gamma)^k \leq \|f\|_1/2^k$. This
  proves the desired bound for both the cases where $0\leq k
  \leq \widetilde m$ and the case where $k \geq \widetilde m$.
  
  Then, if $\gamma+\rho > 1$, then $\gamma = 1 - 1/2^N$ and $\rho =
  3/2^{N+1}$, where $N = \lceil \log_2(3ed/\widetilde m) \rceil$. Then
  $|c_k| \leq \sum_{\ell=k}^d |f_\ell| \binom{\ell}{k} \gamma^{\ell-k}
  \rho^k \leq \|f\|_1\rho^k \binom{d}{k}$. Using the inequalities $k! >
  (k/e)^k$ and $d \cdots (d-k+1) \leq d^k$ we have $|c_k| \leq \|f\|_1
  (\rho e d / k)^k$. Moreover, $\rho \leq \widetilde m / (2de)$, such
  that $|c_k| \leq \|f\|_1 (\widetilde m / (2k))^k$. This also
  proves the desired bound for both the cases where $0\leq k
  \leq \widetilde m$ and the case where $k \geq \widetilde m$.
\end{proof}

We also give a bound on the number of disks appearing in the
decomposition.

\begin{Lemma}
  \label{lem:disks}
  Given two integers $d$ and $m>1$, let $\widetilde m = \min(m-1,d)$ and let $N=\lceil \log_2(3ed/\widetilde
  m) \rceil$. Then the number of disks in the $N$-hyperbolic covering is
  in $O(d/\widetilde m)$. Moreover, the union of the disks contains the
  unit disk.
  %with the notations of Definition~\ref{def:covering}, for $0 \leq n \leq N-1$, we have
  %$2^n\leq K_n \leq 2^{n+4}$.
\end{Lemma}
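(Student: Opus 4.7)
The plan is to treat the two assertions of the lemma independently. The cardinality bound $O(d/\widetilde m)$ follows immediately from Remark~\ref{rem:explicit}, which gives $K_n \leq 2^{n+4}$; summing yields $\sum_{n=0}^{N-1} K_n \leq 16 \cdot 2^N$, and $2^N \leq 6ed/\widetilde m$ by the definition of $N$, so the total is in $O(d/\widetilde m)$. This step is essentially bookkeeping and requires no new idea.

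For the covering, I would reduce to a per-ring statement. Since the concentric circles of radii $r_0 = 0, r_1, \ldots, r_{N-1}, r_N = 1$ partition the closed unit disk into the annuli $A_n = \{z : r_n \leq |z| \leq r_{n+1}\}$ for $0 \leq n < N$ (with $A_0 = D(0, r_1)$), it suffices to prove that for each $n$ the $K_n$ disks of radius $\rho_n$ equally spaced on the circle of radius $\gamma_n$ cover $A_n$. For a point $z = r e^{i\theta} \in A_n$, I would take the nearest center $\gamma_n e^{i\phi}$ (with angular offset at most $\pi/K_n$), expand via the law of cosines
\[ |z - \gamma_n e^{i\phi}|^2 = (r - \gamma_n)^2 + 2 r \gamma_n \bigl(1 - \cos(\theta-\phi)\bigr), \]
and bound the right-hand side using $1 - \cos\alpha \leq \alpha^2/2$, the identity $|r - \gamma_n| \leq (r_{n+1}-r_n)/2 = 2\rho_n/3$ that comes from the definitions of $\gamma_n$ and $\rho_n$, and the inequality $r\gamma_n \leq r_{n+1}^2$. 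This gives a squared distance at most $4\rho_n^2/9 + r_{n+1}^2 \pi^2/K_n^2$, which is at most $\rho_n^2$ precisely when $K_n \geq \frac{3\pi}{\sqrt 5}\, r_{n+1}/\rho_n$; this is exactly the value prescribed in Definition~\ref{def:covering}.

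The main (and essentially only) obstacle is the case $n = 0$, where $K_0 = 4$ is assigned by hand and is strictly smaller than the general formula would give (one computes $\frac{3\pi}{\sqrt 5} r_1/\rho_0 = 4\pi/\sqrt 5 > 5.6$), so the argument above does not close the ring. I would dispose of this case by direct calculation on the concrete quantities $\gamma_0 = 1/4$, $\rho_0 = 3/8$, and the four centers $\gamma_0 e^{ik\pi/2}$ for $k=0,1,2,3$: maximising the squared distance $r^2 + 1/16 - (r/2)\cos(\theta-\phi)$ over $r \in [0, 1/2]$ and $|\theta-\phi| \leq \pi/4$ shows that the extremal point is $\tfrac12 e^{i\pi/4}$, at squared distance $5/16 - \sqrt 2/8 \approx 0.136$, which is strictly less than $\rho_0^2 = 9/64 \approx 0.141$. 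Combining this with the general ring-covering estimate gives $A_n \subset \bigcup_k D(\gamma_n e^{i2\pi k/K_n}, \rho_n)$ for every $n$, and since $\bigcup_{n=0}^{N-1} A_n$ is the closed unit disk, the lemma follows.
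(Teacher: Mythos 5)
Your argument is correct and takes a genuinely different route for the covering assertion. The paper proves per-ring coverage by a synthetic-geometry argument: it fixes a disk $D(\gamma_n,\rho_n)$, lets $C$ be the exit point of the outer circle $|z|=r_{n+1}$ from that disk, drops an altitude $h$ from $C$ onto the real axis, and shows that the relevant angles of the triangle $\gamma_n,\,r_{n+1},\,C$ are acute so that $h^2 \geq \rho_n^2 - (r_{n+1}-\gamma_n)^2 = \tfrac{5}{9}\rho_n^2$; this gives $\sin\beta \geq \tfrac{\sqrt 5}{3}\rho_n/r_{n+1} \geq \pi/K_n$. You instead write out $|z-\gamma_n e^{i\phi}|^2$ via the law of cosines and kill the angular term with the elementary bound $1-\cos\alpha \leq \alpha^2/2$, which makes the origin of the constant $\tfrac{3\pi}{\sqrt 5}$ in Definition~\ref{def:covering} transparent and avoids the case analysis on acute angles. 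One concrete thing your version buys is an explicit treatment of $n=0$: you correctly observe that $K_0=4$ is smaller than $\tfrac{3\pi}{\sqrt 5}r_1/\rho_0 = 4\pi/\sqrt 5 \approx 5.62$, so the general estimate does not apply, and you close the gap by a direct computation showing the worst-case squared distance $\tfrac{5}{16}-\tfrac{\sqrt 2}{8}$ is below $\rho_0^2 = \tfrac{9}{64}$. The paper's proof, as written, asserts the conclusion ``for all $n$'' but its final inequality $\sin\beta \geq \pi/K_n$ really relies on $K_n$ satisfying the ceiling formula, which only holds for $n\geq 1$; your explicit check for $n=0$ is therefore a welcome addition rather than redundancy. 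The cardinality bound is essentially identical in both, and your sharper $2^N \leq 6ed/\widetilde m$ (vs.\ the paper's looser bookkeeping) is fine.
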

\begin{proof}
  %TODO: update proof
  First, remark that the total number $t$ of disks in
  a $N$-covering is $\sum_{n=0}^{N-1} K_n$. Such that using
  Remark~\ref{thm:datastructure} we have $K_n \leq
  2^{n+4}$, and $t \leq 2^{N+4} \leq 16 \cdot 3ed/\widetilde m$. Thus
  the number of disks is in $O(d/\widetilde m)$.

  Then, we need to prove that for any ring $R_n = D(0,r_{n+1}) \setminus
  D(0,r_{n})$, the union of the disks centered at $\gamma_ne^{i2\pi\frac
  k{K_n}}$ with radius $\rho_n$ contains $R_n$. For that, let $D$ be the
  disk $D(\gamma_n, \rho_n)$ and let $R_n(\alpha)$ be the segment intersection
  of $R_n$ with the half-line starting from $0$ and with angle $\alpha$.
  Then, let $\beta$ be the smallest angle such that $R_n(\beta)$ is not
  included in $D$. Then if the angle $\frac {2\pi}{K_n} \leq 2 \beta$ it
  implies that for all $0 \leq \alpha < 2\pi$, the segment $R_n(\alpha)$
  is included in a disk of the $N$-hyperbolic covering.

  Since $\sin(\beta) \leq \beta$, it is sufficient to prove that $\frac
  \pi {K_n} \leq \sin(\beta)$. Consider the triangle $ABC$ with $A=\gamma_n$, $B=r_{n+1}$ and
  $C=r_{n+1}e^{i\beta}$ (see Figure~\ref{fig:geometry}). Letting $O=0$, remark that $\frac \beta 2$ is the
  angle $\angle BOC$.

  Let $h$ be the distance between the point
  $C$ and the line $(OB)$. By construction, $\sin(\beta) =
  \frac h {r_{n+1}}$. If we prove that the angles at vertex $A$ and
  vertex $B$ are each smaller than $\frac \pi 2$, then we can conclude
  that $h^2 \geq |C-A|^2 -
  |B-A|^2 = \rho_n^2 - (r_{n+1}-\gamma_n)^2 = \rho_n^2 - \frac 4
  9 \rho_n^2 = \frac 5 9 \rho_n^2$. Such that $\sin(\beta) \geq
  \frac {\sqrt 5}{3} \frac {\rho_n}{r_{n+1}} \geq \frac \pi {K_n}$.
                                                                                                
  In order to prove that the angles at $A$ and $B$ are smaller than
  $\frac \pi 2$, remark that the triangle $OBC$ is isosceles, such that
  the angle $\angle OBC = \angle ABC$ is less than $\pi/2$.  Moreover,
  considering the triangle $OAC$, the angle $\angle OAC$ is larger than
  $\pi/2$ if $|C-O|^2 - |A-O|^2 - |C-A|^2 \geq 0$. This inequality holds
  for $n \geq 1$ since $|C-O|^2 - |A-O|^2 - |C-A|^2 = r_{n+1}^2 -
  \gamma_n^2 - \rho_n^2 = (r_{n+1}-\gamma_n)(r_{n+1}+\gamma_n) -
  \rho_n^2 \geq \frac 2 3 \rho_n - \rho_n^2 \geq 0$. Thus $\angle BAC =
  \pi - \angle OAC$ is less than $\frac \pi 2$.  Since the angle of the
  triangle $ABC$ at the vertices $B$ and $C$ are less than $\frac \pi
  2$, we can conclude that for all $n$,  $\frac \pi {K_n}$ is small
  enough to let the disks cover the unit disk.

\end{proof}

\begin{figure}
  \centering
  \includegraphics{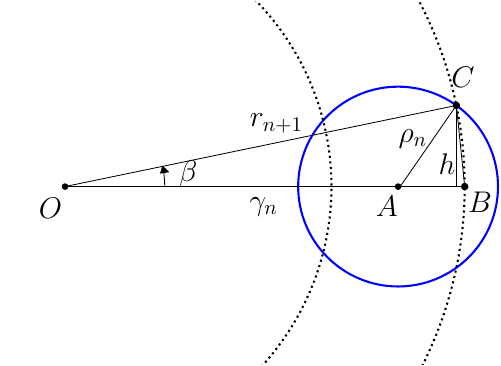}
  \caption{Illustration for the proof that the union of the disks in a $N$-hyperbolic
  covering contains the unit disk (Lemma~\ref{lem:disks})}
  \label{fig:geometry}
\end{figure}

Finally, we give a bound on the bit size of the coefficients of the
polynomials $g$ of an hyperbolic approximation, by bounding the size of
the polynomials $f(a(X))$, leading also to a bound the its second
derivative.

\begin{Lemma}
  \label{lem:second}
  Given a polynomial $f = \sum_{k=0}^d f_k X^k$ of degree $d$ and $a$ an affine transform
  from an $m$-hyperbolic approximation of $f$, let $\varphi(X) =
  f(a(X))$. Letting $\widetilde m = \min(m-1, d)$, we have $\|\varphi\|_1
  \leq \|f\|_1 2^{\widetilde m/11}$ and for all $x \in D(0,1)$, we have
  %$|\varphi'(x)| \leq \|f\|_1\widetilde m2^{\widetilde m/11}$ and
  $|\varphi''(x)| \leq \|f\|_1\widetilde m^22^{\widetilde m/11}$.
\end{Lemma}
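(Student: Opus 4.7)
The main identity I would exploit is that, writing the affine transform from the $m$-hyperbolic approximation as $a(X) = (\gamma + \rho X)\,e^{i2\pi\alpha}$ with $\gamma, \rho \geq 0$, one has
\[\varphi(X) \;=\; f(a(X)) \;=\; \sum_{\ell=0}^{d}\bigl(f_\ell\, e^{i2\pi\alpha\ell}\bigr)(\gamma + \rho X)^\ell.\]
Because $\gamma$ and $\rho$ are real and non-negative, all coefficients of $(\gamma+\rho X)^\ell$ are non-negative, so $\|(\gamma+\rho X)^\ell\|_1 = (\gamma+\rho)^\ell$. Applying the triangle inequality coefficient-by-coefficient then yields the master estimate
\[\|\varphi\|_1 \;\leq\; \sum_{\ell=0}^d |f_\ell|\,(\gamma+\rho)^\ell \;\leq\; \|f\|_1\, \max(1,\gamma+\rho)^d.\]
This is substantially sharper than summing the termwise bounds of Lemma~\ref{lem:coefficients}, which is the key observation that makes the small constant $1/11$ in the exponent attainable.

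Next I would bound $\gamma+\rho$ using the explicit description of the covering from Remark~\ref{rem:explicit}. For the inner rings ($n \leq N-2$) one has $\gamma+\rho = 1 - 3/(8\cdot 2^n) < 1$, so $\|\varphi\|_1 \leq \|f\|_1$ immediately; for the outermost ring ($n=N-1$) one has $\gamma+\rho = 1 + 1/2^{N+1}$. Using $N = \lceil\log_2(3ed/\widetilde m)\rceil$ gives $2^{N+1} \geq 6ed/\widetilde m$, so
\[\max(1,\gamma+\rho)^d \;\leq\; \left(1+\frac{\widetilde m}{6ed}\right)^{\!d} \;\leq\; e^{\widetilde m/(6e)} \;\leq\; 2^{\widetilde m/11},\]
where the last inequality relies on the numerical fact $6e\ln 2 > 11$. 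This proves the first claim.

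For the bound on $|\varphi''(x)|$, differentiating the master identity gives $\varphi''(X) = \rho^2 \sum_\ell (f_\ell e^{i2\pi\alpha\ell})\,\ell(\ell-1)(\gamma+\rho X)^{\ell-2}$, so for $|x|\leq 1$,
\[|\varphi''(x)| \;\leq\; \rho^2\sum_\ell |f_\ell|\,\ell(\ell-1)\, \max(1,\gamma+\rho)^{\ell-2}.\]
In the outer ring, I would bound $\ell(\ell-1)\leq d^2$ and, crucially, use $\rho d = 3d/2^{N+1} \leq \widetilde m/(2e)$, so that $\rho^2 d^2 \leq \widetilde m^2/(4e^2)$. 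Combined with $\max(1,\gamma+\rho)^{\ell-2} \leq 2^{\widetilde m/11}$, this yields the target bound with room to spare. For the inner rings, $(\gamma+\rho)^{\ell-2}\leq 1$ but the factor $\rho d$ is not small, so I would instead invoke the termwise bound $|c_k|\leq \|f\|_1/2^k$ from Lemma~\ref{lem:coefficients} and sum $k(k-1)/2^k$ to a universal constant ($\leq 4$), which is dominated by $\widetilde m^2 2^{\widetilde m/11}\,\|f\|_1$ once $\widetilde m \geq 2$. The main obstacle is simply to keep the constants straight---especially the verification $6e\ln 2 > 11$ that underlies the specific exponent $1/11$---since the structural algebra is routine once the master identity is in place.
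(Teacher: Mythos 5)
Your proposal is correct and follows essentially the same route as the paper: write $a(X)=(\gamma+\rho X)e^{i2\pi\alpha}$, exploit the positivity of $\gamma,\rho$ to reduce to $f^+(\gamma+\rho)$, split into the inner rings ($\gamma+\rho<1$, or equivalently $\gamma+2\rho\leq 1$) and the outermost ring ($n=N-1$), and use $N\geq\log_2(3ed/\widetilde m)$ together with $6e\ln 2>11$. The only difference is in the inner-ring bound for $\varphi''$: you invoke the proved coefficient bound $|c_k|\leq\|f\|_1/2^k$ from Lemma~\ref{lem:coefficients} and sum $\sum k(k-1)/2^k=4$, while the paper instead uses the binomial inequality $k(k-1)\rho^2(\gamma+\rho)^{k-2}\leq 2(\gamma+2\rho)^k\leq 2$, getting $2\|f\|_1$; both give a universal constant that is absorbed into $\widetilde m^2 2^{\widetilde m/11}\|f\|_1$ (with the same implicit restriction $\widetilde m\geq 2$ in both arguments), so the two routes are interchangeable.
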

\begin{proof}
  Let $a$ be of the form $a(X) = (\gamma + \rho X)e^{i2\pi
  \alpha}$. Let $f^+(X) = \sum_{k=0}^d |f_k| X^k$. Since $\gamma$
  and $\rho$ are positive numbers, we have $\|\varphi\|_1 \leq f^+(a(1)) \leq
  f^+(\gamma+\rho)$. By construction, $\gamma + \rho < 1 + \frac 1 4
  \frac 1 {2^{N-1}}$ with $N \geq \log_2(3ed/{\widetilde m})$. This implies that
  $\gamma+\rho \leq 1+ \frac {\widetilde m} {6ed}$ and thus $\|\varphi\|_1 \leq
  \|f\|_1(1+\frac {\widetilde m} {6ed})^d \leq \|f\|_1 e^{\frac {\widetilde m} {6e}}
  \leq \|f\|_1 2^{{\widetilde m}/11}$.
  Then, we obtain the bound on the derivative and second derivative
  of $\varphi$ by bounding the
  absolute value of the derivative and second derivative of each term
  $f_ka(X)^k$. %by $|f_k| k(k-1) \rho^2(\gamma+\rho)^{k-2}$.
  By construction, either
  $\gamma + 2\rho \leq 1$ or $\rho = \frac 3 4 \frac 1 {2^{N-1}}$ and
  $\gamma + \rho = 1 + \frac 1 4 \frac 1 {2^{N-1}}$. 

  % For the derivative, in the former case,
  % using binomial inequalities, we have $|f_k| k
  % \rho(\gamma+\rho)^{k-1} \leq |f_k|(\gamma+2\rho)^k \leq |f_k|$, such
  % that $\varphi'(x) \leq \|f\|_1$. In this case, we also have for the
  % second derivative $|f_k| k(k-1)
  % \rho^2(\gamma+\rho)^{k-2} \leq 2|f_k|(\gamma+2\rho)^k \leq 2|f_k|$, such
  % that $\varphi''(x) \leq 2\|f\|_1$. 

  % In the second case, with $N \geq \log_2(3ed/{\widetilde m})$, we have
  % $|f_k| k \rho(\gamma+\rho)^{k-1} \leq |f_k| d \rho(\gamma+\rho)^{d-1}
  % \leq |f_k|d \frac{{\widetilde m}}{2ed}\left(1+\frac {\widetilde
  %     m} {6ed}\right)^{d-1} \leq |f_k|\frac{\widetilde m}{2e}2^{{\widetilde
  % m}/11}$, such that $|\varphi'(x)| \leq \|f\|_1{\widetilde
  % m}2^{{\widetilde m}/11}$.  Similarly, for the second derivative we
  % have $|f_k| k(k-1) \rho^2(\gamma+\rho)^{k-2} \leq |f_k| d(d-1)
  % \rho^2(\gamma+\rho)^{d-2} \leq |f_k|d(d-1) \left(\frac{{\widetilde
  % m}}{2ed}\right)^2\left(1+\frac {\widetilde m} {6ed}\right)^{d-2} \leq
  % |f_k|{\widetilde m}^2/(4e^2)2^{{\widetilde m}/11}$, which implies
  % $|\varphi''(x)| \leq \|f\|_1\widetilde m^22^{\widetilde m/11}$.

  In the former case,
  using binomial inequalities, we have $|f_k| k(k-1)
  \rho^2(\gamma+\rho)^{k-2} \leq 2|f_k|(\gamma+2\rho)^k \leq 2|f_k|$
   such that $|\varphi''(x)| \leq 2\|f\|_1$. 
  In the second case, with $N \geq \log_2(3ed/{\widetilde m})$, we have
  $|f_k| k(k-1) \rho^2(\gamma+\rho)^{k-2} \leq |f_k|
  d(d-1) \rho^2(\gamma+\rho)^{d-2} \leq |f_k|d(d-1)
  \left(\frac{{\widetilde m}}{2ed}\right)^2\left(1+\frac {\widetilde m} {6ed}\right)^{d-2} \leq
  |f_k|{\widetilde m}^2/(4e^2)2^{{\widetilde m}/11}$, which implies
  $|\varphi''(x)| \leq \|f\|_1\widetilde m^22^{\widetilde m/11}$.

  % With $N \geq \log_2(3ed/m)$, we have
  % $\gamma + 2\rho \leq 1+\frac 1 {2^{N-1}} \leq 1 +\frac {2m}
  % {3ed}$. Rising this expression to the power $d$ gives us the bound
  % $\varphi''(x) \leq e^{2m/(3e)} \leq 2\|f\|_1 2^{m/2}$.

\end{proof}

%Algo

%Proof of that hyperbolic covering covers the unit disk

\subsection{Proof of Theorem~\ref{thm:datastructure}}

We can now prove that Algorithm~\ref{alg:datastructure} returns an
hyperbolic approximation of a polynomial with a complexity quasi-linear
in the degree and the required precision.

\paragraph{Correctness.}
First, for the correctness of the algorithm, we will prove that
Algorithm~\ref{alg:datastructure} returns a list of pairs $(g,a)$
satisfying the constraints of Definition~\ref{def:approximation}. First
the affine transforms computed in Algorithm~\ref{alg:datastructure} send
the unit disk to the disks described in Definition~\ref{def:covering} of a hyperbolic
covering. Then, the polynomials $g$ computed are approximation of the
polynomials $f(a(X)) \mod X^{\widetilde m}$. We will show that the
approximation satisfies the bound $\|g(x) - f(a_{n,k}(X))\|_1 \leq 3\|f\|_12^{-m}$.

In part $B.1$ of Algorithm~\ref{alg:datastructure}, we start by
truncating $f$ to $d_n$. The resulting polynomial $p$ satisfies
$f-p = f_{d_n+1}X^{d_n+1} + \cdots + f_dX^d$. 
Let $a_{n,k}$ be of the form $a_{n,k}(X) = (\gamma_n + \rho_n X)e^{i2\pi
\frac k K_n}$ and let $e^+(X) = \sum_{k=d_n+1}^d |f_k| X^k$. Since
$\gamma_n$
and $\rho_n$ are positive numbers, we have $\|f(a_{n,k}(X)) -
p(a_{n,k}(X))\|_1 \leq \|e^+(a_{n,k}(X))\|_1 \leq
e^+(\gamma+\rho)$. In the case where $n<N-1$ and $d_n < d$, we have $\gamma_n + \rho_n = 1
-\frac 3 8 \frac 1 {2^n} < 1$. Thus $\|e^+(a_{n,k}(X))\|_1 \leq
(\gamma_n+\rho_n)^{d_n+1} \|f\|_1$. Moreover, $(\gamma_n+\rho_n)^{d_n+1}
\leq e^{(d_n+1)\log(1-\frac 3 8 \frac 1 {2^n})} \leq e^{-(d_n+1)\frac 3 8
\frac 1 {2^n}} \leq \frac 1 {2^{m+1}}$ since $d_n \geq \frac 8 3
\log(2)(m+1)2^n$. This implies that $\|f(a_{n,k}(X))-p(a_{n,k}(X))\|_1 \leq \|f\|/2^{m+1}$.

Then, the algorithm will evaluate $p(YZ)$ on $Y = \gamma_n + \rho_nX$ 
and $Z = e^{i2\pi\frac k {K_n}}$, modulo $X^{\widetilde m}$ and
modulo $Z^{K_n}-1$. The advantage is that composition modulo
$X^{\widetilde m}$ can be done efficiently using
Proposition~\ref{pro:composition} and evaluation modulo $Z^{K_n}-1$ on
$Z=e^{i2\pi\frac k {K_n}}$ can
be reduced to Fast Fourier Transform and be done efficiently too using
Proposition~\ref{pro:fft}. The part $B.2$ up to $B.5$ can perform this
computation with an error less than $\|f\|_1/2^{m+1+\log_2(\widetilde
m)}$ on each coefficients. More precisely, in $B.3$ the algorithm
computes $q_k(X) = (\gamma_n+\rho_n X)^k \mod X^{\widetilde m}$. Then
in $B.4$ we have $r_k(X) = p_k((\gamma_n+\rho_n X)^{K_n}) \cdot
(\gamma_n + \rho_n X)^k \mod X^{\widetilde m}$. Letting $\omega =
e^{i2\pi / K_n}$, the truncated
polynomial $g_{n,k}$ associated to the disk of center $\gamma_n
\omega_k$ and radius $\rho_n$ is $\sum_{j=0}^{K_n-1} r_j(X) \omega^{kj}$.
In step $B.5$, for a fixed $n$ and a fixed $\ell$ between $0$ and
$\widetilde m - 1$, the coefficient of $X^\ell$ of $g_{n,k}$, denoted by
$g_{n,k,\ell}$, is $s_\ell(\omega^k)$. Those coefficients can be
computed efficiently using the fast Fourier transform algorithm.
Such that using the notation of
Algorithm~\ref{alg:datastructure} we finally have $\|g_{n,k}(X) - p(a_{n,k}(X))
\mod X^{\widetilde m}\|_1 \leq \|f\|_1/2^{m+1}$.

% Finally, using
% Lemma~\ref{lem:coefficients} on $p$, we have that $\|p(a(X)) - [p(a(X))
% \mod X^{\widetilde m}]\|_1 \leq 

Finally, letting $p(a_{n,k}(X)) = \sum_{k=0}^d c_kX^k$, we have by
Lemma~\ref{lem:coefficients} that $|c_k|\leq \|p\|_1/2^k$ for all
$k\geq\widetilde m$. Thus, we have $\|p(a_{n,k}(X)) - [p(a_{n,k}(X)) \mod
X^{\widetilde m}]\|_1 \leq 2 \|p\|_1/2^{\widetilde m} \leq 2 \|f\|_1/2^{\widetilde m} $.

Gathering the norm inequalities, we have as required:
\begin{align*}
  \|f(a_{n,k}(X)) - g(X)\|_1 & \leq \|f(a_{n,k}(x)) - p(a_{n,k}(X))\|_1
  + \|p(a_{n,k}(X)) - [p(a_{n,k}(X)) \mod X^{\widetilde m}]\|_1 \\
                             & \phantom{\leq} + \|p(a_{n,k}(X)) - g(X) \mod X^{\widetilde m}\|_1\\
                             & \leq \|f\|_1/2^{m+1} + \|f\|_1/2^{m+1} +
                             2\|f\|_1/2^m \\
                             &\leq 3\|f\|_1/2^m.
\end{align*}

% if we approximate the first coefficients
% $c_k$ for $k < \widetilde m$ by coefficients $g_k$ such that $|g_k-c_k|
% \leq \|f\|_1/(m2^m)$, we will have $\|g(X) - f(a(X))\|_1 \leq
% \|f\|_1/2^m + 2\|f\|_1/2^m = 3\|f\|_1/2^m$.

\paragraph{Complexity.}
The number of loop iterations in Algorithm~\ref{alg:datastructure} is in
$O(\log d)$. Thus, it is sufficient to prove that each iteration can be
performed in $\widetilde O(d(m+\tau))$ to achieve the complexity in
Theorem~\ref{thm:datastructure}. First, part $A$ can be done in
$\widetilde O(\log d)$ operations. Then in part $B$, we will use the
state-of-the-art complexity bounds on the elementary operations recalled
in Section~\ref{sec:elementary}.

First, in part $B.1$ and $B.2$, we are
reordering the coefficients, gathering together the
coefficients of $X^k$ with the same value $k \mod K_n$, which can be
done in $\widetilde O(d)$ bit operations. Note that the polynomials $p_k$
computed in this part have a degree less than $d_n/K_n$, with $d_n/K_n \leq \frac 8 3
\log(2)(m+1)$ and $d_n/K_n \leq d$, such that the degree of $p_k$ is in
$O(\widetilde m)$.

Then in part $B.3$, we do $K_n$ multiplication of polynomials of degree
in $\widetilde O(\widetilde m)$ with an absolute error on the result in $2^{\varTheta(m)}$.
Moreover, we have $\|q_k\|_1 \leq \max(1, (\gamma_n+\rho_n)^{K_n})$.
Since $\gamma_n+\rho_n \leq 1 + \frac 1 {2^{N+1}}$, this implies
$\|q_k\|_1 \leq e^{K_n / 2^{N+1}} \leq e^{2^{N+3}/2^{N+1}} \leq e^2$.
Using Proposition~\ref{pro:multiplication}, each multiplication can be
done in $\widetilde O(\widetilde m m)$ bit operations, and part $B.3$ requires
$\widetilde O(K_n\widetilde m m ) = \widetilde O(dm)$ bit operations.

Similarly, in part $B.4$, since $\|p_k\|_1 \leq \|p\|_1 \leq \|f\|_1
\leq 2^{\tau}$ and $\|q_{K_n}\|_1 \leq e^2$, using proposition
\ref{pro:composition} on fast composition,
we can compute $p_k(q_{K_n}) \mod X^{\widetilde m}$ with an error less
than $2^{-\varTheta(m)}$ in $\widetilde O(\widetilde m(m
+ \tau))$. We also perform the multiplication by $q_k$ in this part within the same
complexity. Overall, since the composition and the multiplication is
done $K_n$ times, part $B.4$ can be done in $\widetilde O(d(m+\tau))$
operations.

Finally, in part $B.5$, we use the Fast Fourier Transform algorithm to
evaluate $s$ of degree $K_n$ on the roots of unity $e^{i2\pi \frac k
{K_n}}$. If $\|s_{\ell}\|_1 \leq 2^{\nu}$ for an integer $\nu \geq 1$, the
evaluation of $s$ on the $K_n$ points with an error less than $2^{-m}$
can be done in $\widetilde O(K_n(m + \nu))$ using
Proposition~\ref{pro:fft}. Thus the bit complexity for part $B.5$ is in
$\widetilde O(d(m + \nu))$.  To bound $\|s_\ell\|_1$, remark that 
$\sum_{\ell=0}^{\widetilde m -1} \|s_\ell\|_1 \leq
\sum_{k=0}^{K_n-1} \|p_k(q_{K_n}))\cdot q_k\|_1$. Let $p_k^+, p^+, f^+$ be the
polynomial $p_k, p, f$ where we replaced the coefficients by their absolute
value. In this case $\|p_k(q_{K_n}))\cdot q_k\|_1 \leq
p_k^+((\gamma_n+\rho_n)^{K_n}) (\gamma_n+\rho_n)^k$.  Moreover
$\sum_{k=0}^{K_n-1} p_k^+((\gamma_n+\rho_n)^{K_n}) (\gamma_n+\rho_n)^k =
p^+(\gamma_n+\rho_n) \leq f^+(\gamma_n+\rho_n)$. In turn
$f^+(\gamma_n+\rho_n) \leq \|f\|_1 (1+\frac 1 {2^{N+1}})^d$, and $N\geq
\log_2(3ed/\widetilde m)$, such that $(1+\frac 1 {2^{N+1}})^d \leq
(1+\frac{\widetilde m}{6ed})^d \leq e^{\frac{\widetilde
m}{6e}}\leq 2^{\widetilde m/11}$. Finally, $\sum_{\ell=0}^{\widetilde
m-1} \|s_\ell\|_1 \leq \|f\|_12^{\widetilde m / 11} \leq 2^{\tau + m}$.
Thus part $B.5$ can be computed in $\widetilde O(d(m+\tau))$ bit
operations.

%TODO: gather propositions to conclude.

%\begin{proof}[Proof of Theorem~\ref{thm:datastructure}]
%
%\end{proof}

%Proof of correctness of the output approximation bound

% Proof of time complexity (split in several Lemma)

\section{Multipoint evaluation}
\label{sec:evaluation}

A direct application of our data structure is the fast evaluation of
polynomials. The main idea is to approximate the input polynomial $f$ with
a piecewise polynomial, where each polynomial $g_k$ has a degree with the same order of magnitude as the
required precision. Then we can use state-of-the-art multipoint
evaluation technique on each $g_k$.

\begin{algorithm}
  \DontPrintSemicolon
  \KwInput{Polynomial $f$ of degree $d$, $d$ complex number $x_i$ in the
  unit disk, and a precision $m$}
  \KwOutput{List of complex number $y_i$ such that $|y_i - f(x_i)| \leq \|f\|_1 2^{-m}$}
    $L \gets \{\}$\;
    $Q \gets$ data structure adapted to the $x_i$ for fast disk range searching \;
    % $\varepsilon \gets 4 \|f\|_1 / 2^{m}$\;
    % $\varepsilon' \gets 16\|f\|_1 (m+1)^2 / 2^{m}$\;
    $G \gets H_{d,m+2}(f)$ \;
    \For{$(g_k, a_k)$ in $G$}{
      %TODO: use $G$ also in root isolation algo\;
      %\tcc{Add lead monomial so that all its roots are in the disk of radius $e2^{max(1,m/d)}$}
      \tcc{The precision of the arithmetic operations is in
      $\varTheta(\tau+m)$}
      $v_1, \ldots, v_{n_k} \gets$ query $Q$ for list of points $x_i$ in $a_k(D(0,1))$\;
      $y_1, \ldots, y_{n_k} \gets g_k(a_k^{-1}(v_1)), \ldots , g_k(a_k^{-1}(v_{n_k}))$\;
      Append $y_1, \ldots, y_{n_k}$ to $L$\;
    }
    \Return $L$\;
  \caption{Multipoint evaluation}
  \label{alg:evaluation}
\end{algorithm}

\begin{proof}[Proof of Theorem~\ref{thm:evaluation}]
  The correction of Algorithm~\ref{alg:evaluation} is ensured by the
  fact that for all $x$ in a disk $a_k(D(0,1))$, letting $z =
  a_k^{-1}(x)$, we have $|f(x) - g_k(a_k^{-1}(x))| = |f(a_k(z)) -
  g_k(z)| \leq \|f(a_k(X)) - g_k(X)\|_1 \leq 3 \|f\|_12^{-m-2}$. If we
  compute $y$ the evaluation of $g_k(z)$ with an error less than $\varepsilon =
  \|g_k\|_12^{-12m/11-2}$, the result will have an error less than
  $\|f\|_12^{-m-2}$, using the bound on $\|g_k\|_1$ given in
  Lemma~\ref{lem:second}. So finally we have $|f(x) - y| \leq 3
  \|f\|_12^{-m-2} + \|f\|_12^{-m-2} = \|f\|_12^{-m}$.

  First the data structure $Q$ can be computed in $\widetilde O(d)$
  using Proposition~\ref{pro:disks}, and the hyperbolic approximation
  $G$ in $\widetilde O(dm)$ using Theorem~\ref{thm:datastructure}. Then in the
  loop, the algorithm queries the points $v_1, \ldots, v_{n_k}$ in
  $\widetilde O(n_k + \log d)$ using Proposition~\ref{pro:disks}.
  Let $\widetilde m = \min(m+1,d)$ be the degree of $g_k$, and $q_k =
  \lceil n_k/\widetilde m \rceil$. We can evaluate $g_k$ on $n_k$
  points using $q$ times the fast multipoint evaluation method in
  Proposition~\ref{pro:evaluation}. For an absolute error less than
  $\|g_k\|_12^{-12m/11-2}$, this can be done in $\widetilde O(q_k\widetilde m m)$ bit
  operations. Note that $q_k \widetilde m \leq n_k+\widetilde m$, such that the total
  complexity in an iteration of the \verb/for/ loop is in $\widetilde
  O(n_km + \widetilde m m + \log d)$. Note also that the sum of the $n_k$ is $d$.
  If $t$ is the number of discs in the hyperbolic approximation, after adding the complexity of all the
  main loop iterations, Algorithm~\ref{alg:evaluation} requires
  $\widetilde O(dm + t(\widetilde m m+\log d))$. By Lemma~\ref{lem:disks}, $t$ is in
  $O(d/\widetilde m)$, such that the total complexity of
  Algorithm~\ref{alg:evaluation} is in $\widetilde O(dm)$.
\end{proof}

%Proof of the corollary

\section{Root isolation}
\label{sec:rootfinding}

%TODO: replace $f$ by $\varphi$.

% TODO: handle more carefully the inverse polynomials

\begin{algorithm}
  \DontPrintSemicolon
  \KwInput{Squarefree polynomial $f$ of degree $d$}
  \KwOutput{List of $d$ disks isolating all the roots of $f$ in the unit
  disk and a subset of the other roots }
    % TODO: check constants \;
    $L \gets \{\}$\;
    $m \gets 1$\;
    \While{$|L| < d$}{
      % $\varepsilon \gets 4 \|f\|_1 / 2^{m}$\;
      % $\varepsilon' \gets 16\|f\|_1 (m+1)^2 / 2^{m}$\;
      $L \gets \{\}$\;
      $G \gets H_{d,m}(f)$ \;
      $G^* \gets \{(g,\frac 1 a) \mid (g,a) \in H_{d,m}(X^df(1/X))\}$\;
      \For{$(g, a)$ in $G \cup G^*$ \label{num:startfor}}{
        %\tcc{Add lead monomial so that all its roots are in the disk of radius $e2^{max(1,m/d)}$}
        %$\widetilde m \gets \min(2d, m)$\;
        \tcc{Reduce the upper bound on the disk containing the roots of $g$ (Lemma~\ref{lem:compact})}
        $\widetilde m \gets \min(m-1, d)$\;
        $h \gets g(X)+\frac{\|f\|_1} {2^{m}}X^{2\widetilde m}$\;
        \tcc{Compute an approximation of the roots of $h$}
        $\widetilde h \gets$ Approximate factorization of $h$ such that
        $\|h - \widetilde h\|_1 \leq 2^{-1.1m} \|h\|_1$\;
        \For{$z_j$ root of $\widetilde h$}{
          \tcc{Check root unicity of $f(a(X))$ in a neighborhood
          of $z_j$ (Lemma~\ref{lem:distinct})}
          $\varepsilon \gets 3 \|f\|_1 (m+2) / 2^{m}$ \tcp*{bound on $|f-g|$ and $|f'-g'|$ (Lemma~\ref{lem:derivative})}
          \If{$|z_j| \leq 1$ and $|g'(z_j)| > \varepsilon$}{
            $K \gets \frac{\|f\|_1{\widetilde m}^22^{\widetilde m/11}}{|g'(z_j)| - \varepsilon}$\;
            $\beta \gets \frac{|g(z_j)|+\varepsilon}{|g'(z_j)| -
            \varepsilon}$\;
            \If{$10\beta K \leq 1$ and $D(z_j, 8\beta) \subset D(0,1)$
              \label{step:criterion}}{
              $L \gets L \cup \{a(D(z_j, 2\beta))\}$\;
            }
          }
        }
      \label{num:endfor}}
      \tcc{Remove duplicate roots}
      $B \gets$ list of bounding box of disks in $L$\;
      $Q \gets$ data structure adapted to squares in $B$ optimized for rectangle-rectangle search\;
      $L \gets$ sublist of $L$ without duplicates\;
      $m \gets 2m$\;
    }

    \Return $L$\;
  \caption{Root isolation}
  \label{alg:rootfinding}
\end{algorithm}

%TODO: publish code on gitlab

%TODO: update proof algo 3, notably remove the underscore k

%TODO: Algo
%We also give a bound on the second derivative of the polynomials

% TODO: optional, move this lemma to section properties of data structure

% TODO: add subsections, one being properties of the polynomials of a hyperbolic approximation.

\subsection{Properties of the approximate roots}
\label{sec:approximation}

We start by describing the properties satisfied by the roots of the truncated polynomials $g$
coming from an $m$-hyperbolic approximations. In particular, we show how to
perturb them such that all their roots are contained in a small enough
disks.

%TODO: simplify statement with only one case: $1/2X^{2d}$.

% TODO: move Fujiwara bound in preliminary section

\begin{Lemma}
  \label{lem:compact}
  Let $m>\widetilde m$ be two positive integers. Let $g(X) = \sum_{k=0}^{\widetilde m}
  c_k X^k$ be a polynomial of degree $\widetilde m$
  and $c$ be a constant such that $c_0 \leq c$ and $|c_k| \leq c \left(\frac {\widetilde m}
  {2k}\right)^k$ for $k \geq 1$. Then for the roots of the polynomial $g(X) + \frac
  c {2^m} X^{2\widetilde m}$ are in the disk $D(0, e 2^{m/\widetilde m})$.
  % are in the disk $D(0,4e)$, and for $m>2d$, the roots of
  % the polynomial
  % $g(X) + \frac c {2^m} X^{2d}$ are in the disk $D(0, e 2^{m/d})$.
\end{Lemma}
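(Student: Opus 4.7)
\textbf{Proof plan for Lemma~\ref{lem:compact}.}
The natural tool is Fujiwara's bound (Proposition~\ref{pro:fujiwara}) applied to the polynomial $h(X) = g(X) + \frac{c}{2^m}X^{2\widetilde m}$. Note that $h$ has degree exactly $2\widetilde m$, with leading coefficient $h_{2\widetilde m} = c/2^m$, with $h_j = c_j$ for $0 \leq j \leq \widetilde m$, and with $h_j = 0$ for $\widetilde m+1 \leq j \leq 2\widetilde m - 1$. Thus in Fujiwara's bound
$$
\max_{1\leq k \leq 2\widetilde m}\bigl|h_{2\widetilde m - k}/h_{2\widetilde m}\bigr|^{1/k},
$$
the terms for $1 \leq k \leq \widetilde m - 1$ vanish, and only the indices $j := 2\widetilde m - k \in \{0,1,\ldots,\widetilde m\}$ contribute. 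I would split the argument into the two natural cases $j=0$ and $1 \leq j \leq \widetilde m$.

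For $j = 0$, the hypothesis $|c_0| \leq c$ gives $|h_0/h_{2\widetilde m}|^{1/(2\widetilde m)} \leq 2^{m/(2\widetilde m)}$, and since $m > \widetilde m$ implies $m/(2\widetilde m) \geq 1/2$, one has $2 \cdot 2^{m/(2\widetilde m)} \leq e \cdot 2^{m/\widetilde m}$ after a short check (equivalent to $2/e \leq 2^{m/(2\widetilde m)}$). For $1 \leq j \leq \widetilde m$, the hypothesis gives
$$
2\bigl|h_j/h_{2\widetilde m}\bigr|^{1/(2\widetilde m - j)} \;\leq\; 2\left(2^m\left(\tfrac{\widetilde m}{2j}\right)^j\right)^{1/(2\widetilde m - j)},
$$
and I would reduce the desired inequality
$$
2\left(2^m(\widetilde m/(2j))^j\right)^{1/(2\widetilde m - j)} \;\leq\; e \cdot 2^{m/\widetilde m}
$$
to the one-variable inequality, after substituting $v = 2j/\widetilde m \in (0,2]$ and taking $\log_2$:
$$
\frac{-v\log_2 v}{4-v} \;\leq\; \log_2(e/2).
$$

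The main (and only nontrivial) obstacle is verifying this last inequality uniformly for $v \in (0,2]$. I would handle it by elementary calculus: on $v \in [1,2]$ the left-hand side is nonpositive, and on $v \in (0,1)$ I would find the unique critical point of $v \mapsto -v\log_2 v/(4-v)$, show numerically that the maximum value is below $0.15$, well under $\log_2(e/2) \approx 0.4427$. A cleaner presentation is to observe that $-v\log_2 v \leq 1/(e\ln 2)$ on $(0,1]$ (attained at $v=1/e$) and the denominator satisfies $4 - v \geq 3$, giving an even looser but sufficient bound. Combined with the observation that $-m(\widetilde m - j)/(\widetilde m(2\widetilde m - j))$ (the negative contribution I dropped when deriving the one-variable inequality) only helps, this finishes the proof: Fujiwara's bound gives radius at most $e\cdot 2^{m/\widetilde m}$, as required.
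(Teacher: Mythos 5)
Your proposal is correct, and it follows the same overall strategy as the paper's proof: apply Fujiwara's bound (Proposition~\ref{pro:fujiwara}) to $h = g + \frac{c}{2^m}X^{2\widetilde m}$, observe that the indices $1 \leq k \leq \widetilde m - 1$ contribute nothing, and bound the remaining terms for $j = 2\widetilde m - k \in \{0,1,\dots,\widetilde m\}$.

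Where your write-up departs from the paper is in the algebraic verification for the middle range $1 \leq j \leq \widetilde m$, and the departure is an improvement. The paper's chain of inequalities factors out $2^{m/\widetilde m - 1}$ from $2^{(m-2\widetilde m + k)/k}$, and also bounds the $j=0$ endpoint by $2^{m/(2\widetilde m)} \leq 2^{m/\widetilde m - 1}$; both of these steps, taken literally, require $m \geq 2\widetilde m$, while the lemma hypothesis only gives $m > \widetilde m$ (and in Algorithm~\ref{alg:rootfinding} the relevant case is $\widetilde m = m-1$, so $m < 2\widetilde m$ for $\widetilde m \geq 2$). Your substitution $v = 2j/\widetilde m$, combined with dropping the nonpositive term $-m(\widetilde m - j)/\bigl(\widetilde m(2\widetilde m - j)\bigr)$ rather than trying to cancel it, reduces everything cleanly to the single-variable inequality $\frac{-v\log_2 v}{4-v} \leq \log_2(e/2)$ on $(0,2]$. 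Your numerical verification of this ($-v\log_2 v \leq 1/(e\ln 2) \approx 0.53$, denominator $\geq 3$ on the relevant range $v \in (0,1)$, so the ratio is below $0.18 < \log_2(e/2) \approx 0.44$) is correct. Likewise your $j=0$ endpoint check, which needs only $m > 0$, is sound and sidesteps the paper's stronger implicit assumption. So your argument is not just equivalent but in fact closes a small gap in the paper's proof.

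Two cosmetic remarks: you could streamline the $j = 0$ case since $m/(2\widetilde m) > 0$ already gives $2 \cdot 2^{m/(2\widetilde m)} < 2e \cdot 1 \leq e \cdot 2^{m/\widetilde m}$ whenever $m \geq \widetilde m$, without needing the sharper $m/(2\widetilde m) > 1/2$; and when you write ``the denominator satisfies $4-v \geq 3$'' you should make explicit that this holds because you have already reduced to $v \in (0,1)$ (on $[1,2]$ the numerator is nonpositive), which you do say but only in passing.
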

\begin{proof}
  % For a polynomial $a_0 + \cdots + a_n X^n$, the Fujiwara bound ensures
  % that its roots are in the disk $D(0, R)$ where $R = 2 \max_{1 \leq k
  % \leq n} \sqrt[k]{\left|\frac {a_{n-k}}{a_n}\right|}$. In the case
  % $m\leq 2d$, for the polynomial
  % $g(X) + \frac 1 {2^m} X^m$, we have $\sqrt[k]{\left|\frac
  % {c_{m-k}}{c_m}\right|} \leq 2 \left(\frac {m}{m-k}\right)^{(m-k)/k}
  % \leq 2 e^{\frac{m-k}{k}\log\left(1+ \frac k {m-k}\right)}
  % \leq 2e$. When $m \geq 2d$,

  Using the Fujiwara bound on the modulus of the roots of a polynomial
  (Proposition~\ref{pro:fujiwara}) on the polynomial $g(X) + \frac 1 {2^m}
  X^{2\widetilde m}$, we have $\sqrt[k]{\left|\frac
  {c_{2\widetilde m-k}}{c_{2\widetilde m}}\right|}=0$ for $k < \widetilde m$, and for
  $\widetilde m \leq k < 2\widetilde m$ we have
  $\sqrt[k]{\left|\frac {c_{2\widetilde m-k}}{c_{2\widetilde m}}\right|}
  \leq 2^{(m-2\widetilde m+k)/k} \left(\frac {\widetilde m}{2\widetilde
  m-k}\right)^{(2\widetilde m-k)/k}
  \leq 2^{m/\widetilde m-1}e^{\frac{2\widetilde m-k}{k}\log\left(1+
  \frac {k-\widetilde m} {2\widetilde m-k}\right)}
  \leq 2^{m/\widetilde m-1}e^{(k-\widetilde m)/k}\leq e 2^{m/\widetilde
  m-1}$. Finally for $k=2\widetilde m$,
  $\sqrt[2d]{\left|\frac {c_{0}}{c_{2\widetilde m}}\right|} \leq
  2^{m/(2\widetilde m)} \leq
    2^{m/\widetilde m-1}$.
\end{proof}

Then we will use a technical lemma that gives a bound on the derivative of
the difference of an analytic function and a polynomial, given bounds on
their coefficients and the difference of their coefficients.

\begin{Lemma}
  \label{lem:derivative}
  Let $\varphi(x) = \sum_{k=0}^{\infty} \varphi_k x^k$ be
  an analytic series with radius of convergence greater than $2$. Let
  $g$ be a polynomial of degree $m$ and $c$ be a positive real number
  such that: $||\varphi - g||_1 \leq c/2^m$ and $|\varphi_k| < c/2^k$ for all $k>m$.
  Then, for all $x$ in the unit disk we have
  $$|\varphi'(x) - g'(x)| \leq c(m+2)/2^{m}.$$
\end{Lemma}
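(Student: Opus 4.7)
The plan is to work directly with the coefficient expansion of $\psi = \varphi - g$ and to combine the two hypotheses simultaneously rather than apply them separately to the low- and high-degree parts; the latter, naive approach yields an extra factor of roughly $2$ that must be avoided. Since $g$ has degree $m$, we have $\psi_k = \varphi_k - g_k$ for $k \leq m$ and $\psi_k = \varphi_k$ for $k > m$. For $|x|\leq 1$, the triangle inequality gives
\[
|\psi'(x)| \;\leq\; \sum_{k\geq 1} k\,|\psi_k|,
\]
so it suffices to bound this series by $c(m+2)/2^m$.

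The main trick is the elementary decomposition $k \leq m + \max(0,k-m)$, which, after multiplying by $|\psi_k|$ and summing, gives
\[
\sum_{k\geq 1} k\,|\psi_k| \;\leq\; m\sum_{k\geq 1} |\psi_k| \;+\; \sum_{k>m} (k-m)\,|\psi_k|.
\]
The first sum is handled by the hypothesis $\|\varphi-g\|_1 \leq c/2^m$, producing $m c/2^m$. For the second sum I would use the hypothesis $|\varphi_k| < c/2^k$ for $k>m$, then reindex by $j = k-m$ and apply the standard identity $\sum_{j\geq 1} j/2^j = 2$:
\[
\sum_{k>m}(k-m)\,|\varphi_k| \;\leq\; \sum_{j\geq 1} j\,\frac{c}{2^{m+j}} \;=\; \frac{c}{2^m}\sum_{j\geq 1}\frac{j}{2^j} \;=\; \frac{2c}{2^m}.
\]
Adding the two pieces yields the claimed bound $c(m+2)/2^m$.

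The only point requiring care is the decomposition step: bounding each tail term $k|\psi_k|$ naively by $k\cdot c/2^k$ and each head term by $m|\psi_k|$ would give $mc/2^m + c(m+2)/2^m = c(2m+2)/2^m$, losing a factor of $2$. Pulling the common "$m|\psi_k|$" contribution across \emph{all} indices, so that the $\ell^1$ hypothesis can be applied to the full norm $\|\psi\|_1$ rather than only to its head, is what recovers the sharp constant $(m+2)$. Once this decomposition is in place the remaining estimates are routine geometric-series computations.
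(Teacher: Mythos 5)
Your proof is correct and follows essentially the same route as the paper's: the key step, writing $k = m + (k-m)$ for $k>m$ so that the common multiple of $m$ can be absorbed into the full $\ell^1$ bound $\|\varphi-g\|_1\le c/2^m$, is exactly what the paper does, leaving the residual tail $\sum_{k>m}(k-m)c/2^k = 2c/2^m$. The only cosmetic difference is that you evaluate the tail by reindexing and invoking $\sum_{j\ge1}j/2^j=2$, whereas the paper differentiates the generating function $y^{m+1}/(1-y)$ to get $\sum_{k>m}k/2^k=(m+2)/2^m$ and then subtracts $m/2^m$; both give the same $2/2^m$.
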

\begin{proof}
  Using the bounds on the coefficients of $\varphi$ and $g$, we have
  $|g'(x) - \varphi'(x)| \leq \|g'-\varphi'\|_1 \leq m c/2^m
  + \sum_{k=m+1}^\infty (k-m)c/2^k$. The sum $S = \sum_{k=m+1}^\infty k/2^k$ can be bounded
  using the function $\varphi(y) = \sum_{k=m+1}^\infty y^k = y^{m+1}/(1-y)$
  defined for $y$ a real in $[0,1[$. We have $S = 1/2 \varphi'(1/2)$ and
  $\varphi'(y) =
  y^m[(m+1)(1-y)+y]/(1-y)^2$, such that $S = (m+1 - m/2)/2^{m-1} =
  (m+2)/2^m$. Also we have $\sum_{k=m+1}^\infty m/2^k = m/2^m$. This
  leads to $|g'(x) - \varphi'(x)| \leq
  c(m+2)/2^{m}$
\end{proof}

%TODO: blabla
For an analytic function $\varphi$, this allows us to prove that if a polynomial $g$ is a good enough
approximation of $\varphi$, each root of $\varphi$ in the
unit disk is near a root of $g$.

\begin{Lemma}
  \label{lem:approximation}
  Let $\varphi(x) = \sum_{k=0}^{\infty} \varphi_k x^k$ be
  an analytic series with radius of convergence greater than $2$. Let
  $g$ be a polynomial of degree $m$ and $c$ be a positive real number
  such that: $||\varphi - g||_1 \leq c/2^m$ and $|\varphi_k| < c/2^k$ for all $k>m$.
  %and $t$ be a positive integer such that
  %for all $k \geq t$ we have $|f_k|<c/2^k$.
  Let $\zeta$ be a root of $\varphi$ in the unit disk such that $\varphi'(\zeta)
  \neq 0$ and let $\kappa \geq 1/|\varphi'(\zeta)|$.
  % and $s = \max_{z \in D(0,1)}(|f''(z)|)$.

  %Let $g(x) = \sum_{k=0}^{m} g_kx^k$ be a polynomial of degree $m$ such that for all
  %$0 \leq k \leq m-1$ we have $|f_k - g_k| \leq c/2^m$.

  If $2^m/(m+2) \geq 2c\kappa$,
  then $g$ has a root in $D(\zeta,
  2c\kappa m/2^m)$.
  %TODO

  % Let $f(x) = \sum_{k=0}^{\infty} f_k x^k$ be
  % an analytic series with radius of convergence greater than $1$.
  % Assume that there exist $c>0$, $\kappa > 32$, $s > 1$ and an integer $m
  % > 2 \log_2(s\kappa^2)$ such that for all point $z$ in the
  % unit disk:
  % \begin{itemize}
  %   \item $|f(\zeta)|=0$ implies $|f'(z)| > 1/\kappa$,
  %   \item $|f''(z)| < cs$
  %   \item for all $k>m$ we have $|f_k| \leq c/2^k$.
  % \end{itemize}
  % Let $g(x) = \sum_{k=0}^{m-1} g_kx^k$ be a polynomial of degree $m-1$ such that for all
  % $0 \leq k \leq m-1$ we have $|f_k - g_k| \leq c/2^m$.
  % TODO: change to property for one root

  % Then, for each root $\zeta$ of $f$ in the unit disk, $f$ has no other
  % root in $D(\zeta, 1/(2s\kappa))$ and $g$ has a root in the disk
  % $D(\zeta, 1/(16s\kappa))$. Moreover, if $g$ has a root $\eta$ in the
  % unit disk, then $f$ has a root in the disk $D(\eta, 1/(16s\kappa))$.
\end{Lemma}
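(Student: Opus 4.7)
The plan is to apply Proposition~\ref{pro:crude} to the polynomial $g$ at the point $\zeta$, with $k=1$: since $g$ has degree $m$, the proposition will assert that the disk $D\bigl(\zeta,\, m\,|g(\zeta)/g'(\zeta)|\bigr)$ contains a root of $g$. It will therefore suffice to show that $|g(\zeta)/g'(\zeta)| \leq 2c\kappa/2^m$, as the factor $m$ from the crude bound then produces exactly the stated radius $2c\kappa m/2^m$.

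To bound $|g(\zeta)|$ from above, I would use that $\varphi(\zeta)=0$ and $|\zeta|\leq 1$, so that evaluating the power series of $g-\varphi$ at $\zeta$ gives
\[
|g(\zeta)| \;=\; |g(\zeta)-\varphi(\zeta)| \;\leq\; \|g-\varphi\|_1 \;\leq\; c/2^m.
\]
For the lower bound on $|g'(\zeta)|$ I would invoke Lemma~\ref{lem:derivative}, which applies since $\zeta$ lies in the unit disk and gives $|g'(\zeta)-\varphi'(\zeta)| \leq c(m+2)/2^m$. The hypothesis $2^m/(m+2) \geq 2c\kappa$ rewrites as $c(m+2)/2^m \leq 1/(2\kappa)$, and combined with $|\varphi'(\zeta)| \geq 1/\kappa$ (from the choice of $\kappa$) this yields
\[
|g'(\zeta)| \;\geq\; |\varphi'(\zeta)| - c(m+2)/2^m \;\geq\; 1/\kappa - 1/(2\kappa) \;=\; 1/(2\kappa).
\]

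Combining the two estimates gives $|g(\zeta)/g'(\zeta)| \leq 2c\kappa/2^m$, and in passing $g'(\zeta) \neq 0$, which is the non-degeneracy needed to invoke Proposition~\ref{pro:crude}. Substituting into the conclusion of that proposition places a root of $g$ inside $D\bigl(\zeta,\,m \cdot 2c\kappa/2^m\bigr) = D(\zeta,\,2c\kappa m/2^m)$, as required.

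I do not foresee a genuine obstacle here: the proof reduces to a short inequality chase once the derivative comparison provided by Lemma~\ref{lem:derivative} is in hand, and the seemingly unusual linear-in-$m$ factor in the final radius is simply the ``degree $\times |f/f'|$'' factor from the crude existence bound. In particular, no estimate on $g''$ is needed, because only existence (not unicity) of a root in the disk is claimed; a unicity statement would instead call for Proposition~\ref{pro:unicity} or Lemma~\ref{lem:distinct}, which require controlling the second derivative.
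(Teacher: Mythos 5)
Your proof is correct and follows essentially the same route as the paper: both apply Proposition~\ref{pro:crude} (with $k=1$) at $\zeta$, bound $|g(\zeta)| \leq \|g-\varphi\|_1 \leq c/2^m$ using $\varphi(\zeta)=0$, and combine Lemma~\ref{lem:derivative} with $|\varphi'(\zeta)| \geq 1/\kappa$ and the hypothesis $2^m/(m+2) \geq 2c\kappa$ to get $|g'(\zeta)| \geq 1/(2\kappa)$, yielding the stated radius $2c\kappa m/2^m$.
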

\begin{Remark}
  If $m \geq 10$, the inequality $2^m/(m+2) \geq 2c\kappa$ holds as soon
  as $m \geq 2 \log_2(c\kappa)$.
\end{Remark}
\begin{proof}
  % First, let $\eta$ be a root of $g$ in the unit disk.  Then $|f(\eta)|
  % = |f(\eta)-g(\eta)| \leq c\frac {m+1} {2^m} + \frac{c} {2^m} \leq
  % c\frac{m+2}{2^m}$ using the bounds on the difference of the
  % coefficients of $f$ and $g$. 
  % In particular, with the lower bound on $m$, we have $m \geq
  % \log_2(\kappa) + m/2$ and $m \geq 20$ since $s\kappa^2 \geq
  % 2^{10}$. This implies that $|f(\eta) \leq c(m+2)/2^m \leq c/(s\kappa^2) \cdot
  % (m+2)/2^{(m/2)} \leq c/(32s\kappa^2)$.  This implies that $|f'(\eta)|
  % \geq c/\kappa$. In turn, we have $\beta = |f(\eta)/f'(\eta)| \leq
  % 1/(32s\kappa)$, and $K = \max_{z \in U}(|f''(z)/f'(\eta)|) \leq
  % s\kappa$. Thus, $2\beta K \leq 1/16 \leq 1$. Using Kantorovich's theory
  % \cite[Theorem~88]{Dbook06}, this
  % ensures that $f$ has a root in $D(\eta, 2\beta)$ which implies that
  % $f$ has a root in the disk $D(\eta, 1/(16s\kappa))$. Moreover, using
  % Kantorovich's theory again \cite[Theorem~88]{Dbook06}, since $2K \leq
  % 2s\kappa$, this implies that $\zeta$ is the only
  % root of $f$ in the disk $D(\zeta, 1/(2s\kappa))$.

  Using Proposition~\ref{pro:crude}, if $g'(\zeta) \neq 0$, then $g$ has a root in the disk
  $D(\zeta, m g(\zeta)/g'(\zeta))$. Since $\zeta$ is in the unit disk,
  $|g(\zeta)| = |g(\zeta)-\varphi(\zeta)| \leq \|g-\varphi\|_1 \leq c/2^m$. For the
  derivative, we have $|g'(\zeta)| \geq |\varphi'(\zeta)| - |g'(\zeta) -
  \varphi'(\zeta)| \geq 1/\kappa - |g'(\zeta)-\varphi'(\zeta)|$. The difference
  between the derivative of $g$ and $\varphi$ can be bounded using
  Lemma~\ref{lem:derivative} by $|g'(\zeta) - \varphi'(\zeta)| \leq
  c(m+2)/2^{m}$.  Since $2^m/(m+2) \geq 2c\kappa$, this implies
  $|g'(\zeta)| \geq 1/(2\kappa)$, which allows us to conclude.

\end{proof}

Finally, to prove that Algorithm~\ref{alg:rootfinding} terminates, we
will need the following lemma that guarantees that the criterion
of Lemma~\ref{lem:distinct} will be satisfied for a small enough
approximation.

\begin{Lemma}
  \label{lem:termination}
  Let $\varphi(x) = \sum_{k=0}^{\infty} \varphi_k x^k$ be
  an analytic series with radius of convergence greater than $2$ and
  $\zeta$ be a root of $\varphi$ in the unit disk such that $\varphi'(\zeta) \neq
  0$. Let $\kappa = 1/|\varphi'(\zeta)|$ and $s$ be positive real
  %greater than $|f'(\zeta)|$ and
  greater than $|\varphi''(y)|$ for all $y$ in the disk $D(0,1)$. Then, for
  any positive real $\varepsilon \leq 1/[23(s\kappa^2+\kappa)]$ and all $x\in
  D(\zeta, \kappa \varepsilon)$:
  $$ q := 10\frac{s(|\varphi(x)|+\varepsilon)}{(|\varphi'(x)|-\varepsilon)^2} < 1.$$
  % $g$ be a polynomial of degree $m > 12$ and $c$ be a positive real number
  % such that: $||f - g||_1 \leq c/2^m$ and $|f_k| < c/2^k$ for all $k>m$.
  % %and $t$ be a positive integer such that
  % %for all $k \geq t$ we have $|f_k|<c/2^k$.
  % Let $\zeta$ be a root of $f$ in the unit disk such that $f'(\zeta)
  % \neq 0$ and let $\kappa = 1/|f'(\zeta)|$.

  %TODO: update disk radius to $\kappa \varepsilon$ or handle it in next lemma.
\end{Lemma}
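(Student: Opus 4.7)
The plan is to bound $|\varphi(x)|$ from above and $|\varphi'(x)|$ from below by a Taylor expansion of $\varphi$ around the root $\zeta$, and then substitute these bounds into the definition of $q$ to verify $q<1$ using the tight numerical hypothesis on $\varepsilon$.

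First, since $\varphi(\zeta)=0$, I would write the Taylor formulas with integral remainder along the straight segment from $\zeta$ to $x$:
\[
  \varphi(x) = \varphi'(\zeta)(x-\zeta) + \int_{\zeta}^{x} \varphi''(t)(x-t)\,dt, \qquad
  \varphi'(x) = \varphi'(\zeta) + \int_{\zeta}^{x} \varphi''(t)\,dt.
\]
Using $|\varphi'(\zeta)|=1/\kappa$, $|x-\zeta|\le\kappa\varepsilon$, and the bound $|\varphi''|\le s$ along the segment (which stays within a negligible neighborhood of $D(0,1)$ since $\kappa\varepsilon\le 1/23$), this yields
\[
  |\varphi(x)| \le \varepsilon + \tfrac{1}{2}\,s\kappa^{2}\varepsilon^{2}, \qquad
  |\varphi'(x)| \ge \tfrac{1}{\kappa} - s\kappa\varepsilon.
\]

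Next, I would exploit the fact that the hypothesis $\varepsilon\le 1/[23(s\kappa^{2}+\kappa)]$ is equivalent to $(s\kappa^{2}+\kappa)\varepsilon\le 1/23$. This gives the denominator lower bound
\[
  |\varphi'(x)| - \varepsilon \ge \tfrac{1}{\kappa}\bigl(1 - (s\kappa^{2}+\kappa)\varepsilon\bigr) \ge \tfrac{22}{23\kappa},
\]
so $(|\varphi'(x)|-\varepsilon)^{2} \ge (22/23)^{2}/\kappa^{2}$. Setting $\alpha := s\kappa^{2}\varepsilon \le 1/23$, the numerator of $q$ can be bounded as
\[
  10\kappa^{2}\,s\bigl(|\varphi(x)|+\varepsilon\bigr) \le 10\kappa^{2}s\bigl(2\varepsilon + \tfrac{1}{2}s\kappa^{2}\varepsilon^{2}\bigr) = 20\alpha + 5\alpha^{2} \le \tfrac{20}{23} + \tfrac{5}{529} = \tfrac{465}{529}.
\]
Dividing by $(22/23)^{2} = 484/529$ produces $q \le 465/484 < 1$, which is the desired conclusion.

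The only delicate aspect is numerical: the constant $23$ in the hypothesis is chosen precisely so that the leading ratio $(20/23)/(22/23)^{2} = 460/484$ is strictly less than $1$, with just enough slack to absorb the quadratic correction $5\alpha^{2}$. Once the Taylor bounds are in place, verifying this constant is a mechanical computation and constitutes the only real balancing act of the proof; the possible mild excursion of the segment outside $D(0,1)$ is harmless because $\kappa\varepsilon$ is so small, and can be dispatched in a sentence.
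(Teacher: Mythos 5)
Your proof is correct and follows essentially the same route as the paper's: a second-order Taylor expansion of $\varphi$ at the root $\zeta$ to bound $|\varphi(x)|$ from above and $|\varphi'(x)|$ from below, then substitute into $q$ and use the hypothesis $(s\kappa^2+\kappa)\varepsilon\le 1/23$; your final bound $465/484$ is the same number the paper reaches (the paper writes it as $\frac{10}{23}\cdot\frac{93\cdot 23^2}{46\cdot 22^2}$, which reduces to $465/484$). Your side remark about the segment possibly leaving $D(0,1)$ is a fair observation about the lemma's phrasing, and it applies equally to the paper's own proof.
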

\begin{proof}
  Using Taylor expansion at $\zeta$, we have $|x-\zeta| \leq
  \kappa\varepsilon$, and thus $s(|\varphi(x)| + \varepsilon) \leq
  s(\kappa \varepsilon \varphi'(\zeta) + \frac 1 2 \kappa^2\varepsilon^2s + \varepsilon)$.
  Similarly, $|\varphi'(x)| - \varepsilon > |\varphi'(\zeta)| - \kappa\varepsilon s -
  \varepsilon$. Factoring out $\varepsilon$ in the numerator, and
  $|\varphi'(\zeta)|$ in the
  denominator, this leads to $q \leq 10 s\kappa^2\varepsilon
  \frac{2+\frac 1 2
\varepsilon\kappa^2s}{\left(1-\varepsilon(\kappa^2s+\kappa)\right)^2}$.
  Since $\varepsilon \leq 1/[23(s\kappa^2+\kappa)]$, the numerator is less
  than $93/46$ and the denominator is greater than $22^2/23^2$, such that $q
  \leq \frac{10}{23} \frac{93 \cdot 23^2}{46 \cdot 22^2} \leq 1$.%  25 s \kappa \varepsilon \leq 1$ using $\varepsilon \leq 1/(25s\kappa)$.

  % Letting $b =
  % \kappa^2
  % s+1$, this implies $q \leq \frac{10}{\kappa} b\varepsilon \frac{1+\frac 1 2
  % b\varepsilon}{1-b\varepsilon}$.

  % Letting $K = (s+1)\kappa$, this implies that $q \leq
  % 10 K \varepsilon \frac{1 + \varepsilon K}{1 - \varepsilon K}$.
  % Finally, if $\varepsilon \leq 1/[12 (s+1)\kappa] = 1 /(12K)$, then $q \leq \frac {10}{12}
  % \frac{13}{11} = \frac{130}{132} < 1$.
\end{proof}

% TODO: maybe note on link between separation bound and f''/f'

\subsection{Proof of Theorem~\ref{thm:rootfinding}}
\label{sec:proofrootfinding}

We can now prove the main theorem bounding the bit complexity of
Algorithm~\ref{alg:rootfinding}.
%TODO: add lemma on bound on $\|g_k\|_1 \leq \|f\|_1\max(1,
%(\gamma_k+\rho_k)^d) \leq \|f\|_1(1+m/(6ed))^d \leq \|f\|_1e^{m/(6e)} \leq \|f\|_12^{m/11}$ in the detailed section on data structure.
%\begin{proof}[Proof of Theorem~\ref{thm:rootfinding}]
  We split our proof in three part. First the correctness, then the
  termination and finally a bound on the complexity of
  Algorithm~\ref{alg:rootfinding}.
  
  %\emph{Correctness.}
  \paragraph{Correctness.}
  First, the correctness of Algorithm~\ref{alg:rootfinding} follows from
  Lemma~\ref{lem:distinct}. Indeed, using Lemma~\ref{lem:second} and
  Lemma~\ref{lem:derivative}, each disk added to $L$ satisfies the
  condition of Lemma~\ref{lem:distinct} and contains a unique root of
  $f$. Then, if the algorithm terminates, it returns a list of $d$
  pairwise distinct disks, containing a root of $f$ each, such that the
  result is correct.
  %TODO: expand first paragraph, in particular show that we have the
  %correct bound on $f''$.

  \paragraph{Termination.}
  For the termination of Algorithm~\ref{alg:rootfinding}, we fix $m$ and we will use
  Lemma~\ref{lem:approximation} and~\ref{lem:termination} to show that
  for $m$ sufficiently large, Algorithm~\ref{alg:rootfinding} terminates.
  First, using Lemma~\ref{lem:approximation} to bound the distance between a root of $\varphi := f(a(X))$ and the closest root of
  $\widetilde h$, we need a bound $\|\varphi - \widetilde h\|_1$
  and a bound on the condition number of
  $\varphi$. The first bound comes from
  %$\|\varphi_k -\widetilde
  %h_k\|_1 \leq \|\varphi_k - g_k\|_1 + \|g_k - h_k\|_1 + \|h_k -
  %\widetilde h_k\|_1 \leq 3\|f\|_1/2^m + \|f\|_1/2^m +
  %\|h_k\|_1/2^{1.1m}$.
  $\|\varphi -\widetilde h\|_1 \leq \|\varphi - h\|_1 + \|h -
  \widetilde h\|_1 \leq \|\varphi - h\|_1 + \|h\|_1/2^{1.1m}
  \leq \|\varphi - h\|_1(1+1/2^{1.1m}) + \|\varphi\|_1/2^{1.1m}
  $.
  Using Lemma~\ref{lem:second}, we have $\|\varphi\|_1 \leq \|f\|_1
  2^{m/11}$, such that $ \|\varphi -\widetilde h\|_1 \leq
  \|\varphi - h\|_1(1+1/2^{1.1m}) + \|f\|_1/2^{m}$. Moreover,
  $\|\varphi-h\|_1 \leq \|\varphi - g\|_1 + \|g - h\|_1 \leq
  3\|f\|_1/2^m + \|f\|_1/2^m$ and $1+1/2^{1.1m} \leq 5/4$ for $m\geq 2$.
  This leads to $\|\varphi -\widetilde h\|_1 \leq 6\|f\|_1/2^m$.
  For the bound on the condition number, since
  $a$ is of the form $a(X) = (\gamma + \rho X) e^{i\alpha}$,
  this implies that for all $x$ in the unit disk $|\varphi'(x)| = \rho
  |f'(x)| \geq
  \min(1,\frac m {2ed}) |f'(x)|$, such that $\kappa_1(\varphi) \leq
  \max(1, \frac{2ed}{m}) \kappa_1(f)$.
  Letting $\kappa = 2ed\kappa_1(f)$, we have that for each root $\zeta_j$ of
  $\varphi$, if $m > 2 \log_2(6\|f\|_1\kappa)$, Lemma~\ref{lem:approximation}
  implies that there exists a root $z_j$ of $\widetilde h$ in the disk
  $D(\zeta_j, \mu\kappa)$, where $\mu = 12\|f\|_1m/2^m$.

  %TODO: change the split of paragraph, explain a bit better the main
  %directions of the proof

  We will now use this property with Lemma~\ref{lem:termination} to show
  that the criterion computed on line~\ref{step:criterion} of
  Algorithm~\ref{alg:rootfinding} will eventually be satisfied.
  Using the notations of Algorithm~\ref{alg:rootfinding}, we show
  that the criterion $10\beta K \leq 1$ will be satisfied for all roots
  of $f$ for $m$ large enough.  Let $s = \|f\|_1d^22^{m/10}$.
  Using the bound on $|f-g|$ given by
  Definition~\ref{def:approximation} and the bound on 
  $|f'-g'|$ given by Lemma~\ref{lem:derivative}, we have $10\beta K \leq
  10\frac{s(|f(z_j)|+2\varepsilon)}{(|f'(z_j)|-2\varepsilon)^2}$.
  Moreover $z_j$ is in the disk $D(\zeta_j, \mu\kappa)$, with $\mu \geq
  6\|f\|_1(m+2)/2^m = 2\varepsilon$. From
  Lemma~\ref{lem:termination}, we can conclude that $10\beta K$ is
  smaller than $1$ for $\mu \leq 1/[23(s\kappa^2+\kappa)]$,
  that is for $12\|f\|_1m/2^m \leq
  1/[23(\|f\|_1m^22^{m/10}\kappa^2+\kappa)]$, which holds as soon as
  $12\|f\|_1m^3/2^{\frac 9{10}m} \leq 1/[23(\|f\|_1\kappa^2+\kappa)]$.
  Note that for $m\geq40$, $m/2^{\frac9{10} m}$ is smaller than
  $1/2^{m/2}$, such that $10\beta\kappa \leq 1$ for all $m >
  2\log_2(276(\|f\|_1^2d^2\kappa^2+\kappa))$ and the
  algorithm terminates after $O(\log(\|f\|_1\kappa_1(f)))$ iterations
  of the main loop.

  \paragraph{Complexity bound.}
  %TODO: reread and clarify the proof.

  First, at each iteration of the main \verb/while/ loop, computing the
  $m$-hyperbolic approximation costs $\widetilde O(dm)$ bit operations.
  Then for a fixed $m \leq 2d$, the approximate factorization 
  is called $O(d/m)$ times on polynomials of degree $m$, with $\widetilde O(m^2)$ bit operations for
  each call, using Proposition~\ref{pro:pan}. With Remark~\ref{rem:pan}, this bound holds for
  polynomial that have all their roots of modulus less than $e2^{m/d}$.
  %If $g = c_0 + \cdots + c_m X^m$, the Fujiwara bound (\cite{} Fujiwara) on the absolute value
  %of the roots is $2 \max_{1 \leq k \leq m} \sqrt[k]{\frac {c_{m-k}}{c_m}}$.
  By Lemma~\ref{lem:coefficients}, the coefficients of the polynomial
  $h$ satisfy the condition of Lemma~\ref{lem:compact} and we conclude
  that all its
  roots of $h$ are included in $D(0, e2^{m/d})$.
  % When $m > 2d$, the
  % approximate factorization is called a constant number of times and
  % Lemma~\ref{lem:compact} guarantees us that all the root of $h_k$ are
  % in the disk $D(0, e2^{m/d})$. With Remark~\ref{rem:pan}, the
  Thus the approximate factorization can be computed within $\widetilde O(dm)$.
  Thus, for all cases, the total cost for the approximate factorization
  in an iteration of the \verb/while/ loop is in $\widetilde O(dm)$ bit
  operations.  After that, for each $g$, we need to evaluate $K$ and
  $\beta$ up to a precision in $O(\log(\|f\|_1d) + m)$. This can be done
  using state-of-the-art fast approximate multipoint evaluation in
  $\widetilde O(m(m+\log(\|f\|_1d))$ for all the approximate roots of
  $\widetilde h$ using Proposition~\ref{pro:evaluation}. This amounts to a total of
  $\widetilde O(d(m+\log(\|f\|_1d))$ bit operations for the steps
  \ref{num:startfor} to \ref{num:endfor}. If $m \geq d$, then the
  factorization is called a constant number of times on polynomials of
  degree $d$, for a total cost in $\widetilde O(dm)$ bit operations, and
  all the multipoint evaluations will cost a total of $\widetilde
  O(d(m+\log(\|f\|_1\kappa))$ bit operations. Finally, removing
  duplicate solutions can be done in $\widetilde O(d)$ operations.
  Indeed, by construction, given a box of $B$ in a disk $D$ of the
  $N$-hyperbolic covering, the number of times that it
  appears in $L$ is bounded by the maximal number of disks of the $N$-hyperbolic
  covering that intersects $D$, that is $10$ (see
  Section~\ref{sec:range}). Thus, using
  Proposition~\ref{pro:rectangles}, this ensures that each query to
  detect a duplicate will cost at most $\widetilde O(\log d)$ operations,
  and removing all the duplicates will cost at most $\widetilde O(d)$
  bit operations.  In total, the costs is $\widetilde O(dm)$ per iteration of the
  \verb/while/ loop. Since $m$ is doubled at each iteration, the cost is
  the same as the cost of the last iteration, that is in $\widetilde
  O(d\log(\|f\|_1\kappa_1(f)))$ bit operations.

%Proof of the lemma for the roots of the approximated polynomial

%Proof of time complexity

\section{Lower bound on the condition number}
\label{sec:condition}

The lower bound on the condition number is a consequence of
Lemma~\ref{lem:approximation}, that gives a bound on the distance
between the roots of two polynomials with close enough coefficients,
applied on the polynomials from the adapted hyperbolic approximation.
Essentially, the idea is that for any $m$ greater or
equal to
function of $\kappa_1(f)$ and for any pair $(g, a)$ of an
$m$-hyperbolic approximation of $f$, the number of roots of $g$ is
greater than the number of roots of $f(a(X))$ in the disk
$a(D(0,1))$. In particular, this property allows us to deduce a lower bound on
$\kappa_1(f)$ depending on the number of roots of $f(a(X))$ in the
disk $a(D(0,1))$.

\begin{proof}[Proof of Theorem~\ref{thm:condition}]
  %Let $s = d^2\|f\|_1$, an upper bound on $f''(x)$ for all $x$ in the unit
  %disk, and let $c = $. Let $M$ be the smallest integer such that $2^M/M \geq 4sc\kappa^2$.
  %TODO.

  Let $D$ be a disk where the number $m$ of solutions of $f$ is maximal.
  Without restriction of generality, using
  Remark~\ref{rem:coefficients}, we can assume that the absolute value of the center of $D$ is less
  than $1$. Up to a change of variable $f(uX)$, where $u$ is a complex
  number of modulus $1$, we can assume that the center of $D$ is a
  positive real number.
  %Let $m$ be a positive integer such that $m \leq M$. In this case, letting
  Letting $N = \lceil\log_2(3ed/m)\rceil$, we can see easily that $D$ is
  included in a disk $D(\gamma, \rho)$ of the $N$-hyperbolic covering of the unit disk.
  Let $\varphi(X) = f(\gamma + \rho X)$ and let $g(X)$ be the polynomial
  of degree $m-1$ obtained by truncating $\varphi$ at order $m-1$. By
  Lemma~\ref{lem:coefficients}, the coefficients of $\varphi$ for
  degree $\ell\geq m$ are less than $\|f\|_1/2^\ell$. Using the
  bounds in Lemma~\ref{lem:second}, we have for all $x$ in the unit disk $\|\varphi''(x)\|_1
  \leq \|f\|_1m^22^{m/11}$. Moreover $\varphi' = \rho f'$, and $\rho
  \geq \frac m {4ed}$, such that
  $\kappa_1(\varphi) \leq \frac{4ed}{m}\kappa_1(f)$. Let $c =
  \|f\|_1$, $\kappa = \frac{4ed} m \kappa_1(f)$ and $s =
  \|f\|_1m^22^{m/11}$. We can now prove by contradiction that the number
  of roots of $g$ exceeds its degree if $2^m \geq \max(2c\kappa(m+2),
  4cs\kappa^2m)$. Indeed in this case, by
  Lemma~\ref{lem:approximation}, for each root $\zeta$ of $\varphi$, the
  polynomial $g$ has a root in $D(\zeta, 2c\kappa m/2^m)$. Moreover,
  using Remark~\ref{rem:distinct}, for
  all the roots $\zeta$ of $\varphi$ in the unit disk, the disks
  $D(\zeta, 2c\kappa m /2^m) \subset D(\zeta, 1/(2s\kappa))$ are pairwise distinct, such that $g$ has
  at least $m > m-1$ roots. Thus, to avoid this contradiction, we thus
  have $\max(2c\kappa(m+2), 4cs\kappa^2m) > 2^m$. That is, we must have
  either $2 \|f\|_1\kappa_1(f) (m+2) 4ed/m > 2^m$ or
  $\kappa_1(f)^2\|f\|_1^2(4ed)^22^{m/11}m > 2^m$. Equivalently, this amounts
  to $\|f\|_1\kappa_1(f) > \min\left(\frac 1 {8ed} \frac m{m+2} 2^{m}, \frac
  1 {4ed\sqrt{m}} 2^{5m/11}\right)$. For $m \geq 3$, this implies
  $\|f\|_1\kappa_1(f) \geq \frac 1 {4ed\sqrt{m}} 2^{5m/11}$. Finally,
  for each root $\zeta$ of $\varphi$ with $|\zeta| < 1$ we have
  $\|f\|_1\kappa_1(f)/|\zeta| > \|f\|_1\kappa_1(f)$, which allows us to
  conclude.
  %Letting $x = \|f\|_1\kappa_1(f)$, this becomes: $x^2(4ed)^22^{m/11} + 8x ed/m - 2^m/(m+2) >0$.
\end{proof}

\section*{Acknowledgement}
The author thanks the anonymous reviewers for their thoughtful comments
on this work and on a previous related work.

\appendix

\section{Source code}
\label{apx:source}

For the reproducibility of our experiments, and to demonstrate the
conciseness of our implementation, we report here the full source code of our root
solver
\verb/HCRoots/, along with an implementation of the multipoint evaluation algorithm
(\verb/hceval.py/), both available on a public gitlab server \cite{Msoftware21}.

%TODO: mention availability link git or zenodo.

%along with the code we used to measure the
%timings of Figure~\ref{fig:benchmark}. Note that we took care to
%configure \verb/MPSolve/ to use the same precision as \verb/HCRoots/.

\definecolor{mygreen}{rgb}{0,0.6,0}
\definecolor{mygray}{rgb}{0.5,0.5,0.5}
\definecolor{mymauve}{rgb}{0.58,0,0.82}
\lstset{ 
  backgroundcolor=\color{white},   % choose the background color; you must add \usepackage{color} or \usepackage{xcolor}; should come as last argument
  basicstyle=\ttfamily\footnotesize,        % the size of the fonts that are used for the code
  breakatwhitespace=false,         % sets if automatic breaks should only happen at whitespace
  breaklines=true,                 % sets automatic line breaking
  captionpos=b,                    % sets the caption-position to bottom
  commentstyle=\color{mygreen},    % comment style
  deletekeywords={...},            % if you want to delete keywords from the given language
  escapeinside={\%*}{*)},          % if you want to add LaTeX within your code
  extendedchars=true,              % lets you use non-ASCII characters; for 8-bits encodings only, does not work with UTF-8
  firstnumber=1,                   % start line enumeration with line 1000
  frame=single,	                   % adds a frame around the code
  keepspaces=true,                 % keeps spaces in text, useful for keeping indentation of code (possibly needs columns=flexible)
  keywordstyle=\color{blue},       % keyword style
  language=Octave,                 % the language of the code
  morekeywords={*,...},            % if you want to add more keywords to the set
  numbers=left,                    % where to put the line-numbers; possible values are (none, left, right)
  numbersep=5pt,                   % how far the line-numbers are from the code
  numberstyle=\tiny\color{mygray}, % the style that is used for the line-numbers
  rulecolor=\color{black},         % if not set, the frame-color may be changed on line-breaks within not-black text (e.g. comments (green here))
  showspaces=false,                % show spaces everywhere adding particular underscores; it overrides 'showstringspaces'
  showstringspaces=false,          % underline spaces within strings only
  showtabs=false,                  % show tabs within strings adding particular underscores
  stepnumber=2,                    % the step between two line-numbers. If it's 1, each line will be numbered
  stringstyle=\color{mymauve},     % string literal style
  tabsize=2,                       % sets default tabsize to 2 spaces
  title=\lstname                   % show the filename of files included with \lstinputlisting; also try caption instead of title
}

\lstinputlisting[language=Python]{hcroots.py}
\lstinputlisting[language=Python]{hceval.py}

%\newpage

%\lstinputlisting[language=Python]{benchmark.py}

%\bibliographystyle{ACM-Reference-Format}
\bibliographystyle{plain}
\bibliography{bibliography}

\end{document}